\definecolor{dgreen}{rgb}{0, .6, 0}
\newcommand{\cmark}{\ding{51}}%
\newcommand{\R}{{\mathbb{R}}}
\newcommand{\N}{{\mathbb{N}}}
\newcommand{\Rnn}{\R_{\ge 0}}
\newcommand{\Rpos}{\R_{> 0}}
\newcommand{\Z}{{\mathbb{Z}}}
\newcommand{\Znn}{\Z_{\ge 0}}
\newcommand{\Zp}{\Z_{> 0}}
\newcommand{\str}{w}
\newcommand{\currSN}{\mathit{currSN}}
\newcommand{\nextSN}{\mathit{nextSN}}
\newcommand{\currS}{\mathit{currS}}
\newcommand{\nextS}{\mathit{nextS}}
\newcommand{\pass}{\mathsf{pass}}
\newcommand{\mask}{\mathsf{mask}}
\newcommand{\Match}{\mathrm{Match}}
\newcommand{\myparagraphdense}[1]{\subsection{#1}}
\newcommand{\myparagraph}[1]{\subsection{#1}}
\newcommand{\stuck}[1]{s_{0}\centernot{\xrightarrow{#1}}\cdot}
\newcommand{\monaa}{${\mathtt{MONAA}}$\xspace}
\pgfplotsset{compat=1.12}
\newif\ifignore % when set to true, additional text appears containing
\newcommand{\auxproof}[1]{
\ifignore\mbox{}\newline
\textbf{BEGIN: AUX-PROOF} \dotfill\newline
{#1}\mbox{}\newline
\textbf{END: AUX-PROOF}\dotfill\newline
\fi}
\theoremstyle{plain}
\newtheorem{theorem}{Theorem}[section]
\newtheorem{lemma}[theorem]{Lemma}
\newtheorem{proposition}[theorem]{Proposition}
\theoremstyle{definition}
\newtheorem{definition}[theorem]{Definition}
\theoremstyle{definition}
\newtheorem{example}[theorem]{Example}
\newtheorem{myexample}[theorem]{Example}
\theoremstyle{plain}
\newtheorem{mynotation}[theorem]{Notation}
\newcommand{\Biggg}{\bBigg@{4}}
\newcommand{\Bigggg}{\bBigg@{5}}
\newcommand{\Biggggg}{\bBigg@{6}}
\newcommand*{\algrule}[1][\algorithmicindent]{%
   \makebox[#1][l]{%
       \hspace*{.2em}% <------------- This is where the rule starts from
       \vrule height .75\baselineskip depth .25\baselineskip
   }
}
\algrenewcommand\algorithmicindent{1.3em}%
\def\ALG@printindent{%
    \ifnum \theALG@nested>0% is there anything to print
    \ifx\ALG@text\ALG@x@notext% is this an end group without any text?
    % do nothing
    \else
    \unskip
    % draw a rule for each indent level
    \ALG@printindent@tempcnta=1
    \loop
    \algrule[\csname ALG@ind@\the\ALG@printindent@tempcnta\endcsname]%
    \advance \ALG@printindent@tempcnta 1
    \ifnum \ALG@printindent@tempcnta<\numexpr\theALG@nested+1\relax
    \repeat
    \fi
    \fi
}
\patchcmd{\ALG@doentity}{\noindent\hskip\ALG@tlm}{\ALG@printindent}{}{\errmessage{failed to patch}}
\patchcmd{\ALG@doentity}{\item[]\nointerlineskip}{}{}{} % no spurious vertical space
\newcommand\copyrighttext{%
  \footnotesize \textcopyright 2018 IEEE. 
  DOI: \href{https://doi.org/10.1109/TCAD.2018.2857358}{10.1109/TCAD.2018.2857358}}
\newcommand\copyrightnotice{%
\begin{tikzpicture}[remember picture,overlay]
\node[anchor=south,yshift=10pt] at (current page.south) {\fbox{\parbox{\dimexpr\textwidth-\fboxsep-\fboxrule\relax}{\copyrighttext}}};
\end{tikzpicture}%
}
\begin{document}
%
% paper title
% Titles are generally capitalized except for words such as a, an, and, as,
% at, but, by, for, in, nor, of, on, or, the, to and up, which are usually
% not capitalized unless they are the first or last word of the title.
% Linebreaks \\ can be used within to get better formatting as desired.
% Do not put math or special symbols in the title.
\title{Moore-Machine Filtering  for\\ Timed and Untimed Pattern Matching}
%
%
% author names and IEEE memberships
% note positions of commas and nonbreaking spaces ( ~ ) LaTeX will not break
% a structure at a ~ so this keeps an author's name from being broken across
% two lines.
% use \thanks{} to gain access to the first footnote area
% a separate \thanks must be used for each paragraph as LaTeX2e's \thanks
% was not built to handle multiple paragraphs
%
\author{Masaki~Waga and Ichiro~Hasuo% <-this % stops a space
\thanks{This article was presented in the International Conference on Embedded Software 2018 and appears as part of the ESWEEK-TCAD special issue.}% <-this % stops a space
\thanks{This work is supported by JST ERATO
HASUO Metamathematics for Systems Design Project (No.\ JPMJER1603), 
and JSPS Grants-in-Aid No.\ 15KT0012 \& 18J22498.}% <-this % stops a space
\thanks{M. Waga and I. Hasuo are with National Institute of Informatics,
Tokyo, Japan,
and SOKENDAI (the Graduate University for Advanced Studies), Kanagawa, Japan.}% <-this % stops a space
\thanks{M. Waga is a JSPS Research Fellow.}% <-this % stops a space
}

% note the % following the last \IEEEmembership and also \thanks - 
% these prevent an unwanted space from occurring between the last author name
% and the end of the author line. i.e., if you had this:
% 
% \author{....lastname \thanks{...} \thanks{...} }
%                     ^------------^------------^----Do not want these spaces!
%
% a space would be appended to the last name and could cause every name on that
% line to be shifted left slightly. This is one of those "LaTeX things". For
% instance, "\textbf{A} \textbf{B}" will typeset as "A B" not "AB". To get
% "AB" then you have to do: "\textbf{A}\textbf{B}"
% \thanks is no different in this regard, so shield the last } of each \thanks
% that ends a line with a % and do not let a space in before the next \thanks.
% Spaces after \IEEEmembership other than the last one are OK (and needed) as
% you are supposed to have spaces between the names. For what it is worth,
% this is a minor point as most people would not even notice if the said evil
% space somehow managed to creep in.

% The paper headers
\markboth{IEEE transactions on Computer-Aided Design of Integrated Circuits and Systems}%
{Waga \MakeLowercase{\textit{et al.}}: Moore-Machine Filtering for Timed and Untimed Pattern Matching}
% The only time the second header will appear is for the odd numbered pages
% after the title page when using the twoside option.
% 
% *** Note that you probably will NOT want to include the author's ***
% *** name in the headers of peer review papers.                   ***
% You can use \ifCLASSOPTIONpeerreview for conditional compilation here if
% you desire.

% If you want to put a publisher's ID mark on the page you can do it like
% this:
%\IEEEpubid{0000--0000/00\$00.00~\copyright~2015 IEEE}
% Remember, if you use this you must call \IEEEpubidadjcol in the second
% column for its text to clear the IEEEpubid mark.

% use for special paper notices
%\IEEEspecialpapernotice{(Invited Paper)}

% make the title area
\maketitle
\copyrightnotice

\begin{abstract}
\emph{Monitoring} is an important body of techniques in runtime verification of real-time, embedded and cyber-physical systems. Mathematically,  the monitoring problem can be formalized as a  pattern matching problem against a pattern automaton. Motivated by the needs in embedded applications---especially the limited channel capacity between a sensor unit and a  processor that monitors---we pursue the idea of \emph{filtering} as preprocessing for monitoring. Technically, for a given pattern automaton, we present a construction of a \emph{Moore machine} that works as a filter. The construction is automata-theoretic, and we find the use of Moore machines particularly suited for embedded applications, not only because their sequential operation is relatively cheap but also because they are amenable to hardware acceleration by dedicated circuits. We prove soundness  (i.e.\ absence of lost matches), too. We work in two settings: in the \emph{untimed} one, a pattern is an NFA; in the \emph{timed} one, a pattern is a timed automaton. The extension of our untimed construction to the timed setting is technically involved, but our experiments demonstrate its practical benefits.
\end{abstract}

% Note that keywords are not normally used for peerreview papers.
\begin{IEEEkeywords}
Monitoring, filtering, timed pattern matching, Moore machine, timed automaton, real-time system
\end{IEEEkeywords}

% For peer review papers, you can put extra information on the cover
% page as needed:
% \ifCLASSOPTIONpeerreview
% \begin{center} \bfseries EDICS Category: 3-BBND \end{center}
% \fi
%
% For peerreview papers, this IEEEtran command inserts a page break and
% creates the second title. It will be ignored for other modes.
\IEEEpeerreviewmaketitle

% \section{Post-Submission To-Do}
% \begin{itemize}
%  \item Make a clearer statement about the online properties of the proposed algorithm. Follow the talk by Houssam Abbas at MT-CPS 2018. That is, say that the computation time and memory consumption \emph{per each timestamped event} in the input word $|w|$ does not rely on $|w|$. (Do they rely on $N$ and $|\mathcal{A}|$?)
% \end{itemize}

\section{Introduction}\label{sec:intro}
% \marginpar{We almost forgot to mention experiments in Intro!!}
\myparagraph{Monitoring and (Timed) Pattern Matching}
\IEEEPARstart{T}{he complexity} of \emph{cyber-physical systems (CPS)} has been rapidly growing,
% in recent years,  
due to increasingly  advanced digital control that realizes not only enhanced efficiency (e.g.\ in cars' fuel consumption) but also totally new functionalities such as automatic driving. Getting those systems right is, therefore, a problem that is as important, and as challenging, as ever. 

 Due to such complexity of CPS, combined with other reasons such as black-box components provided by other suppliers,  \emph{formal verification} in the conventional sense  is often hard to apply to real-world CPS. This has made researchers and practitioners turn to so-called \emph{lightweight formal methods}. \emph{Runtime verification} is a major branch therein, where  execution traces of a given system  is checked against a given specification~\cite{DBLP:conf/rv/2017cubes}. Various algorithms for \emph{monitoring} have been proposed for this purpose. 
% One example is \emph{optimization-based falsification}~\cite{FainekosP09,Donze10}: it is an adaptive testing method that translates the problem of finding an error input for a black-box system model to an optimization problem; such translation is possible thanks to quantitative  

Mathematically speaking, one common formalization of the monitoring  problem is  as the  problem of \emph{pattern matching}.\footnote{Another common formalization is as what we call the  \emph{pattern search} problem. Pattern search is easier than pattern matching, but is less informative. 
Their comparison is at the end of~\S{}\ref{sec:intro}.
} In case an execution trace is given by a word $w=a_{1}a_{2}\dotsc a_{n}$,
% (where each $a_{i}\in\Sigma$ is a character), 
the expected outcome  is the set 
%(here $w|_{[i,j]}=a_{i}a_{i+1}\dotsc a_{j}$)
\begin{equation}\label{eq:patternMatchingOutcomeUntimed}
\begin{array}{r}
 \Match(w,\mathsf{pat})
 \;:=\;
 \{(i,j)\mid w|_{[i,j]}
 %=a_{i}a_{i+1}\dotsc a_{j}
 \models \mathsf{pat}\}
  \\
  \qquad\text{(where $w|_{[i,j]}=a_{i}a_{i+1}\dotsc a_{j}$)}
\end{array}
\end{equation}
of pairs $(i,j)$ of indices, the restriction of $w$ to which satisfies the given pattern 
$\mathsf{pat}$. The pattern $\mathsf{pat}$ can be given by a string, a set of strings, a regular expression, an automaton, etc.

\begin{myexample}\label{ex:untimedPatMatch}
% For an example, 
Consider a word $w_{1}=\mathtt{abb\,bbb\,aab}$  and a pattern given by a regular expression $\mathcal{A}_{1}=\mathtt{aa}^{*}b$. There are three matches, and we have 
 $\Match(w_{1},\mathcal{A}_{1})=\{(1,2),(7,9),(8,9)\}$.
\end{myexample}

 More interesting in the CPS context is the \emph{timed} version of pattern matching. In one common formalization, an execution trace is given by a \emph{timed word}---a sequence of time-stamped characters such as $w_{2}=(\mathtt{a}, 0.1) (\mathtt{b}, 2.5) (\mathtt{a}, 3.5) (\mathtt{b}, 4.8)$. A pattern  $\mathsf{pat}$ is then specified by a \emph{timed automaton (TA)} $\mathcal{A}$ from~\cite{Alur1994}, and we compute the set
\begin{equation}\label{eq:patternMatchingOutcomeTimed}
 \Match(w,\mathcal{A})
 \;:=\;
  \bigl\{\,(t,t')\in \Rnn^{2}\;\big|\; t< t'\text{ and } w|_{(t,t')}\in L(\mathcal{A})\,\bigr\}
%=a_{i}a_{i+1}\dotsc a_{j}
\end{equation}
of intervals $(t,t')$, the restriction of $w$ to which is accepted by the TA $\mathcal{A}$. Unlike the untimed setting, a TA $\mathcal{A}$ allows one to express various real-time constraints, which leads to a much more refined analysis of execution traces of CPS. 

\begin{myexample}\label{ex:timedPatMatch}
Consider the  timed word $w_{2}=(\mathtt{a}, 0.1) (\mathtt{b}, 2.5) (\mathtt{a}, 3.5) (\mathtt{b}, 4.8)$, and the pattern ``$\mathtt{b}$ occurs within two seconds after $\mathtt{a}$ does'' (a TA for a pattern that is essentially the same  is in Fig.~\ref{fig:case2_pattern}). Any match contains the second succession of $\mathtt{a}$ and $\mathtt{b}$---note that the first succession is too far apart. One such match is given by $w_{2}|_{(3,5)}$; but there are uncountably many such matches. The match set can be expressed symbolically as $\{(t,t')\mid 2.5\le t< 3.5, 4.8<t'\}$. 
\end{myexample}

Despite its obvious applications in various stages of CPS design and
deployment, the study of  timed pattern matching  was started only
recently~\cite{DBLP:conf/formats/UlusFAM14,DBLP:conf/tacas/UlusFAM16,DBLP:conf/cav/Ulus17,DBLP:conf/formats/WagaHS17,DBLP:conf/formats/BakhirkinFMU17,DBLP:conf/formats/AsarinMNU17}. Consequently,
the use of timed pattern matching is quite limited in the current industry practice: for example, the existence of algorithms that can match against temporal specifications is not  commonly known.

\begin{figure*}[tbp]
\begin{minipage}{.78\textwidth}
  \centering
 \includegraphics[width=.9\linewidth]{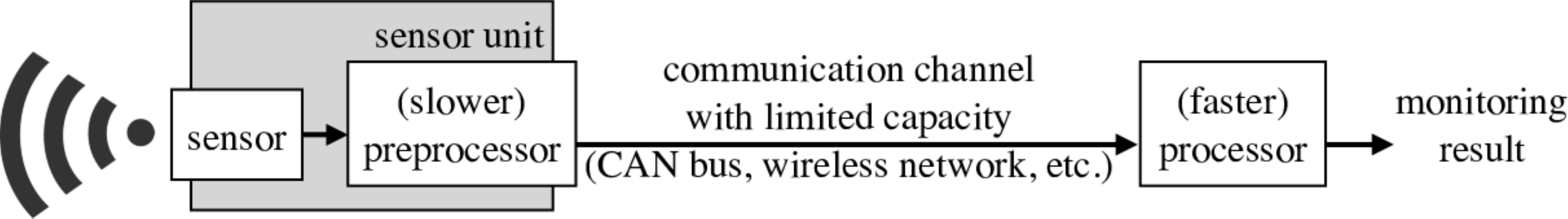}
 \caption{Common hardware architecture for embedded monitoring}
 \label{fig:hardwareArch}
 \scalebox{.8}{
 \parbox{\linewidth}{
 \begin{displaymath}
 \xymatrix@1@C+5em{
 {}
 \ar[r]^-{\text{exec.\ trace}}_-{\txt{(a word, or\\ a timed word)}}
 &
 *+[F]{\text{filter $\mathcal{M}_{N,\mathcal{A}}$}}
  \ar[r]^-{\txt{filtered word}}
 &
  *+[F]{\txt{pattern matching
 \\ against pattern $\mathcal{A}$}}
 \ar[r]^-{\txt{matching\\ result}}_-{\txt{(intervals)}}
 &
 {}
 }
 \end{displaymath}}}
 \caption{Proposed monitoring workflow with filtering}
 \label{fig:workflow}
\end{minipage}
%-------------------------------------------
\begin{minipage}{.2\textwidth}
% \begin{minipage}{0.28\textwidth}
% \includegraphics[width=\textwidth]{potentiallyOnline.pdf}
\centering
\scalebox{0.6}{\begin{tikzpicture}
%\node at (2.5,3.3) {\textbf{FJS}};
 % area
 \draw (0,3) rectangle (5,2.5);
 \node at (2.5,2.75) {$w$};

% i = 1
 \foreach \x / \y in {1.0/2.2,1.5/1.9
%,2.0/1.6
}
 {
 \draw [thick,|-|](0,\y)--(\x,\y);
 \node at (0.5,\y) {$\blacktriangleright$};
 }
 \node at (1.0,1.6) {$\vdots$};
 \draw [thick,|-|](0,1.2)--(5,1.2);
 \node at (0.5,1.2) {$\blacktriangleright$};
% \node at (-0.5,1.5) {$\Bigg[$};
% \node at (-1.0,1.5) {$n = 1$};
% \node at (5.3,1.2) {$\Biggr]$};
% \node at (5.8,1.2) {unnec.};
% i = 2
 \foreach \x / \y in {1.5/0.7,2.0/0.4
%,2.5/0.1
}
 {
 \draw [thick,|-|](0.5,\y)--(\x,\y);
 \node at (1.0,\y) {$\blacktriangleright$};
 }
 \node at (1.25,0.1) {$\vdots$};
 \draw [thick,|-|](0.5,-0.3)--(5,-0.3);
 \node at (1.0,-0.3) {$\blacktriangleright$};
 % \node at (-0.5,-0.3) {$\Biggg[$};
 % \node at (-1.0,-0.3) {$n = 2$};
% \node at (5.3,-0.6) {$\Biggr]$};
% \node at (5.8,-0.6) {unnec.};
% i = 3
% \foreach \x / \y in {2.0/-1.4,2.5/-1.7,3.0/-2.0}
% {
% \draw [thick,|-|](1.0,\y)--(\x,\y);
% }
% \node at (2.0,-2.3) {$\vdots$};
% \draw [thick,|-|](1.0,-2.7)--(5.0,-2.7);
% \node at (-0.5,-2.1) {$\Biggg[$};
% \node at (-1.0,-2.1) {$n = 3$};
% \node at (5.3,-2.4) {$\Biggr]$};
% \node at (5.8,-2.4) {unnec.};

% otherwise
 \node at (2.5,-0.7) {${\Large \vdots}$};

% i = n
 \foreach \x / \y in {3.5/-1.3,4.0/-1.6
%,4.5/-2.5
}
 {
 \draw [thick,|-|](2.5,\y)--(\x,\y);
 \node at (3.0,\y) {$\blacktriangleright$};
 }
 \node at (3.5,-1.9) {$\vdots$};
 \draw [thick,|-|](2.5,-2.3)--(5.0,-2.3);
 \node at (3.0,-2.3) {$\blacktriangleright$};
 % \node at (1.5,-2.5) {$\Biggg[$};
 % \node at (0,-2.5) {$n = |w| - m + 2$};
% \node at (5.3,-2.9) {$\Biggr]$};
% \node at (5.8,-2.9) {unnec.};
\end{tikzpicture}}
%\end{minipage}
\caption{Back and forth in pattern matching}
\label{fig:backAndForth}
\end{minipage}
\end{figure*}

% \begin{figure}[tbp]
% \centering
% \scalebox{0.75}{
% \parbox{1.25\linewidth}{
%  \begin{displaymath}
%  \xymatrix@1@C+5em{
%  {}
%  \ar[r]^-{\text{exec.\ trace}}_-{\txt{(a word, or\\ a timed word)}}
%  &
%  *+[F]{\text{filter $\mathcal{M}_{N,\mathcal{A}}$}}
%   \ar[r]^-{\txt{filtered word}}
%  &
%   *+[F]{\txt{pattern matching
% \\ against pattern $\mathcal{A}$}}
%  \ar[r]^-{\txt{matching\\ result}}_-{\txt{(intervals)}}
%  &
%  {}
%  }
% \end{displaymath}}}
% \caption{Proposed monitoring workflow with filtering}
% \label{fig:workflow}
% \end{figure}

\myparagraph{Remote Monitoring in Embedded Applications}
In this paper, we propose \emph{filtering} for (timed and untimed) pattern matching. It is preprocessing that is applied to an input word.

Our motivation  comes from \emph{embedded} applications. In embedded systems---an important aspect of CPS---it is common that a sensor unit and a processor (that conducts the computational task of monitoring) are placed  physically apart. Moreover,  the communication channel between them often has a limited capacity. See Fig.~\ref{fig:hardwareArch}. 

An example of such a situation is in a modern automobile, in which a sensor unit in the engine gathers data and sends them to a processor that is far apart (to avoid the engine's heat and vibration, for example). They are interconnected by a \emph{controller area network (CAN)}, which is subject to severe performance limitation due to cost reduction. Another example is in an \emph{IoT device} like a connected electronic appliance and a connected car. It continually sends its own status to a server, that is set up in the computer cloud and monitors the device. The wireless communication channel between them is  limited e.g.\ by the battery capacity of the device.

\myparagraph{Filtering for (Timed) Pattern Matching}
A natural idea in such  remote monitoring situations is to try to reduce the amount of data that is sent from the sensor unit to the processor, in such a way that does not affect the result of monitoring. Many sensor units come with processors within, and we can use them for such preprocessing. 
We assume that the preprocessor (within the sensor unit, Fig.~\ref{fig:hardwareArch}) is much slower than the processor that conducts actual monitoring; therefore the preprocessing must be computationally cheap.

This leads to our proposed workflow shown in Fig.~\ref{fig:workflow}. There we  apply computationally cheap \emph{filtering} to the input word, in order to reduce the load on the communication channel to the processor, and to reduce the load on the processor as well.

\myparagraph{Moore Machines as Filters} 
In this paper we focus on two settings of monitoring: 
%\begin{itemize}
 %\item
(1) in
  the \emph{untimed} setting, an execution trace is a word $w\in\Sigma^{*}$ and a pattern is a nondeterministic finite automaton (NFA) $\mathcal{A}$ over $\Sigma$; and
% \item
(2) in
 the \emph{timed} setting,  an execution trace is a timed word and a pattern is a timed automaton (TA) $\mathcal{A}$. 
%\end{itemize}
Our technical contribution is the construction of a filter $\mathcal{M}_{\mathcal{A},N}$ realized as a \emph{Moore machine}, based on a pattern automaton $\mathcal{A}$ and a parameter $N\in\N$ called the \emph{buffer size}. Moore machines are a well-known model of state-based computation: it is an automaton with an additional state-dependent \emph{output} function. It operates in a nicely sequential and synchronous manner, reading one input character,  moving to a next state and outputting one character. This feature  is especially suited to
the \emph{logic synthesis} of digital circuits~\cite{Reese06}, opening up the way to hardware acceleration by FPGA or ASIC. 

Such sequential operation of Moore machines is in stark contrast with that of pattern matching. Since a pattern is given by an automaton $\mathcal{A}$ in our settings, the length of matches (i.e.\ $|w|_{[i,j]}|=j-i+1$ such that $w|_{[i,j]}\in L(\mathcal{A})$) is not fixed. Therefore we have to try matching windows of different size at different positions, going back and forth over the input word $w$  (see Fig.~\ref{fig:backAndForth}). This exhibits a qualitative difference between the (sequential) filtering task conducted by the (slower) preprocessor, and the (back-and-forth) pattern matching task conducted by the (faster) main processor. Our construction yields an  (untimed) Moore machine as a filter $\mathcal{M}_{\mathcal{A},N}$, even in the timed setting.

The output of our filter $\mathcal{M}_{\mathcal{A},N}$ is the same as input (timed) word $w$, except that some characters
% (that we are sure do not contribute to any match) 
are \emph{masked} by a fresh character $\bot$. For example: $w=\mathtt{abb\,bbb\,aab}$  is turned into $\mathtt{ab\bot\,\bot\bot\bot\, aa b}$ under the pattern $\mathcal{A}=\mathtt{aa}^{*}b$. By the binary representation of the length of successive $\bot$'s, the data size can be reduced exponentially. Moreover, in case we are only interested in the matched subwords $w|_{[i,j]}$ (but not in the indices $i,j$), we can further suppress successive $\bot$'s into one $\bot$. (Note that removing all $\bot$'s after the filtering stage can result in spurious matches in the pattern matching stage, see Fig.~\ref{fig:workflow}.)

 Our Moore-machine filter $\mathcal{M}_{\mathcal{A},N}$ is constructed based on a pattern automaton $\mathcal{A}$ and a positive integer $N$ for the \emph{buffer size}. The parameter $N$ allows a user to choose the balance between the computational cost of filtering (the greater $N$ is, the more states $\mathcal{M}_{\mathcal{A},N}$ has) and the size of the filtered word (the greater $N$ is, the more $\bot$, i.e.\ the smaller the filtered word is). This flexibility makes the algorithm suited for various hardware configurations. 

We have implemented our construction; we present our experiment results for the \emph{timed} setting (which is harder). Our examples come from the automotive domain. We observe that for realistic pattern TAs $\mathcal{A}$ and input timed words $w$, the filtered words can be 2--100 times shorter than the original word $w$. We also experimentally confirm that running a Moore machine  $\mathcal{M}_{\mathcal{A},N}$  is cheap. Furthermore, we observe that having a filter (like in Fig.~\ref{fig:workflow}) accelerates the  task of timed pattern matching itself, by 1.2--2 times. 

In the theoretical analysis of  our  construction, we prove \emph{soundness}: all the matches in the original input word are preserved by filtering. We show soundness
for both of the untimed and timed settings. 
We note, however, that  soundness is satisfied by the trivial (identity) filter; so soundness itself does not speak much about the benefit of filtering. Besides our experiments, 
we present some theoretical results about the filtering performance 
  in the untimed setting.  They include the following: completeness (in the sense that \emph{all} the unnecessary characters get masked) if $L(\mathcal{A})$ is finite (this is the setting of \emph{multiple string matching}~\cite{DBLP:journals/cacm/AhoC75}); 
% that our filter outperforms \emph{substring filtering} (a common technique used e.g.\ in~\cite{DBLP:journals/jea/SalmelaTK06,KandhanTP10}); 
and \emph{monotonicity} (bigger 
$N$  leads to better filtering results). These results also suggest performance advantages of our \emph{timed} construction since it shares the basic ideas with the untimed one.

Our construction of filters is automata-theoretic, with two principal steps of 1) equipping a buffer (of size $N$), and 2) determinization. For the second step in the timed setting, we employ \emph{one-clock determinization} of timed automata (TA)~\cite[\S{}5.3]{DBLP:journals/fmsd/KrichenT09} that overapproximates a given TA by a deterministic and one-clock TA.

\myparagraph{Contribution} Overall, our contribution is summarized as follows. 
 \begin{itemize}
  \item A construction of a filter $\mathcal{M}_{\mathcal{A},N}$ for
        \emph{untimed} pattern matching against $\mathcal{A}$. The
        filter is given by a Moore machine and thus operates in a
        simple, sequential and synchronous way. It is also amenable to
        hardware acceleration by logic circuits. The parameter $N$ gives
        flexibility to a user  in the trade-off between computational cost and the effect of filtering. 
  \item A construction of a filter $\mathcal{M}_{\mathcal{A},N}$ for \emph{timed} pattern matching. Given a timed automaton $\mathcal{A}$ as a pattern and the buffer size, we construct an (untimed) Moore machine $\mathcal{M}_{\mathcal{A},N}$. The construction extends the untimed version, and is more involved, employing zone-based abstraction of the pattern timed automaton. Given also its practical relevance, we consider the timed construction to be the main contribution of the paper. 

  \item Soundness (preservation of all matches) is proved in the untimed and timed settings. 
  \item In the timed setting we also prove some theoretical results about the performance of our filters.
  \item Implementation of the timed construction, and experiments that demonstrate the benefit of our filtering construction.
\end{itemize}

\myparagraph{Pattern Matching vs.\ Pattern Search} 
Another mathematical formulation of monitoring---that is alternative to our choice of pattern matching---is what we call the \emph{pattern search} problem. It asks if the match set (see~(\ref{eq:patternMatchingOutcomeUntimed}--\ref{eq:patternMatchingOutcomeTimed}))
% ---$\Match(w,\mathsf{pat})$ in~(\ref{eq:patternMatchingOutcomeUntimed}) for untimed, $\Match(w,\mathcal{A})$ in~(\ref{eq:patternMatchingOutcomeTimed}) for timed---
is empty or not.  Pattern search is appealing since it is easily reduced to the \emph{membership problem}. Roughly speaking, given a pattern automaton $\mathcal{A}$, one first adds a self-loop to the initial state so that prefixes of an input word can be disregarded, and then monitors if any accepting state becomes active.  Pattern search has been extensively studied in the monitoring context: see e.g.~\cite{DBLP:conf/uss/MeinersPNTL10,DBLP:journals/tosem/BauerLS11}.

 Pattern \emph{matching}  is more expensive than pattern search, since it requires remembering  indices (Fig.~\ref{fig:backAndForth}). It is nevertheless highly relevant to real-world monitoring applications---especially to \emph{remote} monitoring such as in Example~\ref{ex:introRemoteMonitoring} below. Note that remote monitoring is often only \emph{semi}-online:  a log can arrive at a monitor in sizable chunks, in an intermittent manner. 
Then it is imperative to be able to  single out which part of the received log is issuing an alert. 

The relevance of pattern matching to monitoring applications is widely acknowledged in the community, as is witnessed by the recent proliferation of  literature~\cite{DBLP:conf/formats/WagaAH16,DBLP:conf/formats/WagaHS17,DBLP:conf/hybrid/UlusM18,DBLP:conf/cav/Ulus17,DBLP:conf/tacas/UlusFAM16,DBLP:conf/cav/FerrereMNU15,DBLP:conf/formats/UlusFAM14,KandhanTP10,DBLP:conf/sigmod/MajumderRV08,DBLP:conf/ancs/YuCDLK06}.
 % demonstrates that the relevance of pattern matching to monitoring applications is widely acknowledged.

\begin{myexample}[semi-online remote monitoring]\label{ex:introRemoteMonitoring}
% \begin{figure}[tbp]
%  \begin{minipage}{0.23\textwidth}
% \centering
%   \scalebox{0.245}{
%  %\input{./figs/allPosition}
%   \includegraphics{./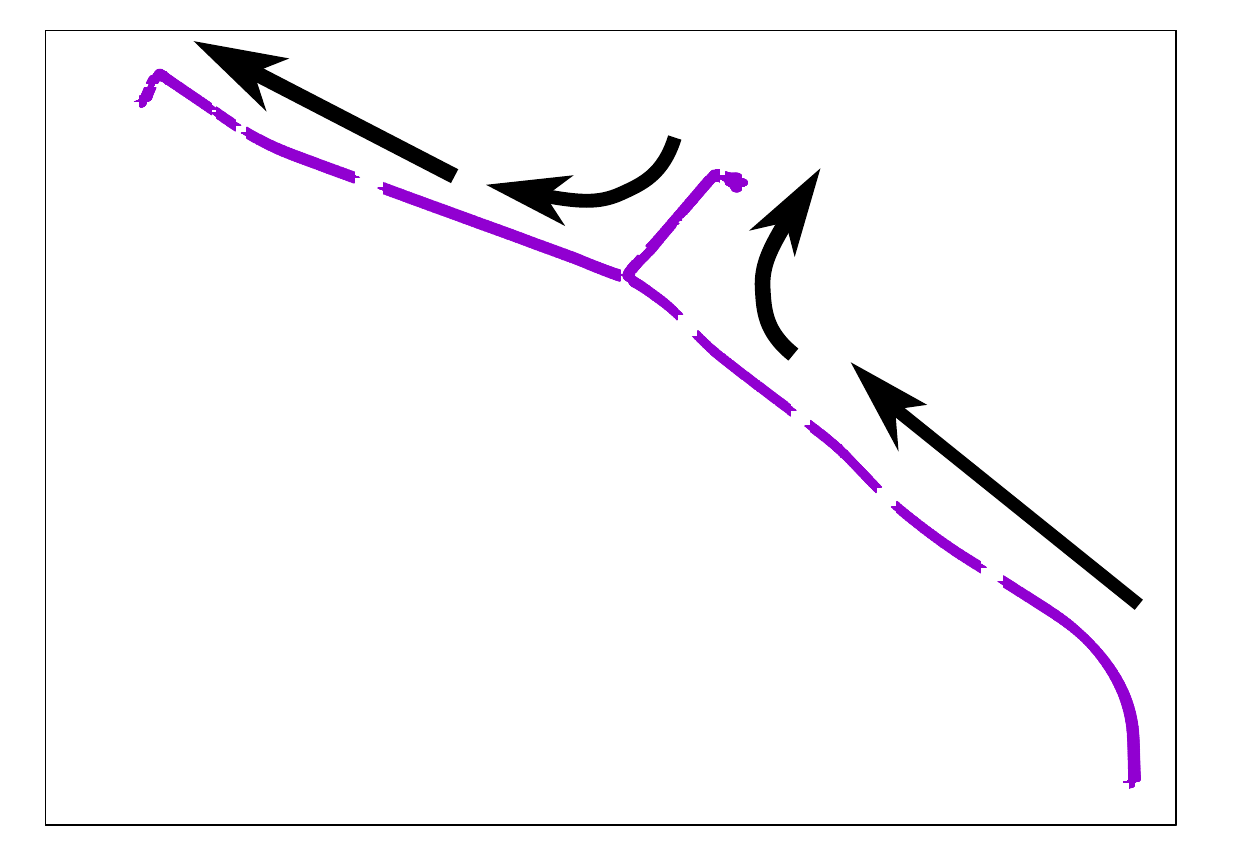}
%  }
%  \caption{Position plot}
%  \label{fig:vehiclePos}
%  \end{minipage}
%  \begin{minipage}{0.23\textwidth}\centering
%  \scalebox{0.245}{
%  %\input{./figs/highSpeedPedal}
%   \includegraphics{./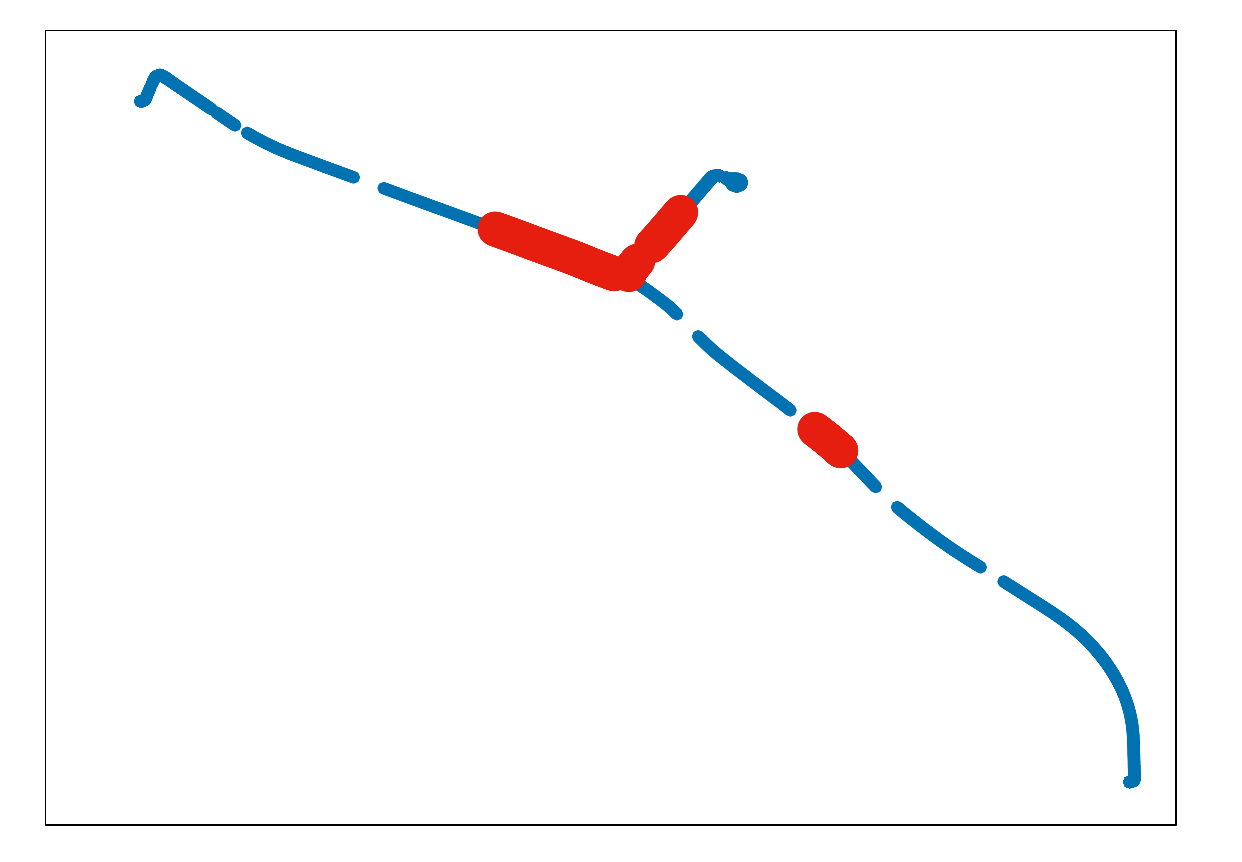}
%  }
%  \caption{ Matching  (green)}
%  \label{fig:highPedal}
%  \end{minipage}
% \end{figure}
 As a concrete example of remote monitoring (Fig.~\ref{fig:hardwareArch}), let us imagine a semi-connected vehicle. It keeps driving logs in its memory, and sends them to a center over the Internet once it stops within the range of a known wireless network. Analysis of the logs is done in the center.
A driving log is a timed word with information on the vehicle's position, velocity and throttle. One such timed word $w$, taken from ROSBAG
STORE (\url{rosbag.tier4.jp}), looks like the left below after we plot the vehicle's position. (The plot is discontinuous: the absence of data can be attributed e.g.\ to loss of the GPS signals.)

 % \begin{wrapfigure}{r}{0pt}
%   \scalebox{0.2}{
%  %\input{./figs/allPosition}
%   \includegraphics{./figs/allPositionStandAlone.pdf}
%  }
%  \\
%  % \caption{Position plot}
%  % \label{fig:vehiclePos}
%  % \end{minipage}
%  % \begin{minipage}{0.23\textwidth}\centering
%  \scalebox{0.2}{
%  %\input{./figs/highSpeedPedal}
%   \includegraphics{./figs/highPedalStandAlone.pdf}
%  }
% \end{wrapfigure}

\begin{figure}[h]
 \begin{minipage}{0.23\textwidth}
 \centering
 \scalebox{0.27}{
  \includegraphics{./figs/allPositionStandAlone.pdf}
 }
 % \caption{Position plot}
 % \label{fig:vehiclePos}
 \end{minipage}
 \begin{minipage}{0.23\textwidth}\centering
 \centering
 \scalebox{0.27}{
  \includegraphics{./figs/highPedalStandAlone.pdf}
 }
 % \caption{ Matching  (green)}
 % \label{fig:highPedal}
\end{minipage}
\end{figure}

Let us say that we are interested in those road segments where the throttle
 is greater than a certain threshold for  ten seconds or longer. We run
 timed pattern matching on $w$ with a suitable pattern automaton
 $\mathcal{A}$. By mapping the identified time intervals to the position
 plot, we identify the road segments of our interests (right above, red, generated by the tool \monaa~\cite{MONAACode}). 

% \begin{displaymath}
%  %  \begin{minipage}{0.23\textwidth}
%  % \centering
%   \scalebox{0.2}{
%  %\input{./figs/allPosition}
%   \includegraphics{./figs/allPositionStandAlone.pdf}
%  }
%  % \caption{Position plot}
%  % \label{fig:vehiclePos}
%  % \end{minipage}
%  % \begin{minipage}{0.23\textwidth}\centering
%  \scalebox{0.2}{
%  %\input{./figs/highSpeedPedal}
%   \includegraphics{./figs/highPedalStandAlone.pdf}
%  }
%  % \caption{ Matching  (green)}
%  % \label{fig:highPedal}
%  % \end{minipage}
% \end{displaymath}
\end{myexample}

% We also note that there is a recent proliferation of the literature that studies (timed and untimed) pattern matching for monitoring~\cite{DBLP:conf/formats/WagaAH16,DBLP:conf/formats/WagaHS17,DBLP:conf/hybrid/UlusM18,DBLP:conf/cav/Ulus17,DBLP:conf/tacas/UlusFAM16,DBLP:conf/cav/FerrereMNU15,DBLP:conf/formats/UlusFAM14,KandhanTP10,DBLP:conf/sigmod/MajumderRV08,DBLP:conf/ancs/YuCDLK06}. 

% Solving pattern searching does not provide concrete intervals where matches occur, unlike pattern matching, making the problem easier. Nevertheless,  pattern matching is energetically pursued, e.g.\ in~\cite{DBLP:conf/formats/WagaAH16,DBLP:conf/formats/WagaHS17,DBLP:conf/hybrid/UlusM18,DBLP:conf/cav/Ulus17,DBLP:conf/tacas/UlusFAM16,DBLP:conf/cav/FerrereMNU15,DBLP:conf/formats/UlusFAM14,KandhanTP10,DBLP:conf/sigmod/MajumderRV08,DBLP:conf/ancs/YuCDLK06}

\myparagraph{Organization} 
%The paper is organized as follows. 
We fix  notations in~\S{}\ref{sec:prelim}. We present a construction of filter Moore machines for \emph{untimed} pattern matching, in~\S{}\ref{sec:untimed}. We also prove some properties such as soundness. The same idea is used in~\S{}\ref{sec:timed} for a more complicated problem of filtering for \emph{timed} pattern matching. Here we prove soundness. Implementation and experiment results for the timed case are presented in~\S{}\ref{sec:experiments}. We discuss related work in~\S{}\ref{sec:related}.
Most of the proofs are deferred to the appendix.% of~\cite{AnonymousExtendedVersion}. 

\section{Preliminaries}\label{sec:prelim}
%Here we review some basic notions used throughout the paper. 
%We also fix notations.

The set $\Sigma^{*}=\bigcup_{n\in\N}\Sigma^{n}$ is that of \emph{words} over $\Sigma$. The length $n$ of a word $w=a_{1}a_{2}\dotsc a_{n}$ (where $a_{i}\in \Sigma$) is denoted by $|w|$. 

For an NFA $\mathcal{A} = (\Sigma, S, s_0, S_F, E)$ and a string 
$w\in\Sigma^*$ over the common alphabet $\Sigma$, a \emph{run}
$\overline{s}$ of $\mathcal{A}$ over $w$ is a sequence 
$\overline{s} = s_0,s_1,\dots,s_{|w|}$ such that for each $i \in[1,|w|]$
we have $(s_{i-1},w_i,s_{i})\in E$.
A run $\overline{s} = s_0,s_1,\dots,s_{|w|}$ is \emph{accepting} if we
have $s_{|w|} \in S_F$.

The powerset of a set $X$ is denoted by $\mathcal{P}(X)$. The disjoint union of $X,Y$ is $X\amalg Y$. 
For an alphabet $\Sigma$,  the set $\Sigma \amalg \{\bot\}$ augmented with a fresh symbol $\bot$ is denoted by $\Sigma_{\bot}$.

We will be using the set $\{1,2,\dotsc, N\}$ for the value domain of counters. This is denoted by $\mathbb{Z}/N\mathbb{Z}$, because we rely on its algebraic structure (such as addition modulo $N$).

%\begin{definition}[Moore machine]
 A \emph{Moore machine} 
%$\mathcal{M}$ 
is given by
%a tuple
 $\mathcal{M}=(\Sigma_{\mathrm{in}},\Sigma_{\mathrm{out}},Q,q_0,\Delta,\Lambda)$
 where $\Sigma_{\mathrm{in}}$ and $\Sigma_{\mathrm{out}}$ are  input
 and output alphabets, $Q$ is a finite set of states, $q_0 \in Q$ is an
 initial state, $\Delta: Q \times \Sigma_{\mathrm{in}}\to Q$ is a transition
 function, and
 $\Lambda: Q \to\Sigma_{\mathrm{out}}$ is an
 output function.
 For
% the Moore machine
 $\mathcal{M}$
%=(\Sigma_{\mathrm{in}},\Sigma_{\mathrm{out}},Q,q_0,\Delta,\Lambda)$
 and an input word
% $w = w_1\cdot w_2\cdot\dots\cdot w_n \in \Sigma_{\mathrm{in}}^*$,
 $w = a_1 a_2\dotsc a_n \in \Sigma_{\mathrm{in}}^*$ (where $a_{i}\in\Sigma_{\mathrm{in}}$),
 the \emph{run} $\overline{q}$ of $\mathcal{M}$ over $w$ 
 is the sequence $\overline{q} = q_0 q_1 \dotsc q_n \in Q^*$ satisfying
 $q_i=\Delta(q_{i-1},a_{i})$ for each $i \in [1,|w|]$.
 % For a Moore machine
 % $\mathcal{M}=(\Sigma_{\mathrm{in}},\Sigma_{\mathrm{out}},Q,q_0,\Delta,\Lambda)$
 % and an input string $w = w_1\cdot w_2\cdot\dots\cdot w_n \in \Sigma_{\mathrm{in}}^*$,
 In this case,
 the \emph{output word} $w'\in\Sigma_{\mathrm{out}}^*$ of
 $\mathcal{M}$ over 
%an input string
 $w$ is given by
 $w'=\Lambda(q_0)\Lambda(q_{1})\dotsc\Lambda(q_{n-1})\in\Sigma_{\mathrm{out}}^*$.
 % where $\overline{q}=q_0,q_1,\dots,q_n \in Q^*$ is the run of
 % $\mathcal{M}$ over $w$.
%\end{definition}
% In some literature the output word is defined by $\Lambda(q_0)\Lambda(q_{1})\dotsc\Lambda(q_{n-1})$. This difference is not essential. Note that the input word and the output word has the same length. 

% TODO: The standard definition seems to be
% $w'=\Lambda(q_0)\cdot\Lambda(q_{1})\cdot\dots\cdot\Lambda(q_{n-1})\in\Sigma_{\mathrm{out}}^*$. Probably
% we should change the definition. This change will not affect our results.

\section{Moore-Machine Filtering for Pattern Matching I: Untimed}
\label{sec:untimed}
% We describe our filter construction for  \emph{untimed} pattern matching.

\subsection{Problem Formulation}

\begin{definition}[(untimed) pattern matching]\label{def:patternMatching}
 \label{def:pattern_matching}
 Given an NFA $\mathcal{A}$ over an alphabet $\Sigma$ and a word
 $w=a_{1}a_{2}\dotsc a_{n}\in\Sigma^{*}$, the \emph{pattern
 matching} problem asks for the \emph{match set}
\begin{math}
 \Match(w,\mathcal{A})= \bigl\{\,(i,j)\in\N^{2}\;\big|\; w|_{[i,j]}\in L(\mathcal{A})
\,\bigr\}
\end{math}, where
 $w|_{[i,j]}=a_{i}a_{i+1}\dotsc a_{j}$.
%\begin{displaymath}
%\Match(w,\mathcal{A})\;=\; \bigl\{\,(i,j)\in\N^{2}\;\big|\; w|_{[i,j]}\in L(\mathcal{A})
%=a_{i}a_{i+1}\dotsc a_{j}
% \models \mathsf{pat}
% \,\bigr\}
% \enspace.
% % \;\text{(where $w|_{[i,j]}=a_{i}a_{i+1}\dotsc a_{j}$)}
% \end{displaymath} 
% Here the \emph{restriction} $w|_{[i,j]}$ of $w$ to the interval $[i,j]$ is defined  by $w|_{[i,j]}=a_{i}a_{i+1}\dotsc a_{j}$. 
% Here  $w|_{[i,j]}=a_{i}a_{i+1}\dotsc a_{j}$. 
% of pairs $(i,j)$ of indices, the restriction of $w$ to which satisfies the given pattern 
% $\mathsf{pat}$. The pattern $\mathsf{pat}$ can be given by a string, a set of strings,
%  indices of $w$ such that 
%  $\{(i,j)\mid w(i,j) \in L(\mathcal{A})\}$.
\end{definition}

Our goal is the workflow shown in Fig.~\ref{fig:workflow}. We fix the input/output type of the filter in the following general definition.
\begin{definition}[Moore-machine filter for (untimed) pattern matching]
 \label{def:filterUntimed}
 Let $\mathcal{A}$ be an NFA over an alphabet $\Sigma$, and let $N$ be a positive integer. A \emph{filter for $\mathcal{A}$ with buffer size $N$} is a Moore machine $\mathcal{M}=(\Sigma_{\mathrm{in}},\Sigma_{\mathrm{out}},Q,q_0,\Delta,\Lambda)$ that satisfies the following. 
\begin{itemize}
 \item $\Sigma_{\mathrm{in}} = \Sigma_{\mathrm{out}} =\Sigma_{\bot}$.
 \item Let $w=a_{1}\dotsc a_{n}\in \Sigma^{*}$ be an arbitrary word, and consider the word $w\bot^{N}$ obtained by padding $\bot$'s in the end. We require that the output word of $\mathcal{M}$ over this word $w\bot^{N}$  be of the form $\bot^{N}w'$, where $w'=b_{1}\dotsc b_{n}$, and  $b_{i}$ is either $\bot$ or $a_{i}$ for each $i\in [1,n]$. We say that the character $a_{i}$  at the position $i$ is \emph{passed} if $b_{i}=a_{i}$; otherwise (i.e.\ if $b_{i}=\bot$) we say $a_{i}$ is \emph{masked}. 
\end{itemize}
 A filter $\mathcal{M}$ is \emph{sound} if it preserves all matching intervals. That is,
% $b_{k}\neq \bot$ (i.e.\ $b_{k}=a_{k}$) 
$b_{k}=a_{k}$
for each $k\in[1,n]$ such that 
$\exists i,j.\, (k\in [i,j] \land [i,j]\in \Match(w,\mathcal{A}))$.
 % Given an NFA $\mathcal{A}$ over an alphabet $\Sigma$ and a word
 % $w\in\Sigma$ over the same alphabet $\Sigma$, the \emph{filtering}
 % problem of pattern matching asks for a word
 % $w'\in(\Sigma\amalg\{\bot\})^*$ such that for any
 % $k \in \{(i,j)\mid w(i,j) \in L(\mathcal{A})\}$, we have $w_k = w'_k$.
\end{definition}

\begin{wrapfigure}[9]{r}{0pt}
%\begin{figure}[tbp] 
% \centering
 \includegraphics[clip,width=.50\linewidth]{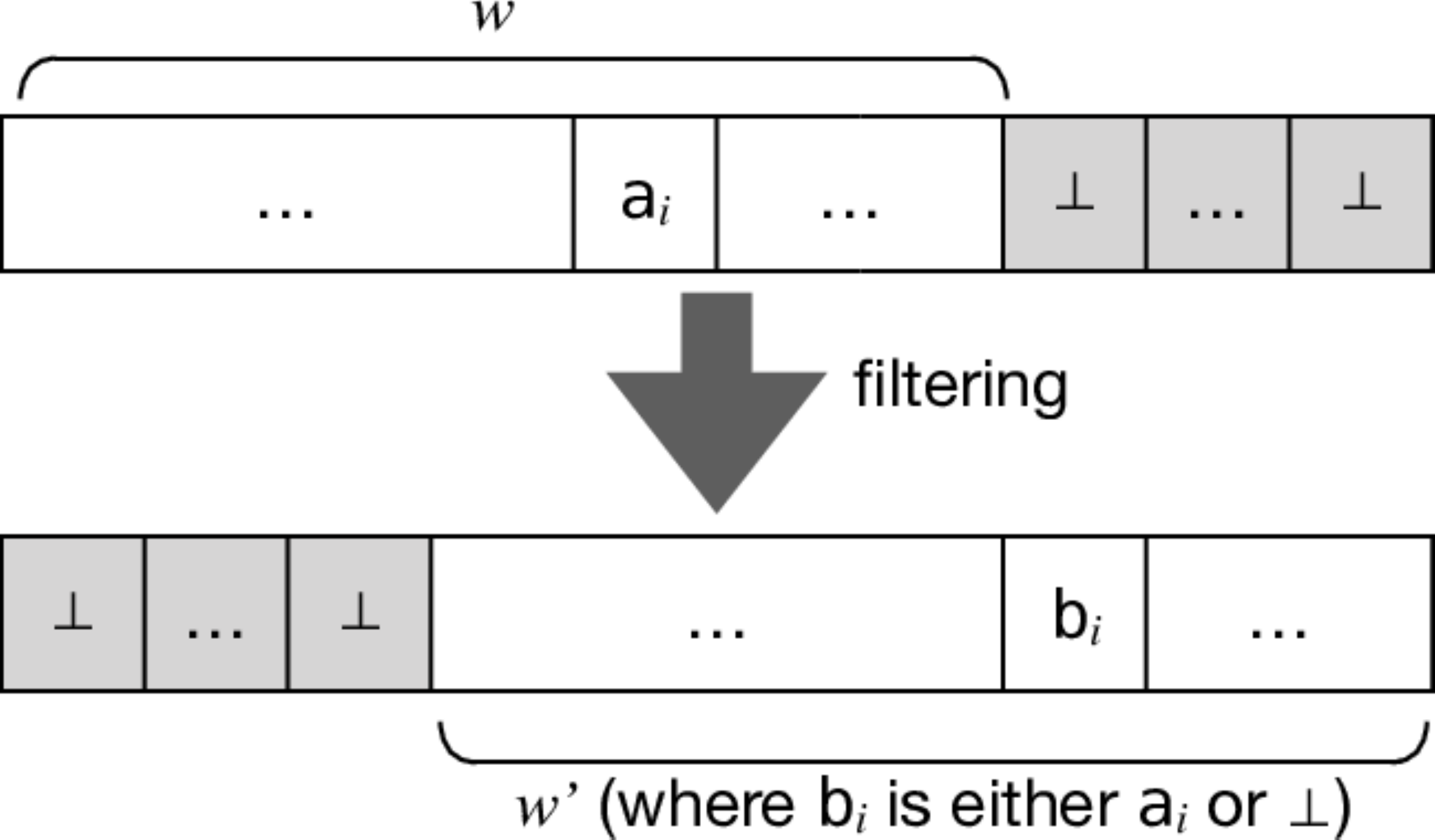}
  \caption{Padding in filtering}
  \label{fig:filteringPadding}
%\end{figure}
\end{wrapfigure}
Explanations are in order, especially about the buffer size $N$ and the padding by $\bot^{N}$ of the input and output words. The padding means filtering is done in the way depicted in Fig.~\ref{fig:filteringPadding}, with a delay of $N$ steps. This is because of how our Moore machine works (Fig.~\ref{fig:filteringSteps}): it  scans the input word from left to right; the machine stores $N$ characters in the FIFO buffer in it (encoded in its state space $Q$); and the characters are output once they are dequeued from the FIFO buffer. Hence there is a delay of $N$ steps. Initially, the buffer is filled with $\bot$ (Fig.~\ref{fig:filteringSteps}, Step 0); this accounts for the prefix $\bot^{N}$ of the output word $\bot^{N}w'$. The  padding $\bot^{N}$ at the end of the input word $w\bot^{N}$ is needed to dequeue the content of the buffer (from Step $n+1$ to $n+N$). 
In the course of the scanning in Fig.~\ref{fig:filteringSteps}, some characters in $w=a_{1}\dotsc a_{N}$ get masked, although such masking is not explicit in the figure.

 % A character $a_{i}$ gets passed to the output if one of the following conditions holds: 1) $a_{i}$ is seen to contribute to some match against the pattern $\mathcal{A}$; or 2) there is no such positive 

%  masked if the Moore machine $\mathcal{M}$ sees that $a_{i}$ never contribute to any match against the pattern $\mathcal{A}$. The Moore machine's capability is restricted by its buffer size $N$, and sometimes it fails to 

% one of the following conditions holds: 1) $a_{i}$ is seen not to contribute to any match against the pattern $\mathcal{A}$; or 2) 

\begin{figure*}[tbp]
% \begin{minipage}{.3\textwidth}
%  \centering
%  \includegraphics[width=\textwidth]{img/filteringPadding.pdf}
%  \caption{Padding in filtering}
%  \label{fig:filteringPadding}
% \end{minipage}
 \centering
 \includegraphics[width=.9\linewidth]{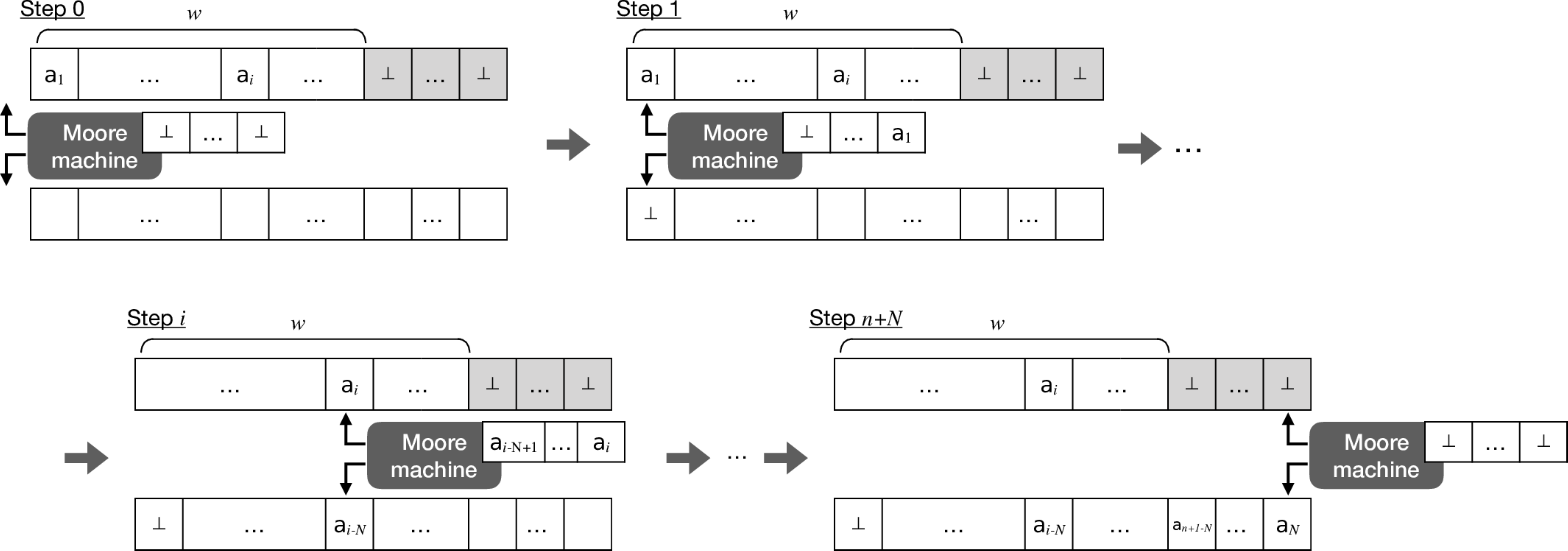}
 \caption{Filtering by a Moore machine. Sometimes a character $a_{i}$ is decided to be unnecessary and gets masked by $\bot$ (i.e.\ $b_{i}=\bot$), although such masking is not explicit in the figure. } 
 \label{fig:filteringSteps}
%------------------------------------------------------
\noindent
\begin{minipage}{.5\textwidth}
   \centering
 %  \scalebox{.6}{
 %  \begin{tikzpicture}[shorten >=1pt,node distance=2.0cm,on grid,auto]
 %  \clip(-1.8,-0.5) rectangle (4.5,1.2);
 %  \node[initial,state] (S0){$s_0$};
 %  \node[state] (S1)[right= of S0] {$s_1$};
 %  \node[accepting,state] (S2) [right =of S1]{$s_2$};

 %  \path[->]
 %  (S0) edge[above] node {$\tt a$} (S1)
 %  (S1) edge[above] node {$\tt b$} (S2)
 % %  (S0) edge[loop above] node {$\Sigma$} (S0)
 %  (S1) edge[loop above] node {$\tt a$} (S1);
 %  \end{tikzpicture} 
 % }
 % \caption{A pattern NFA $\mathcal{A}_{0}$ for $\mathtt{aa}^{*}\mathtt{b}$}
 % \label{fig:patternNFAExample}
 %-------------------------------------------------------------
  \scalebox{.6}{
  \begin{tikzpicture}[shorten >=1pt,node distance=2.0cm,on grid,auto]
  \clip(-1.8,-0.5) rectangle (4.5,1.2);
  \node[initial,state] (S0){$s_0$};
  \node[state] (S1)[right= of S0] {$s_1$};
  \node[accepting,state] (S2) [right =of S1]{$s_2$};

  \path[->]
  (S0) edge[above] node {$\tt a$} (S1)
  (S1) edge[above] node {$\tt b$} (S2)
 %  (S0) edge[loop above] node {$\Sigma$} (S0)
  (S1) edge[loop above] node {$\tt a$} (S1);
  \end{tikzpicture} 
 }
 \hspace{-7em}
  \scalebox{0.6}{
  \begin{tikzpicture}[shorten >=1pt,node distance=2.5cm,on grid,auto]
%  \clip(-1.8,-4.15) rectangle (9.5,1.5);
  \node[initial,state,label=above right:$q'_{0}$] (S0){$(s_0,0)$};
  \node[state,label=below right:$q'_{3}$] (S1) [below right=of S0] {$(s_0,0),(s_2,2)$};
  \node[state,label=above right:$q'_{1}$] (S2) [above right=of S1] {$(s_0,0),(s_1,1)$};
  \node[state,inner sep=0pt, minimum size=0pt,label=above left:$q'_{2}$] (S4) [right=3cm of S2] {\begin{tabular}{c}
                                        $(s_0,0),(s_1,1),$\\
                                        $(s_1,2)$
                                       \end{tabular}};
  \node[state,inner sep=0pt, minimum size=0pt,label=left:$q'_{4}$] (S3) [below=3cm of S4] {\begin{tabular}{c}
                                        $(s_0,0),(s_2,1),$\\
                                        $(s_2,2)$
                                       \end{tabular}};

  \path[->]
  (S0) edge[loop above] node {$\tt b$} (S0)
  (S4) edge[loop right] node {$\tt a$} (S4)
  (S0) edge[above] node {$\tt a$} (S2)
  (S1) edge[below] node {$\tt b$} (S0)
  (S2) edge[above] node {$\tt a$} (S4)
  (S3) edge[below] node {$\tt a$} (S2)
  (S4) edge node {$\tt b$} (S3)
  (S2) [bend left = 30]edge[below] node {$\tt b$} (S1)
  (S1) [bend left = 30]edge[left] node {$\tt a$} (S2)
  (S3) [bend left = 55]edge[below left] node {$\tt b$} (S0);
  \end{tikzpicture} 
 }
 \caption{Left: a pattern NFA $\mathcal{A}_{0}$ for $\mathtt{aa}^{*}\mathtt{b}$. 
\,
Right:
the filter Moore machine $\mathcal{M}_{\mathcal{A}_{0},2}$, its non-buffer part}
 \label{fig:filterMooreMachineExample}
\end{minipage}
\hfill
%------------------------------------------------------
\begin{minipage}{.48\textwidth}
\noindent
\scalebox{.8}{ 
\noindent
\begin{minipage}{\linewidth}
 \begin{displaymath}
\begin{array}{l}
  %---------------------------------
 \Bigl(
  q'_{0},\,
  \begin{array}{|c|c|}\hline
       \bot &\bot \\\hline \mask &\mask
\\\hline  \end{array}
   \,\Bigr)
 \xrightarrow[\bot]{\tt a}
 %---------------------------------
 \Bigl(
  q'_{1},\,
  \begin{array}{|c|c|}\hline
       \bot & \tt a \\\hline \mask &\mask
\\\hline  \end{array}
   \,\Bigr)
 \\
 \xrightarrow[\bot]{\tt b}
 %---------------------------------
 \Bigl(
  q'_{3},\,
  \begin{array}{|c|c|}\hline
       \tt a & \tt b \\\hline \pass &\pass
\\\hline  \end{array}
   \,\Bigr)
 \xrightarrow[\tt a]{\tt b}
 %---------------------------------
 \Bigl(
  q'_{0},\,
  \begin{array}{|c|c|}\hline
       \tt b & \tt b \\\hline \pass &\mask
\\\hline  \end{array}
   \,\Bigr)
 \\
 \xrightarrow[\tt b]{\tt b}
 %---------------------------------
 \Bigl(
  q'_{0},\,
  \begin{array}{|c|c|}\hline
       \tt b & \tt b \\\hline \mask &\mask
\\\hline  \end{array}
   \,\Bigr)
 \xrightarrow[\bot]{\tt a}
 %---------------------------------
 \Bigl(
  q'_{1},\,
  \begin{array}{|c|c|}\hline
       \tt b & \tt a \\\hline \mask &\mask
\\\hline  \end{array}
   \,\Bigr)
 \\
 \xrightarrow[\bot]{\tt a}
 %---------------------------------
 \Bigl(
  q'_{2},\,
  \begin{array}{|c|c|}\hline
       \tt a & \tt a \\\hline \pass &\pass
\\\hline  \end{array}
   \,\Bigr)
 \xrightarrow[\tt a]{\tt b}
 %---------------------------------
 \Bigl(
  q'_{4},\,
  \begin{array}{|c|c|}\hline
       \tt a & \tt b \\\hline \pass &\pass
\\\hline  \end{array}
   \,\Bigr)
 \\
 \xrightarrow[\tt a]{\bot}
 %---------------------------------
 \Bigl(
  \{(s_0,0)\},\,
  \begin{array}{|c|c|}\hline
       \tt b & \bot \\\hline \pass &\mask
\\\hline  \end{array}
   \,\Bigr)
 \xrightarrow[\tt b]{\bot}
 %---------------------------------
 \Bigl(
  \{(s_{0},0)\},\,
  \begin{array}{|c|c|}\hline
       \bot & \bot \\\hline \mask &\mask
\\\hline  \end{array}
   \,\Bigr)
\end{array}
\end{displaymath}\end{minipage}
} \caption{The run of
 $\mathcal{M}_{\mathcal{A}_{0},2}$ over the word $w=\mathtt{abbbaab}$. The tables show the states of the buffer, enqueued from the right.
% and dequeued from the left. 
In $\xrightarrow[b]{a}$, $a$ is the input character and $b$ is the output character. }
 \label{fig:exampleRunOfFilterMooreMachine}
\end{minipage}
\end{figure*}

% \begin{definition}
%  [Filtering of pattern matching]
%  \label{def:filtering_pattern_matching}
%  Given an NFA $\mathcal{A}$ over an alphabet $\Sigma$ and a word
%  $w\in\Sigma$ over the same alphabet $\Sigma$, the \emph{filtering}
%  problem of pattern matching asks for a word
%  $w'\in(\Sigma\amalg\{\bot\})^*$ such that for any
%  $k \in \{(i,j)\mid w(i,j) \in L(\mathcal{A})\}$, we have $w_k = w'_k$.
% \end{definition}

% In our filtering framework, we split a pattern matching procedure
% into \emph{filtering} and \emph{verifying} stages. In the filtering stage, we
% efficiently filter out some of unmatching segments. In verification
% stage, we execute the pattern matching procedure. Given an NFA, we want
% to construct a filtering automaton which efficiently filters out
% unnecessary segments for pattern matching from the given log. 
% We define the filtered indices representing the segments
% which remain after filtering stage.

%\begin{definition}
% [Filter NFA]
% For an alphabet $\Sigma$ and a pattern NFA
% $\mathcal{A} = (\Sigma, S, s_0, S_F, E)$, the \emph{filter NFA}
% $\mathcal{A}^{\filt}$ is 
% $\mathcal{A}^{\filt} = (\Sigma, S\amalg\{s_d\},s_0,S_F,E')$ where
% $E' = E\cup\{(s_0,a,s_0)\mid a \in \Sigma\} \cup \{(s_d,a,s_d)\mid a \in
% \Sigma\}\cup\{(s,a,s_d)\mid s\in S,a \in \Sigma\}$
%\end{definition}

\subsection{Construction of  Filter Moore Machine $\mathcal{M}_{\mathcal{A},N}$}
\label{subsec:filterConstructionUntimed}
% Here is our first technical contribution: a concrete construction of filter Moore machines for  untimed pattern matching. 

\begin{definition}[the filter $\mathcal{M}_{\mathcal{A},N}$ for (untimed) pattern matching]
\label{def:filterConstructionUntimed}
 Let $\Sigma$ be an alphabet, $N$ be a positive integer, and 
 $\mathcal{A} = (\Sigma, S, s_0, S_F, E)$ be an NFA. 
 We define a Moore machine 
\begin{displaymath}
  \mathcal{M}_{\mathcal{A},N} = (\Sigma_{\bot},
 \Sigma_{\bot},Q,q_0,\Delta,\Lambda)
\end{displaymath} 
as follows. Its input and output alphabets are both $\Sigma_{\bot}$. 

%\begin{itemize}
% \item
 The state space is  $Q=\mathcal{P}\bigl(\,S\times (\mathbb{Z}/N\mathbb{Z})\,\bigr)\times\bigl((\Sigma_{\bot})^{N}\times\{\pass,\mask\}^{N}\bigr)$. Here $\mathbb{Z}/N\mathbb{Z}$ is the $N$-element set with addition modulo $N$ (\S{}\ref{sec:prelim}). 
% \marginpar{$s_{\bot}$ no longer needed? There's one also in Fig.~\ref{fig:exampleRunOfFilterMooreMachine}}

% \item
 The initial state is $q_0 = \bigl(\,\bigl\{(s_0,0)\bigr\},\,(\bot,\dotsc,\bot),\,(\mask,\dotsc,\mask)\,\bigr)$. 

% \item
 The transition $\Delta\colon Q\times \Sigma_{\bot} \to Q$ is  as follows. For each $a\in\Sigma_{\bot}$,
       \begin{align}
&
\begin{array}{l}
 	\Delta\Bigl(\,\bigl(\mathcal{S},(a_1,a_2,\dots,a_{N}),(l_1,l_2,\dots,l_{N})\bigr),\,a\,\Bigr) 
	%\nonumber
	\\
	\qquad\qquad\qquad =
	\bigl(\,\mathcal{S}',\,(a_2,\dots,a_{N},a),\,\overline{l}\,\bigr)
	\enspace, \quad
 \text{ where}\quad
\end{array}	
\label{eq:constFilterUntimed1}
        \\
	&
\begin{aligned}
\mathcal{S}'=\bigl\{\,(s',(n \bmod N)+1)\mid (s,n) &\in \mathcal{S},(s,a,s')\in E\,\bigr\}\\
 &\cup\{(s_0,0)\}\enspace,\text{ and}
\end{aligned}
	\label{eq:constFilterUntimed2}
	\\
	&
  \overline{l} = 
  \footnotesize
 \begin{cases}
  (\pass,\dotsc,\pass)&\text{if $\exists s.\, (s,N)\in\mathcal{S}'$,}\\
  (l_2,l_3,\dots,l_{N-\psi(\mathcal{S}')+1},\overbrace{\pass,\dotsc,\pass}^{\psi(\mathcal{S}')})&
  \text{else if  $\exists n, s\in S_F.\,$
  }
  \\&
  \text{ $(s,n)\in\mathcal{S}'$,}
  \\
  (l_2,l_3,\dots,l_{N},\mask)&\text{otherwise.}
 \end{cases}
	\label{eq:constFilterUntimed3}
       \end{align}
 Here  $\psi(\mathcal{S}')\in \Zp$ is
% defined by 
 \begin{math}
  \psi(\mathcal{S}') = \max\{n\mid \exists s\in S_F.\,  (s,n)\in\mathcal{S}'\}
 \end{math}.

% For the character $\bot\in\Sigma_{\bot}$ we define as follows. Recall that $s_{\bot}$ is a fresh sink node.
%       \begin{equation}\label{eq:constFilterUntimed4}
%\begin{array}{l}
%    \Delta\bigl(\,\bigl(\mathcal{S},(a_1,a_2,\dots,a_{N}),(l_1,l_2,\dots,l_{N})\bigr),\,
%  \bot\,\bigr) 
%\\
%\qquad= \bigl(\,\bigl\{(s_{\bot},0)\bigr\},(a_2,a_3,\dots,a_{N},\bot),(l_2,l_3,\dots,l_{N},\mask)\,\bigr)\enspace.
%\end{array}       
%\end{equation}

% \item
 Finally, the output function $\Lambda\colon Q\to \Sigma_{\bot}$ is defined as follows. 
\begin{equation}
 \label{eq:constFilterUntimed5}
 \Lambda\bigl(\,\mathcal{S},\,(a_1,a_2,\dotsc,a_{N}),\,(l_1,l_2,\dotsc,l_{N})\,\bigr)=
 \begin{cases}
  a_1& \text{if } l_1 = \pass\\
  \bot& \text{if }  l_1 = \mask
 \end{cases}
\end{equation}
\end{definition}

We describe  intuitions. The construction of $\mathcal{M}_{\mathcal{A},N}$ combines three  building blocks: \emph{determinization}, \emph{counters} and a \emph{buffer} of size $N$. 
%\begin{itemize}
% \item

\emph{(Determinization)} The  pattern $\mathcal{A}$ is an NFA, but we want a \emph{deterministic} Moore machine. This determinization accounts for the powerset construction $\mathcal{P}$ in the component $\mathcal{P}\bigl(\,S\times (\mathbb{Z}/N\mathbb{Z})\,\bigr)$ of the state space $Q$. For example, an element $\{\,(s_{1},n_{1}),\dotsc, (s_{k},n_{k})\,\}$ of this component means ``the states $s_{1},\dotsc, s_{k}$ are active in  the NFA $\mathcal{A}$.'' 
Observe that, in~(\ref{eq:constFilterUntimed2}), the part regarding states ($s,s',\dotsc$)  follows the usual determinization. An exception is the addition of $(s_{0},0)$ 
%to all states of $Q$. 
in~(\ref{eq:constFilterUntimed2}):
it is because a matching  can start at any position of the input word. 
 
%\item
\emph{(Counters)}
 Moreover, each active state that traverses $\mathcal{A}$ keeps a \emph{counter} for how many steps it has traveled since the initial state. This is the component $\mathbb{Z}/N\mathbb{Z}$ in the state space $Q$. The maximum value of those counters is the same as the buffer size $N$. Once this maximum is reached, a counter starts over from $1$. See~(\ref{eq:constFilterUntimed2}), where counters for active states are incremented modulo $N$. 

% \item
\emph{(Buffer)}
 A FIFO \emph{buffer} of size $N$ accounts for the second
       component $(\Sigma_{\bot})^{N}\times\{\pass,\mask\}^{N}$ of the
       state space $Q$. Each of the $N$ cells stores a character from
       $\Sigma_{\bot}$ and a label ($\pass$ or
       $\mask$). See~(\ref{eq:constFilterUntimed1}) and ~(\ref{eq:constFilterUntimed3})% and~(\ref{eq:constFilterUntimed4})
, where the basic behavior of the buffer is to dequeue the leftmost element and to enqueue the read character on the right. 
%\end{itemize}

It is the labels in the buffer ($\pass$ or $\mask$) that determine whether to mask a character or not. The default label is $\mask$ (the third case in~(\ref{eq:constFilterUntimed3})), and if it remains unchanged for $N$ steps then the corresponding character gets masked by $\bot$ in the output (the second case in~(\ref{eq:constFilterUntimed5})). 
The label can change from $\mask$ to $\pass$ for two different reasons (the first two cases in~(\ref{eq:constFilterUntimed3})). 
\begin{itemize}
 \item 
 The second case in~(\ref{eq:constFilterUntimed3}) is when it is detected that some characters towards the end of the buffer form a match against the pattern $\mathcal{A}$, leading to an accepting state $s\in S_{F}$ of $\mathcal{A}$. Then we mark those characters by $\pass$, indicating that they need to be passed to pattern matching (Fig.~\ref{fig:workflow}). The number $\psi(\mathcal{S}')$ of characters to be passed is calculated using the counter $n$ associated to the active state $s\in S_{F}$. 
 \item On the first case in~(\ref{eq:constFilterUntimed3}), its condition
$\exists s.\, (s,N)\in\mathcal{S}'$
 % $\exists
 %  (s,N-1)\in\mathcal{S}.\,\exists s'.\,(s,a,s')\in E$ 
       says that the counter for some active state $s$  has reached the maximum $N$. In this case we are not  sure whether this active state $s$ of $\mathcal{A}$ will eventually reach an accepting state or not. To be on the safe side,  we pass all the $N$ characters to pattern matching without masking them. In the untimed setting, this is the only place where completeness of filtering is potentially lost.
\end{itemize}
In summary, in Def.~\ref{def:filterConstructionUntimed} we construct a Moore machine that works in the way depicted in Fig.~\ref{fig:filteringSteps}. The Moore machine's state space combines the following: determinization of the pattern NFA $\mathcal{A}$; counters for the steps from the initial state; and a FIFO buffer that stores $N$ characters labeled by $\pass$ or $\mask$. 

\begin{proposition}\label{prop:sanityCheckUntimed}
 The Moore machine $\mathcal{M}_{\mathcal{A},N}$ is a filter for $\mathcal{A}$ with buffer size $N$, in the sense of Def.~\ref{def:filterUntimed}. 
\qed
\end{proposition}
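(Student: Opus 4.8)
The statement is a pure sanity check: it asserts only that $\mathcal{M}_{\mathcal{A},N}$ is a Moore machine of the right signature whose output on a $\bot$-padded input has the prescribed shape — nothing about soundness, and indeed nothing about $L(\mathcal{A})$ at all, is needed. The plan is to unfold Def.~\ref{def:filterConstructionUntimed}, observe that the middle component of the state space behaves exactly as a length-$N$ FIFO buffer over the input letters, and then read the output off via~(\ref{eq:constFilterUntimed5}).

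First I would dispatch the routine parts: $Q$ is finite, $q_0\in Q$, and $\Sigma_{\mathrm{in}}=\Sigma_{\mathrm{out}}=\Sigma_{\bot}$ hold by construction, so the only real obligation is that $\Delta$ and $\Lambda$ are total. For $\Lambda$ this is immediate from~(\ref{eq:constFilterUntimed5}); for $\Delta$ it reduces to checking that in~(\ref{eq:constFilterUntimed3}) the tuple $\overline{l}$ always has length exactly $N$, i.e.\ that $1\le\psi(\mathcal{S}')\le N$ whenever the second case applies. This holds because the counter values carried by active states lie in $\{0,1,\dots,N\}$ and the second case is taken only when the first fails, so no active state carries counter $N$ and hence $\psi(\mathcal{S}')\le N-1$ (and $\psi(\mathcal{S}')\ge 1$ since counter $0$ is attached only to $s_0$; the degenerate case $s_0\in S_F$ aside).

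The core of the argument is the buffer invariant. Fix $w=a_{1}\dots a_{n}\in\Sigma^{*}$, set $c_{1}\dots c_{n+N}:=a_{1}\dots a_{n}\bot^{N}$ with the convention $c_{j}:=\bot$ for $j\le 0$, and let $q_{0}q_{1}\dots q_{n+N}$ be the run of $\mathcal{M}_{\mathcal{A},N}$ over $w\bot^{N}$, writing $q_{t}=\bigl(\mathcal{S}_{t},(a^{(t)}_{1},\dots,a^{(t)}_{N}),(l^{(t)}_{1},\dots,l^{(t)}_{N})\bigr)$. By induction on $t$ — base case from $q_{0}$ having character component $(\bot,\dots,\bot)$, step directly from~(\ref{eq:constFilterUntimed1}), which dequeues $a^{(t)}_{1}$ and enqueues the freshly read letter $c_{t+1}$ — one gets $a^{(t)}_{k}=c_{\,t-N+k}$ for all $0\le t\le n+N$ and $1\le k\le N$. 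In particular the components $\mathcal{S}_{t}$ and $l^{(t)}_{\bullet}$ influence only whether a given buffered letter is emitted or suppressed, never which letter occupies a cell.

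Finally I would read off the output word $\Lambda(q_{0})\Lambda(q_{1})\dots\Lambda(q_{n+N-1})$, which has length $n+N=|w\bot^{N}|$: by~(\ref{eq:constFilterUntimed5}), $\Lambda(q_{t})$ equals $a^{(t)}_{1}=c_{\,t-N+1}$ if $l^{(t)}_{1}=\pass$ and $\bot$ if $l^{(t)}_{1}=\mask$. For $0\le t\le N-1$ we have $t-N+1\le 0$, so $c_{\,t-N+1}=\bot$ and $\Lambda(q_{t})=\bot$ irrespective of the label — giving the required prefix $\bot^{N}$; for $N\le t\le n+N-1$, putting $i:=t-N+1\in[1,n]$ gives $\Lambda(q_{t})\in\{a_{i},\bot\}$, so the output word is $\bot^{N}b_{1}\dots b_{n}$ with $b_{i}:=\Lambda(q_{i+N-1})\in\{\bot,a_{i}\}$, exactly as Def.~\ref{def:filterUntimed} demands. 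No step here is genuinely hard; the one place to be careful is the index bookkeeping, since the run has $n+N+1$ states while the output word has $n+N$ letters and both ends of the input carry $\bot$-padding, so an off-by-one in aligning $t$ with position $i$ is the easiest way to slip.
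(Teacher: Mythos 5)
Your proof is correct, and for the substantive part it takes a genuinely different route from the paper's. The paper's own argument (relegated to an auxiliary proof) establishes only the $\bot$-prefix requirement, and does so by contradiction on the \emph{labels}: it assumes some output character among the first positions is not $\bot$, traces the offending $\pass$ label back to the transition where it was set, and uses the counter bound ($n\le k$) to derive $i>N-1$; the remaining requirement $b_i\in\{a_i,\bot\}$ is left implicit as a consequence of the FIFO shifts. You instead prove the buffer-content invariant $a^{(t)}_k=c_{t-N+k}$ (with $\bot$ for nonpositive indices) and read both requirements off it at once: the prefix is $\bot^N$ simply because the character being emitted during the first $N$ steps is itself $\bot$, \emph{irrespective of the labels}, and the alignment $b_i\in\{a_i,\bot\}$ for $i\in[1,n]$ follows from the same invariant. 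This buys simplicity and robustness—your argument does not depend on the label-update rule (\ref{eq:constFilterUntimed3}) at all, whereas the paper's does—and your index bookkeeping matches the final Def.~\ref{def:filterUntimed} exactly, while the paper's auxiliary proof still pads with $\bot^{N-1}$ (an off-by-one relic of an earlier formulation). Your parenthetical remark on totality of $\Delta$ in the degenerate case $s_0\in S_F$ (where $\psi(\mathcal{S}')$ could be $0$) flags a corner of the paper's own definition of $\psi\in\Zp$ rather than a gap in your argument; setting it aside, as the paper implicitly does, is acceptable.
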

\auxproof{ \begin{proof}
  Let
  $(\mathcal{S}_0,\overline{a}_0,\overline{l}_0),(\mathcal{S}_{1},\overline{a}_{1},\overline{l}_{1}),\dots,(\mathcal{S}_{|w|+N-1},\overline{a}_{|w|+N-1},\overline{l}_{|w|+N-1})\in Q^*$
  be the run of $\mathcal{M}_{\mathcal{A},N}$ over $w\cdot\bot^{N-1}$.
  Assume that there exists $i \in [1,N-1]$ satisfying $w'_{i}\neq\bot$.
  By the definition of $\Lambda$, we have ${l}_{i,1}=\pass$.
  Let $j$ be the index of $w$ satisfying
  $j=\min\{k \leq i \mid {l}_{k,(1+i-k)}=\pass\}$.
  Since we have $i < N$, there exists $(s,n)\in\mathcal{S}_{j}$ such that 
  we have $s\in S_F$ and $1+i-k+n > N$. Because of $n\leq k$, we have
  $i > N-1$, which contradicts the assumption.
  Thus, for any $i \in [1,N-1]$, we have $w'_{i} = \bot$.
  \end{proof}
}

% \begin{myremark}
% \label{remark:separate_state_space}
 In our implementation, we realize a filter \emph{not}  as a plain Moore
 machine with a state space $Q=\mathcal{P}(S\times (\mathbb{Z}/N\mathbb{Z}))\times((\Sigma_{\bot})^{N}\times\{\pass,\mask\}^{N})$ as described in
 Def.~\ref{def:filterUntimed}.
 Instead, we separate the state space $Q$ into the ``buffer part''
 $(\Sigma_{\bot})^N \times \{\pass, \mask\}^N$ and the ``non-buffer part''
 $\mathcal{P}(S \times (\Z/N \Z))$, and we generate the former buffer part on-the-fly. More specifically, the non-buffer part is constructed as a DFA all at once in the beginning, and this DFA dictates how to operate on  the buffer part that is realized as an array of size $N$. See Example~\ref{ex:runUntimed} later for illustration. 
%This on-the-fly saves memory usage:
%  Thanks to this on-the-fly construction of the buffer part of the state space, the memory usage of our filter Moore machine 
% %(Def.~\ref{def:filterUntimed}) 
% is not so much as the size of its state space
% %---that is exponential in the buffer size $N$---
% suggests. 
 % separately as
 % the ``non-buffer part'' (i.e.\ $\mathcal{P}(S \times (\Z/N \Z))$ in
 % Def.~\ref{def:filterConstructionUntimed}) and the ``buffer part''
 % (i.e.\ $(\Sigma_{\bot})^N \times \{\pass, \mask\}^N$ in
 % Def.~\ref{def:filterConstructionUntimed})---the former is implemented as a DFA and
 % the latter is implemented as an array of size N.
 % The current state of the DFA determines the update of the array and the
 % content of the array determines the output character of the Moore
 % machine. 
 % After all, the filter works as a Moore machine combining the current states of the
 % DFA and the array on-the-fly, which drastically decreases the memory usage.
%\end{myremark}

% \noindent
%  The size of our  Moore machine is highly sensitive to the choice of
%  $N$, though the following is just a theoretical upper bound and in reality, most
%  of the states are unreachable and omitted in the implementation.
\begin{proposition}
 \label{prop:filter_state_space}
Let $\mathcal{A} = (\Sigma, S, s_0, S_F, E)$ be an NFA. 
For the induced filter Moore machine $\mathcal{M}_{\mathcal{A},N}$, the size of the 
non-buffer part 
 $\mathcal{P}(S\times(\Z/N \Z))$ of its state space is bounded by  $O(2^{N\cdot |S|})$.

 Therefore, the memory usage for the non-buffer part (including the transitions) is $O(2^{N\cdot|S|}\cdot |\Sigma|)$. The memory usage for the buffer part is $O(N\log|\Sigma|)$; overall, the space complexity of running our filter Moore machine $\mathcal{M}_{\mathcal{A},N}$ is $O(2^{N\cdot|S|}\cdot |\Sigma|)$. 
\qed
\end{proposition}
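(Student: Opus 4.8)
The plan is a direct counting argument that mirrors the separation of the state space $Q$ into its ``non-buffer part'' $\mathcal{P}\bigl(S\times(\Z/N\Z)\bigr)$ and its ``buffer part'' $(\Sigma_{\bot})^{N}\times\{\pass,\mask\}^{N}$, as described in the implementation remark following Prop.~\ref{prop:sanityCheckUntimed}. First I would bound the non-buffer state count: by the convention of \S\ref{sec:prelim} the set $\Z/N\Z$ has exactly $N$ elements, so $S\times(\Z/N\Z)$ has $N\cdot|S|$ elements and hence $\mathcal{P}\bigl(S\times(\Z/N\Z)\bigr)$ has $2^{N\cdot|S|}$ elements. This already gives the claimed $O(2^{N\cdot|S|})$ bound on the number of non-buffer states (a fortiori, since typically not all subsets are reachable from $q_0$ under $\Delta$).

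Next I would account for the stored transition structure and the buffer. Since the non-buffer part is materialized as a DFA once and for all, its transition table assigns to each of the at most $2^{N\cdot|S|}$ non-buffer states and each of the $|\Sigma_{\bot}|=|\Sigma|+1=O(|\Sigma|)$ input symbols a single successor state (the set $\mathcal{S}'$ computed by~(\ref{eq:constFilterUntimed2})); storing this table therefore costs $O(2^{N\cdot|S|}\cdot|\Sigma|)$, counting a state index as a unit of storage, and adding the states themselves contributes only a lower-order term. For the buffer part, one keeps an array of $N$ cells, each holding one symbol of $\Sigma_{\bot}$ (encodable in $\lceil\log_{2}(|\Sigma|+1)\rceil = O(\log|\Sigma|)$ bits) together with one bit for the $\pass/\mask$ label, for a total of $O(N\log|\Sigma|)$. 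Running $\mathcal{M}_{\mathcal{A},N}$ requires both parts resident, so the overall space is $O(2^{N\cdot|S|}\cdot|\Sigma|)+O(N\log|\Sigma|)=O(2^{N\cdot|S|}\cdot|\Sigma|)$, the first term dominating.

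I do not expect a genuine obstacle here; the statement is essentially a bookkeeping observation. The only points that need a little care are (i) being explicit that $|\Z/N\Z|=N$ despite the underlying set being written as $\{1,\dots,N\}$, so that the exponent is $N\cdot|S|$ and not something larger; (ii) fixing the cost model, namely treating a single state index as $O(1)$ storage, which is what makes the transition-table estimate come out as $O(2^{N\cdot|S|}\cdot|\Sigma|)$ rather than carrying an extra $N\cdot|S|$ factor; and (iii) recording that $|\Sigma_{\bot}|=O(|\Sigma|)$, which needs $|\Sigma|\ge 1$ (harmless, as otherwise there is nothing to match). Everything else is immediate from Def.~\ref{def:filterConstructionUntimed}.
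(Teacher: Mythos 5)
Your counting argument is correct and is essentially the argument the paper intends: the proposition is stated with the proof omitted as immediate bookkeeping, and your bounds ($2^{N\cdot|S|}$ subsets of $S\times(\Z/N\Z)$, an $O(|\Sigma|)$-branching transition table, and an $O(N\log|\Sigma|)$ buffer) match the paper's claims under the same unit-cost convention for state indices. Nothing further is needed.
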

\noindent
The space complexity is exponential in $N$, and this comes from the powerset construction for the non-buffer part $\mathcal{P}(S\times(\Z/N \Z))$. Experimentally, however, the memory consumption does not necessarily grow exponentially in $N$. This is because not all states in 
$\mathcal{P}(S\times(\Z/N \Z))$ are reachable. See RQ2 in~\S{}\ref{sec:experiments}. 
\auxproof{ $O(2^{|S|\cdot N}\cdot (2|\Sigma|)^{N})$
}

\begin{example}\label{ex:runUntimed}
Let us consider the pattern $\mathtt{aa}^{*}\mathtt{b}$. It is expressed by the NFA $\mathcal{A}_{0}$ in Fig.~\ref{fig:filterMooreMachineExample}. The filter Moore machine 
$\mathcal{M}_{\mathcal{A}_{0},2}$ from Def.~\ref{def:filterConstructionUntimed} is depicted in Fig.~\ref{fig:filterMooreMachineExample}, where buffer states are omitted. Its run over the word  $w=\mathtt{abbbaab}$ is shown in Fig.~\ref{fig:exampleRunOfFilterMooreMachine}; its output word is $\bot\bot\mathtt{ab\bot\bot aa b}$, which means that the filtering result is
 $\mathtt{ab\bot\bot aa b}$. 
%\marginpar{Do we replace $\{(s_{\bot},0)\}$ by $\emptyset$ in Fig.~\ref{fig:exampleRunOfFilterMooreMachine}?}
\end{example}

\subsection{Properties of the Moore Machine $\mathcal{M}_{\mathcal{A},N}$}
\label{subsec:filterPropertiesUntimed}
Throughout the rest of this section, 
 let $\mathcal{A}$ be a pattern NFA $\mathcal{A} = (\Sigma, S, S_0, E,S_F)$,
 $N$ be a positive integer, and
 $\mathcal{M}_{\mathcal{A},N}=(\Sigma_{\bot},\Sigma_{\bot},Q,q_0,\Delta,\Lambda)$
 be the filter Moore machine in Def.~\ref{def:filterConstructionUntimed}.
Let  $w=a_1 a_2\dotsc a_n$ be a word over $\Sigma$, and 
 $\bot^{N} w'$ be the
 output word of $\mathcal{M}_{\mathcal{A},N}$ over the input word
 $w \bot^{N}$. Let $w'=b_1 b_2\dots b_n$, where $b_{i}\in\Sigma_{\bot}$. 

\begin{theorem}
 [soundness]
 \label{thm:nfa_soundness}
 % $w=a_1 a_2\dotsc a_n$ be a word over $\Sigma$, and $\bot^{N} w'$ be the
 % output word of $\mathcal{M}_{\mathcal{A},N}$ over the input word
 % $w \bot^{N}$ where $w'=b_1 b_2\dots b_n$.
 % For any pair $(i,j)$ of indices of $w$ satisfying 
 % $w|_{[i,j]}\in L(\mathcal{A})$, and for any index $k \in [i,j]$ of $w$, we have
 % $a_k = b_k$ (i.e.\ $b_{k}\neq\bot$).
The Moore machine  $\mathcal{M}_{\mathcal{A},N}$ is sound, in the sense of Def.~\ref{def:filterUntimed}. 
 \qed
\end{theorem}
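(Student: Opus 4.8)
The plan is to show that whenever a position $k$ lies inside some matching interval $[i,j] \in \Match(w,\mathcal{A})$, the character $a_k$ is passed, i.e.\ $b_k = a_k$. The key is to trace the lifecycle of a single active state along an accepting run of $\mathcal{A}$ over $w|_{[i,j]}$, and to relate the counter values attached to that state in the powerset component $\mathcal{S}$ to positions in the input word. First I would fix a match $[i,j]$ with an accepting run $s_0 = s_{i-1}', s_i', \dots, s_j'$ of $\mathcal{A}$ over $a_i \dots a_j$, so that $s_j' \in S_F$ and $j - i + 1 \le N$ (this length bound is where the hypothesis ``$k$ lies in a match'' must be coupled with the buffer-size restriction; matches longer than $N$ are simply not guaranteed to be caught, but for those within a window of size $N$ the argument goes through — I need to check whether the statement as phrased implicitly assumes $j-i+1\le N$, or whether long matches must be handled by a separate observation that any sub-block of length $\le N$ of a long match is itself hit).

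Next I would prove, by induction on $\ell \in [0, j-i+1]$, that after the Moore machine has consumed the prefix $a_1 \dots a_{i-1+\ell}$ of $w$ (reading from the initial state $q_0$), the powerset component $\mathcal{S}$ of the current state contains the pair $(s_{i-1+\ell}', c_\ell)$ where $c_\ell = \ell \bmod N$ if $\ell < N$ and appropriately wrapped otherwise — but since $\ell \le j-i+1 \le N$, we actually get $c_\ell = \ell$ for $\ell < N$ and $c_\ell \in \{N\} \cup$ (wrap) at the endpoint. The base case $\ell = 0$ holds because $(s_0, 0)$ is always added to $\mathcal{S}'$ by~(\ref{eq:constFilterUntimed2}); the inductive step is exactly the transition rule~(\ref{eq:constFilterUntimed2}), which propagates $(s,n)$ to $(s', (n \bmod N)+1)$ along any edge $(s,a,s') \in E$. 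So after consuming $a_1 \dots a_j$, the state contains $(s_j', m)$ with $s_j' \in S_F$ and $m = (j - i + 1) \bmod N$, or more precisely $m = j-i+1$ when that value is in $[1,N]$ seen through the mod-then-$+1$ arithmetic; here I must be careful about the off-by-one in~(\ref{eq:constFilterUntimed2}) and about the case $j - i + 1 = N$ versus $< N$.

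Then I would invoke the ``second case'' of~(\ref{eq:constFilterUntimed3}): when the transition on reading $a_j$ produces a state whose $\mathcal{S}'$ contains some $(s,n)$ with $s \in S_F$, the labels $l_{N - \psi(\mathcal{S}')+1}, \dots, l_N$ of the last $\psi(\mathcal{S}')$ buffer cells are set to $\pass$, where $\psi(\mathcal{S}') = \max\{n \mid \exists s \in S_F.\ (s,n) \in \mathcal{S}'\} \ge m$. Since the buffer at this point holds exactly $a_{j-N+1} \dots a_j$ in its $N$ cells (enqueued from the right, dequeued from the left, after $j \ge N$ steps — and the padding $\bot^N$ ensures everything is eventually dequeued and output), marking the last $\psi(\mathcal{S}')$ cells as $\pass$ marks in particular all cells holding $a_{j - m + 1}, \dots, a_j = a_i, \dots, a_j$ (using $m \ge j - i + 1$, hence $j - m + 1 \le i$). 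The complication is that a $\pass$ label, once set, can be overwritten or shifted by subsequent transitions before the cell is dequeued: the first case of~(\ref{eq:constFilterUntimed3}) resets everything to $\pass$ (harmless — it only passes more), but I must verify that the third case (appending $\mask$ on the right) and the shift in the second case never downgrade an already-$\pass$ cell that belongs to $[i,j]$ to $\mask$ before output. This monotonicity of the $\pass$ label along the life of each buffer cell — that once a cell holding $a_k$ for $k \in [i,j]$ is marked $\pass$ at step $j$, it stays $\pass$ through steps $j+1, \dots, k+N$ until it is dequeued and output via~(\ref{eq:constFilterUntimed5}) — is the main obstacle, and handling it cleanly will likely require a small invariant on the buffer contents maintained across all transitions, rather than just examining the transition at step $j$ in isolation.

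Finally, from that invariant, at the step when the cell originally holding $a_k$ is dequeued (step $k + N$, which is within range because of the $\bot^N$ padding), its label is $\pass$, so $\Lambda$ outputs $a_k$ by~(\ref{eq:constFilterUntimed5}), giving $b_k = a_k$. This covers every $k$ in every match, which is precisely soundness in the sense of Def.~\ref{def:filterUntimed}. I would relegate the bookkeeping of the buffer-shift invariant and the modular-arithmetic off-by-ones to the appendix, keeping the main-text proof to the three-step skeleton above.
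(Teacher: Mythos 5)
There is a genuine gap. You restrict attention to matches with $j-i+1\le N$ and openly hedge about longer ones, but Def.~\ref{def:filterUntimed} and Thm.~\ref{thm:nfa_soundness} require $b_k=a_k$ for \emph{every} $k$ lying in \emph{any} match, with no bound on match length; $L(\mathcal{A})$ is in general infinite (e.g.\ $\mathtt{aa}^{*}\mathtt{b}$), so matches longer than the buffer must be covered. Your fallback idea---that any sub-block of length $\le N$ of a long match is ``itself hit''---does not work: a factor of a word in $L(\mathcal{A})$ need not be in $L(\mathcal{A})$, so the accepting-state case of~(\ref{eq:constFilterUntimed3}) need not fire for it. The mechanism that actually covers long matches is the one your plan omits: the counter attached to the active state that tracks the accepting run wraps modulo $N$, and each time it reaches $N$ the \emph{first} case of~(\ref{eq:constFilterUntimed3}) ($\exists s.\,(s,N)\in\mathcal{S}'$) fires and sets the entire buffer to $\pass$, flushing the current block of $N$ characters of the match; the second case is then needed only for the residual tail of length $((j-i)\bmod N)+1$ once $s_j\in S_F$ becomes active at step $j$. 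This is exactly how the paper's proof goes: it tracks $(s_k,n_k)\in\mathcal{S}_k$ for all $k\in[i,j]$, concludes $\overline{l}_{i+Nm+N-1}=(\pass,\dotsc,\pass)$ for each full block $m$ with $i+Nm\le j$, and uses $s_j\in S_F$ for the final partial block. Without this, your argument proves only a weakened statement (soundness for matches of length at most $N$).

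The rest of your skeleton coincides with the paper's argument for short matches and is sound. In particular, the label-monotonicity worry you flag as ``the main obstacle'' is a non-issue: inspecting the three cases of~(\ref{eq:constFilterUntimed3}), a cell already in the buffer either keeps its label (shifted one position left) or has it overwritten by $\pass$; $\mask$ is only ever written into the newly enqueued rightmost cell. So $\pass$ is never downgraded before the cell is dequeued and output via~(\ref{eq:constFilterUntimed5}), and no elaborate invariant is needed---only the block-flushing argument above must be added to complete the proof.
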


Completeness holds if, the length of matches is bounded
% (i.e.\ if $L(\mathcal{A})$ is a finite set) 
and the buffer size $N$ is no shorter than the bound. This is essentially
 the setting of \emph{multiple string matching}~\cite{DBLP:journals/cacm/AhoC75}.
\begin{theorem}
 [completeness]
 \label{thm:nfa_completeness}
 % Let $\mathcal{A}$ be a pattern NFA $\mathcal{A} = (\Sigma, S, S_0, E,S_F)$,
 % $N$ be a positive integer, and $w= a_1 a_2 \dotsc a_n$ be a word over
 % the common alphabet $\Sigma$.
 Assume we have
 $\max\{|w| \mid w \in L(\mathcal{A})\} \leq N < \infty$. Then 
 we can construct an NFA $\mathcal{A}'$ with
 $L(\mathcal{A}) = L(\mathcal{A}')$, so that the filter Moore machine
 $\mathcal{M}_{\mathcal{A}',N}$  is complete. The latter means: 
 if the index $k$ satisfies  $a_k = b_k$, then there is an interval $[i,j]$ such that
 $k\in [i,j]$ and $w|_{[i,j]}\in L(\mathcal{A})$.
 % there exists an NFA $\mathcal{A}'$ such that 
 % $L(\mathcal{A}) = L(\mathcal{A}')$ holds and
 % for the output word $\bot^{N} w'$ of the $N$-filter Moore machine 
 % $\mathcal{M}_{\mathcal{A}',N}$ of $\mathcal{A}'$ over the input word
 % $w\bot^{N}$ and
 % for any index $k$ of $w$ satisfying $a_k = b_k$ where 
 % $w'=b_1 b_2\dotsc b_n$, there is a pair
 % $(i,j)$ of indices of $w$ satisfying both $i\leq k\leq j$ and 
 % $w|_{[i,j]}\in L(\mathcal{A})$.
 \qed
\end{theorem}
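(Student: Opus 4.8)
The plan is to first construct the witnessing NFA $\mathcal{A}'$ and then analyze the run of $\mathcal{M}_{\mathcal{A}',N}$. The issue with an arbitrary $\mathcal{A}$ satisfying $\max\{|w|\mid w\in L(\mathcal{A})\}\le N$ is that its accepting runs can still carry ``junk'' states whose counters grow up to $N$ even though those states are not on any accepting path of relevant length; such a state triggers the first case of~(\ref{eq:constFilterUntimed3}) and causes an unnecessary $(\pass,\dotsc,\pass)$, i.e.\ a character is passed without belonging to any match. So first I would trim $\mathcal{A}$: take $\mathcal{A}'$ to be the NFA obtained by removing every state that is not \emph{co-reachable} to an accepting state, and moreover (since $L(\mathcal{A})$ is finite and bounded by $N$) removing every state reachable from $s_0$ only by paths of length $\ge N$; equivalently, keep a state $s$ only if there exist $u,v\in\Sigma^{*}$ with $s_0\xrightarrow{u}s\xrightarrow{v}s_F$, $s_F\in S_F$, and $|u|+|v|\le N$. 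This does not change the recognized language because every accepted word has length $\le N$, so every accepting run lives entirely inside the retained states. Crucially, in $\mathcal{A}'$ no state $s$ can appear with counter value $N$ in a reachable configuration $\mathcal{S}'$ \emph{unless} $s$ is itself accepting (if $s$ were reached in exactly $N$ steps and had a nonempty suffix to an accepting state, the accepted word would have length $>N$); hence the first case of~(\ref{eq:constFilterUntimed3}) fires only when $s\in S_F$, in which case the second case would have passed exactly the right characters anyway. (A small bookkeeping point: counters are taken modulo $N$, so I should check that along any retained path of length $\le N$ the counter equals the true step count, never having wrapped — which holds since wrap-around needs $N{+}1$ steps.)

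Next I would carry out the run analysis. Fix $w=a_1\dots a_n$, run $\mathcal{M}_{\mathcal{A}',N}$ over $w\bot^{N}$, and suppose position $k$ is passed, i.e.\ $b_k=a_k$. By the definition of $\Lambda$ in~(\ref{eq:constFilterUntimed5}), this means that when cell $1$ of the buffer holds $a_k$ the label there is $\pass$. Tracing backwards through the shift dynamics of~(\ref{eq:constFilterUntimed1}) and~(\ref{eq:constFilterUntimed3}): a cell gets the label $\pass$ only via the first or second case of~(\ref{eq:constFilterUntimed3}), and then keeps it (it can only be over-written by another $\pass$, never reset to $\mask$) until it is dequeued. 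So there is a step, say after reading $a_m$ for some $m\ge k$ with $m-k\le N-1$, at which the configuration $\mathcal{S}'_m$ (the first component after that step) caused $a_k$'s cell to be labelled $\pass$. By the previous paragraph, in $\mathcal{A}'$ this step is covered by the second case: there is $(s_F,\ell)\in\mathcal{S}'_m$ with $s_F\in S_F$ and $\psi(\mathcal{S}'_m)\ge \ell$, and the number of freshly-$\pass$'d rightmost cells is $\psi(\mathcal{S}'_m)$, so $a_k$ lies among the last $\psi(\mathcal{S}'_m)$ cells, i.e.\ $m-k+1\le\psi(\mathcal{S}'_m)$.

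The last step is to turn the pair $(s_F,\ell)\in\mathcal{S}'_m$ with $\ell\ge m-k+1$ into the required interval $[i,j]$. By~(\ref{eq:constFilterUntimed2}) and induction on the number of steps, $(s,n)\in\mathcal{S}'_m$ holds iff $n>0$ and there is an index $p$ with $n\equiv m-p\pmod N$ and a run of $\mathcal{A}$ (equivalently $\mathcal{A}'$) from $s_0$ to $s$ over $a_{p+1}\dots a_m$ of length exactly $n$ — this is the standard soundness-of-subset-construction invariant augmented with the counter, and in $\mathcal{A}'$ the modular ambiguity disappears because all relevant counters are genuine step counts $\le N$. Applying this to $(s_F,\ell)$ gives $p=m-\ell$ with $a_{p+1}\dots a_m\in L(\mathcal{A})$; set $i=m-\ell+1$ and $j=m$. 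Then $w|_{[i,j]}\in L(\mathcal{A})$, and $k\in[i,j]$ because $i=m-\ell+1\le m-(m-k+1)+1=k\le m=j$. This is exactly the claimed witness, completing the proof.

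The main obstacle is the trimming construction in the first paragraph and proving that it genuinely eliminates the ``spurious $\pass$'' of case~1 of~(\ref{eq:constFilterUntimed3}); everything after that is a fairly mechanical strengthening of the subset-construction invariant to track counters, made clean by the fact that in $\mathcal{A}'$ counters never wrap. I would also need to double-check the off-by-one in $\psi$ and in the indices $i=m-\ell+1$, $j=m$ against the conventions fixed in Def.~\ref{def:filterConstructionUntimed} and Fig.~\ref{fig:exampleRunOfFilterMooreMachine}.
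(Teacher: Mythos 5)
Your proposal is correct and takes essentially the same route as the paper: replace $\mathcal{A}$ by a language-equivalent NFA whose runs from $s_0$ have at most $N$ transitions and whose length-$N$ runs end in accepting states, so that the first case of~(\ref{eq:constFilterUntimed3}) can never produce a spurious $\pass$, and then read off the match interval $[m-\ell+1,m]$ from the counter $\ell$ of an accepting pair in the reached configuration. The paper phrases the normalization as choosing $\mathcal{A}'$ with no transitions out of accepting states and concludes via Lemma~\ref{lemma:filtered_character}, whereas you make the reachability/co-reachability trimming explicit and re-derive the counter-augmented subset-construction invariant directly; this is more detailed but equivalent in substance.
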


\auxproof{
Thm.~\ref{thm:nfa_substring_window} shows that at worse, our procedure filters out
as many characters as the following sliding window-based procedure: it
tries the matching containing each substring of the same length as the
shortest match, and it filters out a character $a_k$ if all of the
matching trial containing $a_k$ fails.

\begin{theorem}
 \label{thm:nfa_substring_window}
 % Let $\mathcal{A}$ be a pattern NFA $\mathcal{A} = (\Sigma, S, S_0, E,S_F)$,
 % $M$ be the integer $M=\min\{|w|\mid w \in L(\mathcal{A})\}$,
 % $N$ be a positive integer,
 % $\mathcal{M}_{\mathcal{A},N}=(\Sigma_{\bot},\Sigma_{\bot},Q,q_0,\Delta,\Lambda)$
 % be the $N$-filter Moore machine of $\mathcal{A}$,
 % $w=a_1 a_2\dotsc a_n$ be a word over $\Sigma$, and $\bot^{N} w'$ be the
 % output word of $\mathcal{M}_{\mathcal{A},N}$ over the input word
 % $w \bot^{N}$ where $w'=b_1 b_2\dotsc b_n$. 
 For any index $k\in [1,|w|]$ of $w$, if we have $\bigcup_{d\in [0,M]}
 (\Sigma^* w|_{[k-d,k-d+M-1]} \Sigma^*) \cap
 L(\mathcal{A})=\emptyset$, $b_k=\bot$ holds.
 \qed
\end{theorem}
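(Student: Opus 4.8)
The plan is to prove the contrapositive: assuming $b_{k}=a_{k}$ (position $k$ is passed), I exhibit some $d\in[0,M]$ for which $\Sigma^{*}\,w|_{[k-d,\,k-d+M-1]}\,\Sigma^{*}\cap L(\mathcal{A})\neq\emptyset$, i.e.\ the length-$M$ window around $k$ is a factor of an accepted word. Throughout I take $M=\min\{|u|\mid u\in L(\mathcal{A})\}$ and use the natural side conditions $M\le N$ and that $\mathcal{A}$ is trimmed (every state lies on some accepting path, which does not change $L(\mathcal{A})$). First I record a monotonicity fact about labels: inspecting the three cases of~(\ref{eq:constFilterUntimed3}), the leftward shift $(l_{2},\dots,l_{N},\cdot)$ never turns a $\pass$ back into a $\mask$, so once a buffer cell carries $\pass$ it keeps it until dequeued. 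Tracking the cell that holds $a_{k}$ (enqueued at step $k$ into the rightmost cell and dequeued at step $k+N$), the assumption $b_{k}=a_{k}$ is therefore equivalent, via~(\ref{eq:constFilterUntimed5}), to that cell being set to $\pass$ at some step $p\in[k,k+N-1]$ by the first or the second case of~(\ref{eq:constFilterUntimed3}), as these are the only cases that produce $\pass$.

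Next I split on which case fired at step $p$. Suppose the \emph{second} case fired, so the active set $\mathcal{S}'$ after reading $a_{p}$ contains some $(s,n)$ with $s\in S_{F}$, and the rightmost $\psi(\mathcal{S}')$ cells are marked $\pass$; since the $a_{k}$-cell is among them, $k\in[p-\psi(\mathcal{S}')+1,\,p]$. An accepting pair $(s,n)$ witnesses a run from $s_{0}$ to $s$ whose true length $\ell$ satisfies $\ell\equiv n\pmod N$ and $\ell\ge\psi(\mathcal{S}')$, so $w|_{[p-\ell+1,\,p]}\in L(\mathcal{A})$ and hence $\ell\ge M$; moreover $[p-\psi(\mathcal{S}')+1,p]\subseteq[p-\ell+1,p]$, so this match contains $k$. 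As $\ell\ge M$, a length-$M$ sub-window $[k-d,\,k-d+M-1]$ with $d\in[0,M-1]$ sits inside $[p-\ell+1,p]$ and contains $k$; being a factor of $w|_{[p-\ell+1,p]}\in L(\mathcal{A})$, it meets $L(\mathcal{A})$, contradicting the hypothesis.

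Now suppose instead the \emph{first} case fired at step $p$, i.e.\ $(s,N)\in\mathcal{S}'$ for some $s$. The counter value $N$ witnesses a live run from $s_{0}$ of length a positive multiple of $N$ ending at $p$, reading a factor of $w$ that has $w|_{[p-N+1,\,p]}$ as a suffix and reaching $s$. Here I invoke trimming: $s$ reaches some accepting state along a path reading a word $v$, so the concatenation lies in $L(\mathcal{A})$ and contains $w|_{[p-N+1,\,p]}$ as a factor. Since $a_{k}$ is still buffered at step $p$ we have $k\in[p-N+1,\,p]$, and because $N\ge M$ this length-$N$ factor contains a length-$M$ sub-window $[k-d,\,k-d+M-1]$, $d\in[0,M-1]$, containing $k$; again it meets $L(\mathcal{A})$, contradicting the hypothesis. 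In both cases the hypothesis fails, so under the hypothesis neither case can pass $a_{k}$ and therefore $b_{k}=\bot$.

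I expect the \emph{first} case (counter overflow) to be the crux. Unlike the second, an overflowing counter certifies only a \emph{live} partial run, not an actual match, so the argument must convert ``liveness at length $N$'' into membership in $L(\mathcal{A})$; this is exactly where trimming (co-reachability of every state to $S_{F}$) is indispensable, and where the side condition $N\ge M$ is needed so that the length-$N$ buffer content is long enough to host a length-$M$ window around $k$. A secondary nuisance is the modular counter: because counters are taken mod $N$, I must recover the \emph{true} run length $\ell$ (congruent to the stored value) before invoking $\ell\ge M$, and keep the bookkeeping of which cell holds $a_{k}$ consistent across the shifts in~(\ref{eq:constFilterUntimed3}).
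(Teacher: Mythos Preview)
Your argument is correct, and it reaches the same conclusion through a genuinely different route from the paper's. The paper argues \emph{directly}: it assumes the window hypothesis, rewrites it as the statement that every length-$\ge M$ factor of $w$ straddling position $k$ fails to occur as a factor of any accepted word, and then verifies the two conjuncts of Lemma~\ref{lemma:filtered_character} (no match $w|_{[i,k']}\in L(\mathcal{A})$ with suitable counter residue, and $\stuck{w|_{[k'-mN+1,k']}}$ for all relevant $k',m$), concluding $b_k=\bot$ via that lemma. You instead go \emph{contrapositive} and work operationally: you track the $a_k$-cell through the buffer, use label monotonicity to locate the first step $p$ at which it is set to $\pass$, and then case-split on which branch of~(\ref{eq:constFilterUntimed3}) fired. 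Both proofs rest on exactly the same two ingredients you single out---trimming (so that a live counter-$N$ state can be extended to an accepted word) and $N\ge M$ (so that the buffered segment is long enough to host a length-$M$ window through $k$). The paper also assumes trimming (``without loss of generality'') and uses $N\ge M$ implicitly. What your approach buys is self-containment: you do not need the separate characterisation lemma, and the role of each side condition is visible at the point of use. What the paper's approach buys is modularity: once Lemma~\ref{lemma:filtered_character} is in hand, this theorem and Theorems~\ref{thm:nfa_completeness} and~\ref{thm:nfa_monotonicity} all reduce to checking its two conjuncts.

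One small tightening: in your second-case analysis you should explicitly take the accepting pair $(s,n)$ with $n=\psi(\mathcal{S}')$, since only for that choice does the true run length $\ell$ automatically satisfy $\ell\ge\psi(\mathcal{S}')$; an arbitrary accepting $(s,n)$ with $n<\psi(\mathcal{S}')$ need not.
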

}

The intuition for monotonicity is that, the bigger the buffer size $N'$ is, the filter
 $\mathcal{M}_{\mathcal{A},N'}$ masks more characters while its state space grows. 
A precise statement is more intricate, requiring the bigger buffer size $N''$ to be a multiple of the smaller one. 
\begin{theorem}
 [monotonicity]
 \label{thm:nfa_monotonicity}
 % Let $\mathcal{A}$ be a pattern NFA $\mathcal{A} = (\Sigma, S, S_0, E,S_F)$,
 % $M$ be the integer $M=\min\{|w|\mid w \in L(\mathcal{A})\}$,
 % and $w=a_1 a_2\dotsc a_n$ be a word over $\Sigma$.
 For any positive integer $N'$, let
 $\mathcal{M}_{\mathcal{A},N'}
%=(\Sigma_{\bot},\Sigma_{\bot},Q,q_0,\Delta,\Lambda)
$
 be the filter Moore machine of Def.~\ref{def:filterConstructionUntimed},
 and $\bot^{N'} w'^{(N')}$ be the
 output word of $\mathcal{M}_{\mathcal{A},N'}$ over the input word
 $w \bot^{N'}$. Let  $w'^{(N')}=b^{(N')}_1 b^{(N')}_2 \dotsc b^{(N')}_n$ where $b^{(N')}_{i}\in\Sigma_{\bot}$. 
 For any positive integers $n,N'$ and any index $k$ of $w$, $b^{(N')}_{k}=\bot$ implies
 $b^{(nN')}_{k}=\bot$.
 \qed
\end{theorem}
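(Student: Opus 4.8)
The plan is to recast the claim in terms of \emph{pass sets}. For a buffer size $M$ write $P_M = \{\, k \in [1,n] \mid b^{(M)}_{k} = a_{k}\,\}$ for the set of indices passed by $\mathcal{M}_{\mathcal{A},M}$. Since each $b^{(M)}_{k}$ is either $a_k$ or $\bot$, the statement ``$b^{(N')}_{k}=\bot$ implies $b^{(nN')}_{k}=\bot$'' is exactly the inclusion $P_{nN'}\subseteq P_{N'}$, and this is what I would prove (so the larger, divisible buffer masks at least as much). The first observation I would record is that along the run of $\mathcal{M}_{\mathcal{A},M}$ a buffer cell's label never reverts from $\pass$ to $\mask$: inspecting the three cases of~(\ref{eq:constFilterUntimed3}), existing labels are only shifted left or upgraded to $\pass$, never downgraded. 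Hence $k\in P_M$ iff at \emph{some} step the cell then holding $a_k$ acquires the label $\pass$, and this can happen only through the first case (an active state with counter $=M$, an \emph{overflow}) or the second case (an active accepting state, a \emph{match}) of~(\ref{eq:constFilterUntimed3}).

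Next I would establish a dichotomy for each $k\in P_{nN'}$. If $a_k$ acquires $\pass$ in the big filter through the second case at some step $j'$, then there is $s\in S_F$ with $(s,\psi)\in\mathcal{S}'$ and $k\in[j'-\psi+1,j']$; since the counter $\psi$ equals the length of the witnessing accepting run reduced modulo $nN'$, that run reads $a_{j'-\ell+1}\dotsc a_{j'}$ for some $\ell\ge\psi$ with $\ell\equiv\psi\pmod{nN'}$, so $w|_{[j'-\ell+1,\,j']}\in L(\mathcal{A})$ is a genuine match containing $k$. In this case $k\in P_{N'}$ follows at once from soundness of $\mathcal{M}_{\mathcal{A},N'}$ (Theorem~\ref{thm:nfa_soundness}). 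The remaining case is that $a_k$ acquires $\pass$ through the first case (overflow) at some step $j'$: then there is an active state $(s,nN')\in\mathcal{S}'$, witnessed by a run of $\mathcal{A}$ from $s_0$ reading $a_{i'}\dotsc a_{j'}$ of length a positive multiple $m\cdot nN'$, and $a_k$ still lies in the buffer, so $j'-nN'+1\le k\le j'$; using $m\ge 1$ this also gives $i'\le k\le j'$.

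The core of the argument is to transfer this overflow witness to the small filter, and here the divisibility hypothesis $N'\mid nN'$ is essential. I would feed the \emph{same} run of $\mathcal{A}$ through $\mathcal{M}_{\mathcal{A},N'}$: starting from the copy of $(s_0,0)$ re-inserted before reading $a_{i'}$, a straightforward induction on~(\ref{eq:constFilterUntimed2}) shows that after reading $a_{i'}\dotsc a_{i'+p-1}$ the pair $\bigl(t_p,((p-1)\bmod N')+1\bigr)$ lies in $\mathcal{S}$, where $t_p$ is the state of the run after $p$ steps. For each $p=qN'$ with $1\le q\le mn$ this counter equals $N'$, so the first case of~(\ref{eq:constFilterUntimed3}) fires at step $i'+qN'-1$ and $\mathcal{M}_{\mathcal{A},N'}$ passes the whole $N'$-character block $a_{i'+(q-1)N'}\dotsc a_{i'+qN'-1}$ then in its buffer. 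As $q$ ranges over $1,\dotsc,mn$ these blocks tile $[i',j']$, so every character of $a_{i'}\dotsc a_{j'}$---in particular $a_k$, since $i'\le k\le j'$---is passed by $\mathcal{M}_{\mathcal{A},N'}$. Thus $k\in P_{N'}$ in both cases, giving $P_{nN'}\subseteq P_{N'}$ and hence the theorem.

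I expect the main obstacle to be the bookkeeping in this transfer step: one must check that the modular counters of the two machines agree precisely on multiples of $N'$ (which is exactly why $N''=nN'$ must be a multiple of $N'$), that the intermediate partial runs are genuinely maintained in $\mathcal{S}$ by~(\ref{eq:constFilterUntimed2}) at every step, and that the $N'$-blocks passed at the successive overflow steps partition $[i',j']$ with no gaps. The dichotomy itself also needs the care noted above: a second-case pass in the big filter may correspond to an accepting run \emph{longer} than the buffer (counter wraparound), but this still exhibits a genuine match interval containing $k$, so it is absorbed by soundness rather than requiring a separate transfer.
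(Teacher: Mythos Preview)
Your argument is correct but takes a genuinely different route from the paper's. The paper works in the direct direction via the characterization Lemma~\ref{lemma:filtered_character}, which gives an iff-condition for $b_k=\bot$ in terms of (i)~absence of matches $w|_{[i,k']}\in L(\mathcal{A})$ with suitable counter residues for all $k'\in[k,k+N-1]$, and (ii)~a ``stuck'' condition on all length-$mN$ subwords ending in $[k,k+N-1]$; assuming these hold for buffer size $N'$, the paper then manipulates the modular residues to derive the same two conditions for $nN'$, with divisibility entering when a putative match for the large filter is either absorbed by the small filter's residue condition or ruled out altogether via the stuck condition. Your contrapositive $P_{nN'}\subseteq P_{N'}$ is instead operational: you trace which branch of~(\ref{eq:constFilterUntimed3}) first labels $a_k$ with $\pass$ in the big filter, dispatch the match branch by invoking soundness (Theorem~\ref{thm:nfa_soundness}) as a black box, and handle the overflow branch by re-running the length-$m\cdot nN'$ witnessing run inside the small filter, where divisibility makes the small-filter overflows at steps $i'+qN'-1$ (for $q=1,\dotsc,mn$) tile $[i',j']$ exactly. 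Your approach makes the role of the hypothesis $N'\mid nN'$ more transparent and reuses an already-proved theorem, whereas the paper's is more self-contained (needing only Lemma~\ref{lemma:filtered_character}). The bookkeeping you flag is real but routine, and your treatment of wraparound in the match branch is fine: since the second case of~(\ref{eq:constFilterUntimed3}) only fires when the first does not, no counter equals $nN'$, hence $\psi<nN'$ and any witnessing accepting run has length $\ell\ge\psi$, giving a genuine match interval $[j'-\ell+1,j']\supseteq[j'-\psi+1,j']\ni k$.
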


As mentioned in Prop.~\ref{prop:filter_state_space}, the state space of
our filter Moore machine is exponentially bigger than that of  $\mathcal{A}$. This is because of the
powerset construction required by its deterministic branching.
By sacrificing the execution time, one can determinize the NFA on the
fly, which usually needs less memory space.
See Appendix~\ref{appendix:nfa_filtering}
of~\cite{AnonymousExtendedVersion}.
% for the illustration.
% \marginpar{We need to decide at a certain stage what to include in the appendices}

\section{Moore-Machine Filtering for Pattern Matching II: Timed}
\label{sec:timed}
We present our main contribution, that is, the construction of filter Moore machines for \emph{timed} pattern matching. While the basic ideas stay the same as in the untimed setting (\S{}\ref{sec:untimed}), \emph{determinization} poses a technical challenge, since timed automata (TA) cannot determinized in general~\cite{DBLP:journals/ipl/Tripakis06}. Here we rely on  \emph{one-clock determinization}~\cite[\S{}5.3]{DBLP:journals/fmsd/KrichenT09}. The construction overapproximates reachability, hence we  maintain soundness of filtering. Moreover, the local nature of the resulting TA---it has only one clock variable that is reset at every transition---allows us to devise a filter that is an \emph{untimed}, \emph{finite-state} Moore machine. 
%  We note that our filter is an (untimed) Moore machine, even in the current timed setting. 

\subsection{Problem Formulation}
%The following definitions are standard~\cite{Alur1994}. 

\begin{definition}[timed words]
 Let $\Sigma$ be an alphabet. A \emph{timed word} over $\Sigma$ is a
 sequence $w$ of pairs $(a_i,\tau_i) \in \Sigma \times \Rpos$
 satisfying $\tau_i < \tau_{i + 1}$ for any $i \in [1,|w|-1]$.
 Let $w = (\overline{a},\overline{\tau})$ be a timed word.
The set of timed words over $\Sigma$ is
 denoted by $\mathcal{T}(\Sigma)$.

 We denote the subsequence $(a_i, \tau_i),(a_{i+1},
 \tau_{i+1}),\cdots,(a_j,\tau_j)$ by $w (i,j)$.
 For $t \in \Rnn$, the \emph{$t$-shift} of $w$ is
 $(\overline{a}, \overline{\tau}) + t = (\overline{a}, \overline{\tau} +
 t)$ where
 $\overline{\tau} + t = (\tau_1 + t,\tau_2 + t,\cdots, \tau_{|\tau|} + t)$.
 For timed words $w = (\overline{a},\overline{\tau})$ and
 $w' = (\overline{a'},\overline{\tau'})$,
 their \emph{absorbing concatenation} is
 $w \circ w' = (\overline{a} \circ \overline{a'}, \overline{\tau} \circ
 \overline{\tau'})$ where $\overline{a} \circ \overline{a'}$ and 
 $\overline{\tau} \circ \overline{\tau'}$ are usual concatenations, and 
 their \emph{non-absorbing concatenation} is
 $w \cdot w' = w \circ (w' + \tau_{|w|})$.
 We note that the absorbing concatenation $w \circ w'$ is defined only
 when $\tau_{|w|} < \tau'_{1}$.

 For a timed word $w = (\overline{a}, \overline{\tau})$ on $\Sigma$
 and 
 $t,t' \in \R_{>0}$ satisfying $t < t'$, a \emph{timed word segment}
 $w|_{(t,t')}$ is defined by the timed word $(w(i,j) - t) \circ (\$,t'-t)$
 on the augmented alphabet $\Sigma \amalg \{\$\}$,
 where $i,j$ are chosen so that
 $\tau_{i-1} \leq t < \tau_i$ and
 $\tau_{j} < t' \leq \tau_{j+1}$.
 Here the timed word  $w(i,j) - t$ is the $(-t)$-shift of $w(i,j)$; and  the fresh symbol ${\$}$ is called the \emph{terminal character}. 

\end{definition}

\begin{definition}[timed automaton]
%[syntax of timed automata]
\label{def:semantics_ta}
Let $C$ be  a finite set of \emph{clock variables}, and $\Phi (C)$ denote the set of
 conjunctions of inequalities $x \bowtie c$ where $x \in C$, $c \in
 \Z_{\geq 0}$, and ${\bowtie} \in \{>,\geq,<,\leq\}$.
 A \emph{timed automaton}
 $\mathcal{A} = (\Sigma,S,s_0,S_{F},C,E)$
is a tuple where $\Sigma$ is an alphabet,
 $S$ is a finite set of states, $s_0 \in S$ is an initial state,
 $S_{F} \subseteq S$ is a set of accepting
 states, and
 $E \subseteq S \times S \times \Sigma \times \mathcal{P}(C) \times \Phi(C)$
 is a set of transitions. The components of a transition $(s,s',a,\lambda,\delta)\in E$ represent:  the source,  target,  action, reset variables and guard of the transition, respectively.

% For a set of clock variables $C$, 
We define a \emph{clock valuation} 
 $\nu$ as a function $\nu: C \to \R_{\geq 0}$.
% For a clock valuation $\nu$ and $t \in \R_{\geq 0}$, 
We define the \emph{$t$-shift} $\nu + t$  of a clock
 valuation $\nu$, where $t \in \R_{\geq 0}$, by  $(\nu + t) (x) = \nu (x) + t$ for any $x \in C$.
 For a timed automaton $\mathcal{A} = (\Sigma,S,s_0,S_{F},C,E)$ and a timed
 word $w = (\overline{a},\overline{\tau})$, a \emph{run} of $\mathcal{A}$
 over $w$ is a sequence $r$ of pairs 
 $(s_i, \nu_i) \in S \times (\R_{\geq 0})^C$ satisfying the following:
 (initiation) $s_0$ is the initial state  and
 $\nu_0 (x) = 0$ for any $x \in C$; and
 (consecution) for any $i \in [1,|w|]$, there
 exists a transition $(s_{i-1}, s_i, a_i, \lambda, \delta) \in E$
 such that $\nu_{i-1} + \tau_i - \tau_{i-1} \models \delta$ and
 $\nu_i (x) = 0$  (for $x \in \lambda$) and 
 $\nu_i (x) = \nu_{i-1} (x) + \tau_i - \tau_{i-1}$ (for $x \not\in \lambda$).
 A run only satisfying the consecution condition is a \emph{path}.
 A run $r = (\overline{s},\overline{\nu})$ is \emph{accepting} if the last element $s_{|s|-1}$ of $s$ belongs to $S_{F}$. 
% $s_{|s|-1} \in F$. 
 The \emph{language} $L (\mathcal{A})$
 is defined to be the set
 $\{w \mid \text{ there is an accepting run of $\mathcal{A}$ over $w$}\}$ of timed words.
\end{definition}

Here is our target problem. Its algorithms  have been actively studied~\cite{DBLP:conf/formats/UlusFAM14,DBLP:conf/tacas/UlusFAM16,DBLP:conf/formats/WagaHS17}; filtering Moore machines as  preprocessors for those algorithms is this paper's contribution. 

 % Our goal  is to construct a filter that masks out irrelevant time-stamped characters in the input timed word, much like in the untimed setting (\S{}\ref{sec:untimed})
%. 
\begin{definition}[timed pattern matching]\label{def:TimedPatternMatching}
Let $\mathcal{A}$
be a timed automaton, and  $\str$ be a timed word,  over a common  alphabet $\Sigma$. The \emph{timed pattern matching} problem requires
all the intervals $(t,t')$ for which the segment  $w|_{(t,t')}$ is 
 accepted by $\mathcal{A}$. That is, it requires
 the \emph{match set} $\Match
 (w,\mathcal{A}) = \{(t,t') \mid w|_{(t,t')} \in L (\mathcal{A})\}$.
\end{definition}

\subsection{One-Clock Determinization of TA}
Among the three main building blocks for our untimed construction of filter (Def.~\ref{def:filterConstructionUntimed}), \emph{counters} and a \emph{buffer} carry over smoothly to the current timed setting. For \emph{determinization} we rely on the overapproximating notion in Def.~\ref{def:oneClockDet}; it is taken from~\cite[\S{}5.3]{DBLP:journals/fmsd/KrichenT09}.

We start with some auxiliary notations. 
\begin{definition}[restriction $\nu|_{C}$, join $\nu\sqcup\nu'$]
 Let $\nu\colon C' \to \R_{\geq 0}$ be a clock valuation. The  restriction of $\nu$
to  $C\subseteq C'$ is denoted by $\nu|_{C}\colon C\to \R_{\geq 0}$. That is, $(\nu|_{C})(x)=\nu(x)$ for each $x\in C$. 

 Let $\nu\colon C \to \R_{\geq 0}$ and  $\nu'\colon C' \to \R_{\geq 0}$ be clock variations. We define their \emph{join} $\nu\sqcup\nu'\colon C\amalg C'\to \R_{\geq 0}$ to be the following clock valuation over the disjoint union $C\amalg C'$. 
\begin{align*}
% & \nu\sqcup\nu'\colon C\amalg C'\to \R_{\geq 0}, 
% \\
%&
(\nu\sqcup\nu')(x)=
%\nu(x) \quad\text{if $x\in C$,}
%\qquad
% (\nu\sqcup\nu')(x)=
%\nu'(x) \quad\text{if $x\in C'$.}
% \qquad
\begin{cases}
 \nu(x) &\text{if $x\in C$}
\\
 \nu'(x) &\text{if $x\in C'$.}
\end{cases}
\end{align*}

 The function that maps $x_{i}$ to $r_{i}$ (for each $i\in \{1,\dotsc,n\}$) is denoted by $[x_{1}\mapsto r_{1},\dotsc, x_{n}\mapsto r_{n}]$.
\end{definition}

\begin{definition}[one-clock determinization] \label{def:oneClockDet}
 Let $\mathcal{A}=(\Sigma,S,s_0,S_F,C,E)$ be a timed automaton (TA), and $y$ be a fresh clock variable (i.e.\ $y\not\in C$). 
A TA
 $\mathcal{A}'=(\Sigma,S',s'_0,S'_F,\{y\},E')$
is a \emph{one-clock determinization} of $\mathcal{A}$ if the following conditions hold. 
\begin{enumerate}
 \item Each element $\mathcal{S}\in S'$ of the (new, finite) state space $S'$ is a finite set $\mathcal{S}=\{(s_{1},Z_{1}),\dotsc, (s_{m},Z_{m})\}$ of pairs $(s_{i},Z_{i})$, where $s_{i}\in S$ is a state of $\mathcal{A}$, and $Z_{i}$ is a subset of $(\R_{\geq 0})^{C\amalg\{y\}}$ given by a special polytope called a \emph{zone} (see e.g.~\cite{BehrmannBLP06}). 
 \item For each transition $(\mathcal{S},a,\delta,\lambda,\mathcal{S}')\in E'$ of $\mathcal{A}'$, the guard $\delta$ is  a finite union of intervals of $y$. Moreover it respects enabledness of transitions $E$ in $\mathcal{A}$. Precisely, for any $u,u'\in \R_{\geq 0}$ that satisfy $\delta$, we have $E_{a}(\mathcal{S},u)=E_{a}(\mathcal{S},u')$, where the set $E_{a}(\mathcal{S},u)\subseteq E$ is defined by
\begin{multline*}
 E_{a}(\mathcal{S},u)
 =\bigl\{\,(s,a,\delta',\lambda',s') \in E\;\big|\;
\\ \exists (s,Z)\in \mathcal{S}.\, \exists \nu\in Z.\, \nu(y)=u \text{ and $\nu$ satisfies $\delta'$} \,\bigr\}\enspace.
\end{multline*}

 \item\label{item:yIsReset} Each transition of $\mathcal{A}'$ resets the unique clock variable $y$. That is, for each transition $(\mathcal{S},a,\delta,\lambda,\mathcal{S}')\in E'$, 
we have
       $\lambda = \{y\}$.
 \item Each transition $(\mathcal{S},a,\delta,\lambda,\mathcal{S}')\in E'$ of $\mathcal{A}'$ simulates transitions of $\mathcal{A}$. More precisely, let $(s,Z)\in \mathcal{S}$ and $(\nu\colon C\amalg\{y\}\to \R_{\geq 0}) \in Z$. Assume that $(s,\nu|_{C})\xrightarrow{a,\tau}(s',\nu')$ is a (length-1) path of $\mathcal{A}$, for some $s'\in S$ and $\nu'\colon C\to \Rnn$ (here $\tau$ is the dwell time). Then we require that there exist a zone $Z'\subseteq (\R_{\geq 0})^{C\amalg\{y\}}$ such that 1) $(s',Z')\in \mathcal{S}'$ and 2) the valuation $\nu'\sqcup [y\mapsto \tau]$, over the clock set $C\amalg\{y\}$, belongs to $Z'$.

% That is, for any $(s,Z)\in\mathcal{S}$ and $\nu \in Z$, if $\nu$ satisfies $\delta$, then for every $s'\in S$ and $\nu'\in (\R_{\geq 0})^{C}$ such that $((s,s'), (\nu,\nu'))$ is a path over $a$ in $\mathcal{A}$, there exists $(s',Z')\in \mathcal{S}'$ such that $\nu'$ (augmented with the valuation of $y$ into the elapsed time) belongs to $Z'$.
 \item $\mathcal{A}'$ is deterministic: for each state $\mathcal{S}\in S'$, 
each clock valuation $\nu\in (\R_{\geq 0})^{\{y\}}$,
% (that turns to be $\mathbf{0}$ because every transition resets $y$), 
$a\in \Sigma$ and $\tau\in \Rnn$ for the dwell time, a length-1 path
from $(\mathcal{S},\nu)$ labeled with $a,\tau$ is unique. That is, if
$(\mathcal{S},\nu)\xrightarrow{a,\tau} (\mathcal{S}',\nu')$ and
$(\mathcal{S},\nu)\xrightarrow{a,\tau} (\mathcal{S}'',\nu'')$ are both paths in $\mathcal{A}'$, then $\mathcal{S}'=\mathcal{S}''$ and $\nu'=\nu''$. (Note that Cond.~\ref{item:yIsReset} forces $\nu'=\nu''=[y\mapsto 0]$.)
% an enabled $(a,$-labeled transition from the pair $(\mathcal{S},$ is unique.
 \item The initial state $s'_{0}$ of $\mathcal{A}'$ is given by $s'_{0}=\{(s_{0}, \{\mathbf{0}\})\}$. Here $\mathbf{0}$ is the valuation that maps every clock variable to $0$. 
 \item A state $\mathcal{S}$ belongs to $S'_{F}$ if and only if there exists $(s,Z)\in \mathcal{S}$ such that $s\in S_{F}$. 
\end{enumerate}
\end{definition}

% The following property is crucial in what follows.

\begin{proposition}\label{prop:propertiesOfOneClockDeterminization}
Let $\mathcal{A}=(\Sigma,S,s_0,S_F,C,E)$ be a TA, and
%~\cite{DBLP:journals/fmsd/KrichenT09}
 $\mathcal{A}'=(\Sigma,S',s'_0,S'_F,\{y\},E')$
be a one-clock determinization of $\mathcal{A}$. 
 Then $\mathcal{A}'$
%, probably we have the following properties.
satisfies the following properties. 
\begin{itemize}
 % \item For any $w\in\mathcal{T}(\Sigma)$, $w \in L(\mathcal{A})$ implies 
 %       $w \in L(\mathcal{A}')$.
 %       \begin{itemize}
 %        \item Because of the following simulation.
 %       \end{itemize}
 \item  (Simulation) Let $w\in\mathcal{T}(\Sigma)$ be a timed word, and assume that there exists a run over $w$ to a state $s\in S$ in $\mathcal{A}$. Then there exists $\mathcal{S}\in S'$ that satisfies the following: 1) $(s,Z) \in \mathcal{S}$ for some zone $Z$, and 2) there exists a run over $w$ to $\mathcal{S}$ in $\mathcal{A}'$. 
 \item (Language inclusion) In particular, $L(\mathcal{A})\subseteq L(\mathcal{A}')$.
\qed
 % if there exists $s \in S$ such
 %       that $s_0\xrightarrow{w}s$ in $\mathcal{A}$, we have 
 %       $(s,Z) \in \mathcal{S}$ where $\mathcal{S}\in S'$ such that
 %       that $s'_0\xrightarrow{w}\mathcal{S}$ in
 %       $\mathcal{A}'$.
       % \begin{itemize}
       %  \item By definition of transitions in $\mathcal{A}$, we have 
       %        $(S,S',\delta,\{y\},a)$ where $S' = \mathsf{usucc}(\mathsf{dsucc}(S\cap\delta,a))$.
       % \end{itemize}
       % \begin{itemize}
       %  \item By definition of transitions.
       % \end{itemize}
 % \item (Guards) For any transition $(s,a,\delta,\lambda,s') \in E'$, 
 %       its guard $\delta$ is a finite union of intervals of $y$.
       % \begin{itemize}
       %  \item By definition of transitions.
       % \end{itemize}
\end{itemize}
\end{proposition}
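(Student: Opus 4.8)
The \emph{language inclusion} clause is an immediate corollary of \emph{simulation}: given $w\in L(\mathcal{A})$, fix an accepting run of $\mathcal{A}$ over $w$, so it ends in some $s\in S_{F}$; by simulation there is a run of $\mathcal{A}'$ over $w$ reaching some $\mathcal{S}\in S'$ with $(s,Z)\in\mathcal{S}$ for a zone $Z$, and since $s\in S_{F}$ the acceptance condition (Cond.\ 7 of Def.~\ref{def:oneClockDet}) gives $\mathcal{S}\in S'_{F}$, hence $w\in L(\mathcal{A}')$. So the entire task is to prove simulation. Conceptually this amounts to exhibiting a forward simulation from $\mathcal{A}$ to $\mathcal{A}'$; concretely I would carry it out by induction on $|w|$.

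The induction must be strengthened so that it records not merely \emph{which} state $\mathcal{S}\in S'$ is reached but also \emph{which valuation lies in the attached zone}. Precisely, I would prove: for every timed word $w=(a_{1},\tau_{1})\cdots(a_{n},\tau_{n})$ and every run $(s_{0},\nu_{0}),\dots,(s_{n},\nu_{n})$ of $\mathcal{A}$ over $w$, there is a run $(\mathcal{S}_{0},\mu_{0}),\dots,(\mathcal{S}_{n},\mu_{n})$ of $\mathcal{A}'$ over $w$ with $(s_{n},Z)\in\mathcal{S}_{n}$ for some zone $Z$ such that $\nu_{n}\sqcup[y\mapsto\tau_{n}-\tau_{n-1}]\in Z$ (setting $\tau_{0}:=0$; for $n=0$ this reads $\nu_{0}\sqcup[y\mapsto 0]\in Z$). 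The case $n=0$ is exactly the initial-state condition $s'_{0}=\{(s_{0},\{\mathbf{0}\})\}$ (Cond.\ 6), taking $\mathcal{S}_{0}=s'_{0}$ and $Z=\{\mathbf{0}\}$, since $\nu_{0}\sqcup[y\mapsto 0]=\mathbf{0}$. Note also that by Cond.~\ref{item:yIsReset} and the initiation requirement on runs, $\mu_{i}(y)=0$ for every $i$, so the guard tested at step $i$ of the $\mathcal{A}'$-run is just $y=\tau_{i}-\tau_{i-1}$.

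For the inductive step, let $(s_{n-1},s_{n},a_{n},\lambda,\delta')\in E$ be the last transition used by the $\mathcal{A}$-run; it satisfies $\nu_{n-1}+(\tau_{n}-\tau_{n-1})\models\delta'$ and yields $\nu_{n}$ by resetting $\lambda$, so in particular $(s_{n-1},\nu_{n-1})\xrightarrow{a_{n},\tau_{n}-\tau_{n-1}}(s_{n},\nu_{n})$ is a length-$1$ path of $\mathcal{A}$. Applying the strengthened hypothesis to the length-$(n-1)$ prefix gives $\mathcal{S}_{n-1}$ with $(s_{n-1},Z_{n-1})\in\mathcal{S}_{n-1}$ and a valuation $\nu\in Z_{n-1}$ with $\nu|_{C}=\nu_{n-1}$. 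By the determinism condition (Cond.\ 5), read as: for each $\mathcal{S}$, each $a$ and each dwell time $\tau$ there is exactly one length-$1$ path of $\mathcal{A}'$ from $(\mathcal{S},[y\mapsto 0])$ on $a$ with dwell time $\tau$ (if $\mathcal{A}'$ is not assumed complete, first totalise it with a sink state, which changes neither its language nor this property), there is a unique transition $(\mathcal{S}_{n-1},a_{n},\delta,\{y\},\mathcal{S}_{n})\in E'$ with $\tau_{n}-\tau_{n-1}\models\delta$; appending $\mathcal{S}_{n}$ extends the $\mathcal{A}'$-run over $w$. Finally, apply the transition-simulation condition (Cond.\ 4) to this transition, to the pair $(s_{n-1},Z_{n-1})\in\mathcal{S}_{n-1}$, to $\nu\in Z_{n-1}$ above, and to the path $(s_{n-1},\nu_{n-1})\xrightarrow{a_{n},\tau_{n}-\tau_{n-1}}(s_{n},\nu_{n})$: it supplies a zone $Z$ with $(s_{n},Z)\in\mathcal{S}_{n}$ and $\nu_{n}\sqcup[y\mapsto\tau_{n}-\tau_{n-1}]\in Z$, which re-establishes the strengthened hypothesis at length $n$.

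The hard part will be the clock bookkeeping around $y$: pinning down exactly which value of $y$ the attached zones must record in the strengthened hypothesis so that (i) it matches the defining $s'_{0}=\{(s_{0},\{\mathbf{0}\})\}$ in the base case, (ii) it feeds an input of the right shape into Cond.\ 4 at length $n-1$, and (iii) the conclusion of Cond.\ 4 at length $n$ is again of that shape — i.e.\ reconciling the fact that every transition of $\mathcal{A}'$ resets $y$ (Cond.~\ref{item:yIsReset}) with the fact that Cond.\ 4 reports the last dwell time in the $y$-coordinate of the new zone. A smaller secondary point is existence (not just uniqueness) of the matching $\mathcal{A}'$-transition at each step, which is why the determinism condition must be read as — or the automaton completed to — a \emph{total} deterministic one.
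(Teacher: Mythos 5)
The paper never proves this proposition: it is stated without proof (the box right after the statement) and implicitly delegated to the construction of Krichen--Tripakis, so there is no in-paper argument to compare against; your proof is the natural one and its main line is correct. Deriving language inclusion from simulation via Cond.~7 is right, and your strengthened induction hypothesis is exactly what is needed: the hypothesis side of Cond.~4 only uses $\nu|_{C}$, so it suffices that the zone attached to $s_{n-1}$ inside $\mathcal{S}_{n-1}$ contain \emph{some} valuation restricting to $\nu_{n-1}$, while the conclusion side of Cond.~4 returns precisely your invariant (with $y$ recording the last dwell time), so the induction closes; the base case is Cond.~6, and Cond.~3 guarantees that the $\mathcal{A}'$-run's own clock value is $0$ before each step, so Cond.~5 is invoked at the valuation $[y\mapsto 0]$ with dwell time $\tau_{n}-\tau_{n-1}$, matching the semantics of runs.

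The one point you should fix is the parenthetical fallback ``totalise with a sink state'': it cannot substitute for reading Cond.~5 as \emph{exists and is unique}. The completed automaton's new transitions lead to a sink containing no pairs $(s,Z)$, so it violates Cond.~4 exactly in the situation where such a transition would be taken, and your inductive step --- which applies Cond.~4 to the transition actually used --- breaks there; besides, the claim concerns $\mathcal{A}'$ itself, not its completion. In fact, under a pure-uniqueness reading of Cond.~5 the proposition is simply false: the automaton with the single state $\{(s_{0},\{\mathbf{0}\})\}$ and no transitions satisfies Conds.~1--7 vacuously yet simulates nothing. So the existence of an enabled transition for every $(\mathcal{S},a,\tau)$ has to be taken as part of what ``deterministic'' means in Def.~\ref{def:oneClockDet} (as it is in the concrete construction of~\cite[\S{}5.3.4]{DBLP:journals/fmsd/KrichenT09}, where the guards out of each state on each letter partition $\Rnn$). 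With that reading made explicit --- which you do offer as your primary reading --- your proof is complete; just drop the sink-state remark.
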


Note that Def.~\ref{def:oneClockDet} gives a property and not a
construction: there are multiple one-clock determinizations---of varying
size and precision---of the same TA $\mathcal{A}$. In our implementation
we use a specific construction presented
in~\cite[\S{}5.3.4]{DBLP:journals/fmsd/KrichenT09}. 
A sketch is in Appendix~\ref{appendix:one_clock_determinization}
of~\cite{AnonymousExtendedVersion}.

\subsection{Construction of Our Filter $\mathcal{M}_{\mathcal{A},N}$}
\begin{definition}[the filter $\mathcal{M}_{\mathcal{A},N}$ for timed pattern matching]
\label{def:filterConstructionTimed}
Let  $\mathcal{A} = (\Sigma, S, s_0, S_F, C, E)$ be a TA, and $N\in\Zp$.
The construction of the filter Moore machine $\mathcal{M}_{\mathcal{A},N}$ is by the following steps.

In the first step we augment the original TA
 $\mathcal{A}$ with counters. Specifically, let $\mathcal{A}^{N\text{-}\mathrm{ctr}}
=(\Sigma_{\bot}, S\times
% (\mathbb{Z}/N\mathbb{Z})
[0,N]
,(s_0,0), S^{N\text{-}\mathrm{ctr}}_F, C,E^{N\text{-}\mathrm{ctr}})$ be defined as follows.
 $S^{N\text{-}\mathrm{ctr}}_F = \{(s_f,n)\mid s_f\in S_F, n \in
% (\mathbb{Z}/N\mathbb{Z})
[0,N]
\}$, and
 \begin{align*}\small
\begin{array}{l}
   E^{N\text{-}\mathrm{ctr}} = \bigl\{\,\bigl((s_0,0),a,\mathbf{true},C,(s_0,0)\bigr) \;\big|\; a \in \Sigma_{\bot}\,\bigr\}\\
  \quad\cup\bigl\{\,\bigl((s,n),a,\delta,\lambda,(s',n+1)\bigr)\;\big|\;
  (s,a,\delta,\lambda,s')\in E, n \in [0,N-1]\,\bigr\}\\
  \quad\cup \bigl\{\,\bigl((s,N),a,\delta,\lambda,(s',1)\bigr)\;\big|\;
  (s,a,\delta,\lambda,s')\in E\,\bigr\}\enspace.
\end{array}
%  \cup \bigl\{((s,n),\bot,\mathbf{true},\emptyset,(s_{\bot},0))\mid s\in S,
%  n \in [0,N\bigr])\}
%  \cup \bigl\{((s_{\bot},0),\bot,\mathbf{true},\emptyset,(s_{\bot},0))\bigr\}
 \end{align*}

In the second step we take a one-clock determinization (Def.~\ref{def:oneClockDet}) of $\mathcal{A}^{N\text{-}\mathrm{ctr}}$.
Let %denoted by 
\begin{math}
 \mathcal{A}^{N\text{-}\mathrm{ctr\text{-}d}}=
(\Sigma_{\bot}, S^{N\text{-}\mathrm{ctr\text{-}d}}, s_{0}^{N\text{-}\mathrm{ctr\text{-}d}},S_{F}^{N\text{-}\mathrm{ctr\text{-}d}}, \{y\}, E^{N\text{-}\mathrm{ctr\text{-}d}})\end{math}
be the outcome. 
% We introduce the following auxiliary function $\xi\colon S^{N\text{-}\mathrm{ctr\text{-}d}}\to (\mathbb{Z}/N\mathbb{Z})$. (Note the similarity to (\ref{eq:constFilterUntimed3}) in Def.~\ref{def:filterConstructionUntimed}.) 
%  \begin{displaymath}
%   \xi(\mathcal{S}) =
%   \begin{cases}
%    N &\text{if  $\exists s, Z.\,  \bigl((s, N),Z\bigr) \in \mathcal{S}$}\\
%    \max\{n\mid ((s,n),Z)\in \mathcal{S}, s \in S_F\} &\text{else if 
%    $\exists s,n, Z.\,  ((s,n),Z) \in \mathcal{S} \text{ and } s \in S_F$}\\
%    0 & \text{otherwise}.
%   \end{cases}
%  \end{displaymath}

Finally in the third step we define the Moore machine $\mathcal{M}_{\mathcal{A},N}$:
\begin{align*}
\begin{array}{r}
 \mathcal{M}_{\mathcal{A},N} 
 \;=\;
 \Bigl(\,\Sigma_{\bot}\times \Rnn,\, \{\pass,\mask\},\, S^{N\text{-}\mathrm{ctr\text{-}d}}\times \{\pass,\mask\}^{N},\,
 \\
 \bigl(s_{0}^{N\text{-}\mathrm{ctr\text{-}d}},(\mask,\dotsc,\mask)\bigr), \,
 \Delta,\, \Lambda
 \,\Bigr)\enspace,
\end{array}
\end{align*}
where $\Delta$ and $\Lambda$ are defined as follows. 
\begin{math}
 \Delta\bigl(\,(\mathcal{S}, \overline{l}),\, (a,\tau)\,\bigr)
 = (\mathcal{S'}, \overline{l}')
\end{math}, where $\mathcal{S'}$ is the unique successor of $\mathcal{S}$ in 
$\mathcal{A}^{N\text{-}\mathrm{ctr\text{-}d}}$
under the character $a$ and the dwell time $\tau$ (Def.~\ref{def:oneClockDet}), and $\overline{l}'$ is defined as follows.
\begin{displaymath}\footnotesize
\begin{array}{l}
  \overline{l}' = 
 \begin{cases}
 %    N &\text{if  $\exists s, Z.\,  \bigl((s, N),Z\bigr) \in \mathcal{S}$}\\
 %    \max\{n\mid ((s,n),Z)\in \mathcal{S}, s \in S_F\} &\text{else if 
 %    $\exists s,n, Z.\,  ((s,n),Z) \in \mathcal{S} \text{ and } s \in S_F$}\\
 %    0 & \text{otherwise}.
  \pass^{N}&\text{if  $\exists s, Z.\,  \bigl((s, N),Z\bigr) \in \mathcal{S}$}\\
  l_2,l_3,\dots,l_{N-\psi(\mathcal{S}')+1},\overbrace{\pass,\dotsc,\pass}^{\psi(\mathcal{S}')}&
  \text{else if 
    $\exists ((s,n),Z) \in \mathcal{S}.\, s \in S_F$
  }\\
  l_2,l_3,\dots,l_{N},\mask&\text{otherwise}
 \end{cases}
\end{array}
\end{displaymath}
Here $\psi(\mathcal{S}')=\max\{n\mid \exists s, Z.\,  ((s,n),Z) \in \mathcal{S} \text{ and } s \in S_F \}$. We define \begin{math}
 \Lambda\bigl((\mathcal{S},(l_1,l_2,\dots,l_{N}))\bigr)=l_1
\end{math}.
\end{definition}

\begin{wrapfigure}{r}{0pt}
%\begin{figure}[tbp] 
% \centering
 \includegraphics[clip,width=.50\linewidth]{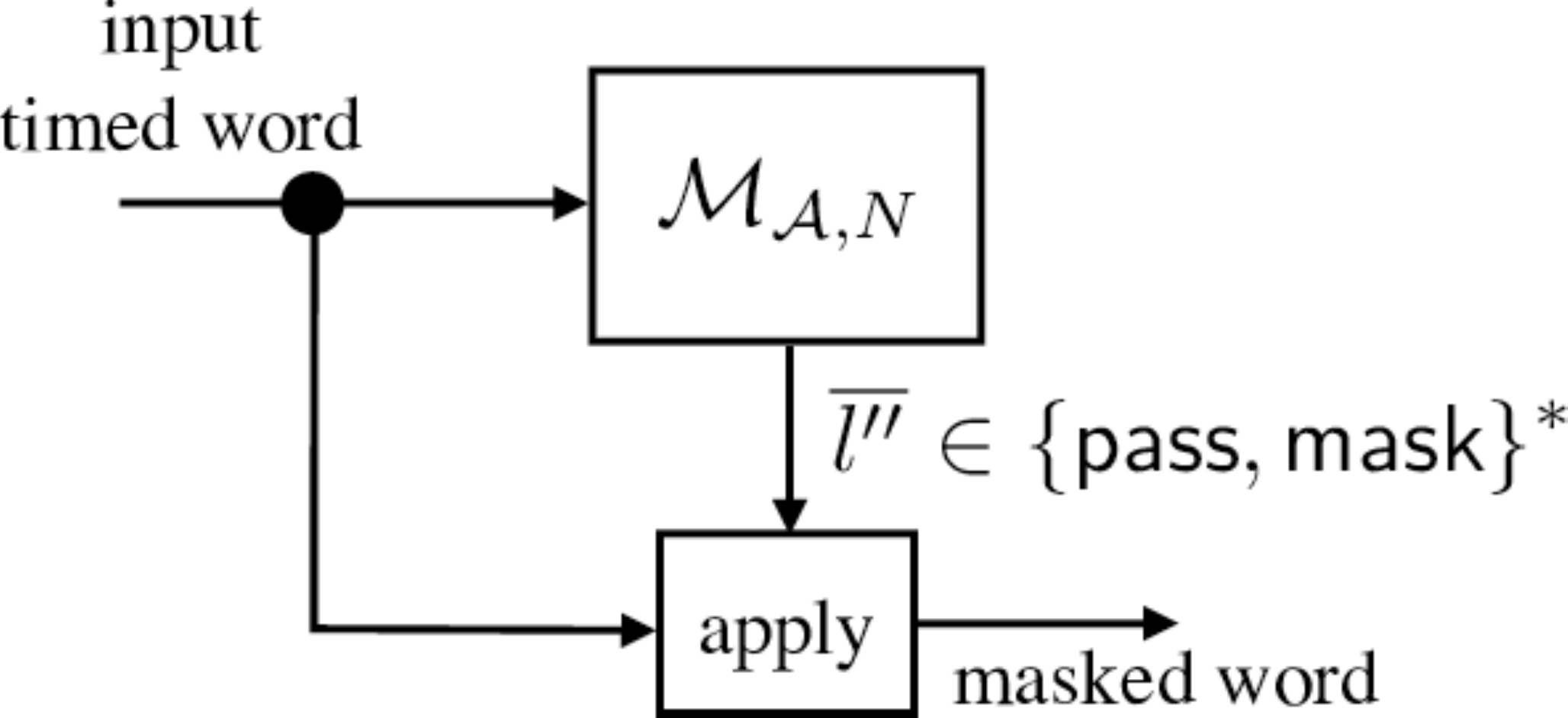}
  % \caption{Padding in filtering}
  % \label{fig:filteringPadding}
%\end{figure}
\end{wrapfigure}
Note that the resulting Moore machine takes timed words as input.
%, thinking of a time-stamped character such as $(\mathtt{a}, 0.2)$ as one character. 
This makes its input alphabet infinite (namely $\Sigma_{\bot}\times \Rnn$). This is not a big issue in implementation: we note that the state space is still finite. Moreover, the output alphabet of $\mathcal{M}_{\mathcal{A},N}$ is a two-element set $\{\pass,\mask\}$, meaning that the filter Moore machine only outputs the masking information. A finite-state Moore machine is certainly not capable of buffering time-stamped characters, but the original timed word can be copied and suitable masking can be applied to it later. See the above figure. 

\begin{theorem}[soundness]
 \label{thm:timed_soundness}
 Let $\mathcal{A}$ be a pattern TA $\mathcal{A} = (\Sigma, S, s_0, S_F, C, E)$,
 $N$ be a positive integer, and
\begin{math}
 \mathcal{M}_{\mathcal{A},N}
% =\Bigl(\,\Sigma_{\bot}\times \Rnn,\, \{\pass,\mask\},\, S^{N\text{-}\mathrm{ctr\text{-}d}}\times \{\pass,\mask\}^{N},\, \bigl(s_{0}^{N\text{-}\mathrm{ctr\text{-}d}},(\mask,\dotsc,\mask)\bigr), \,
% \Delta,\, \Lambda
% \,\Bigr)\enspace,
\end{math}
 be the filter Moore machine in Def.~\ref{def:filterConstructionTimed}.
 Let $w=(a_1,\tau_1) (a_2,\tau_2)\dotsc (a_n,\tau_n)$ be a timed word
 over $\Sigma$, and 
 $\mask^{N} w'$ be the
 output word of $\mathcal{M}_{\mathcal{A},N}$ over the input word
 $w (\bot,\tau_{n})^{N}$
(here the input and output words are padded by $\mask^{N}$ and $(\bot,\tau_{n})^{N}$, much like in Fig.~\ref{fig:filteringPadding}). Let $w'=b_1 b_2\dots b_n$, where
 $b_{k}\in\{\pass,\mask\}$.

 For any pair $(i,j)$ of indices of $w$ satisfying
 $w(i,j) - \tau_{i-1} \in L(\mathcal{A})$ and for any index
 $k\in[i,j]$ of $w$, we have $b_k = \pass$.
 \qed
\end{theorem}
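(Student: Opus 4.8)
The plan is to follow the structure of the untimed soundness proof (Theorem~\ref{thm:nfa_soundness}), the one new ingredient being the simulation property of one-clock determinization (Prop.~\ref{prop:propertiesOfOneClockDeterminization}) applied to the counter-augmented automaton $\mathcal{A}^{N\text{-}\mathrm{ctr}}$ of Def.~\ref{def:filterConstructionTimed}. Write the run of $\mathcal{M}_{\mathcal{A},N}$ over $w\cdot(\bot,\tau_n)^{N}$ as $q_0 q_1\dots q_{n+N}$ with $q_m=(\mathcal{S}_m,\overline{l}^{(m)})$ and $\overline{l}^{(m)}=(l^{(m)}_1,\dots,l^{(m)}_N)$, so that $b_k=l^{(N-1+k)}_1$. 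I would first record two elementary facts about the $\overline{l}'$-update of Def.~\ref{def:filterConstructionTimed}. (i) Every one of its three cases shifts the buffer one cell to the left, so the cell that is appended while $\overline{l}^{(m)}$ is computed occupies index $k-m'+N$ at each later step $m'\in[k,k+N-1]$ and is output as $b_k$ at $m'=N-1+k$. (ii) In every case one has $l'_p\in\{l_{p+1},\pass\}$ for $1\le p\le N-1$; hence the predicate ``$l^{(m)}_{k-m+N}=\pass$'' is monotone in $m$ over $[k,k+N-1]$, so $b_k=\pass$ holds as soon as there is a single step $m\in[k,k+N-1]$ at which the update sets index $k-m+N$ to $\pass$ --- that is, at which either the first case applies, or the second applies with $\psi(\mathcal{S}_m)\ge m-k+1$.

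Next, given a match $(i,j)$, i.e.\ $w(i,j)-\tau_{i-1}\in L(\mathcal{A})$, I would lift an accepting run of $\mathcal{A}$ over the segment $w(i,j)-\tau_{i-1}$ into $\mathcal{A}^{N\text{-}\mathrm{ctr}}$: stay at $(s_0,0)$ for the first $i-1$ characters using the $\mathbf{true}$-guarded self-loops (which reset all of $C$), then follow the accepting run of $\mathcal{A}$ while incrementing the counter as prescribed. Since those self-loops reset every clock at time $\tau_{i-1}$, the clock valuations (and hence the guards) along $\mathcal{A}$'s run over the $(-\tau_{i-1})$-shifted segment are reproduced verbatim, so all guards stay satisfied; thus for every $m\in[i,j]$ this yields a genuine run of $\mathcal{A}^{N\text{-}\mathrm{ctr}}$ over the prefix $w(1,m)$ ending in $(s_{m-i+1},n_m)$, where $s_{m-i+1}$ is $\mathcal{A}$'s state after $m-i+1$ letters (so $s_{j-i+1}\in S_F$) and $n_m=((m-i)\bmod N)+1$. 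By the Simulation clause of Prop.~\ref{prop:propertiesOfOneClockDeterminization} applied to the one-clock determinization $\mathcal{A}^{N\text{-}\mathrm{ctr\text{-}d}}$, together with its determinism, this forces $\bigl((s_{m-i+1},n_m),Z\bigr)\in\mathcal{S}_m$ for some zone $Z$, for every such $m$. (The case $i=1$ is identical with the convention $\tau_0=0$ and an empty self-loop prefix.)

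Finally I would fix $k\in[i,j]$, let $m_0$ be the unique element of $[k,k+N-1]$ with $m_0\equiv i-1\pmod N$ --- explicitly $m_0=i+N\lfloor(k-i)/N\rfloor+N-1$ --- and split into two cases. If $m_0\le j$, then $n_{m_0}=((m_0-i)\bmod N)+1=N$, so by the previous paragraph $\bigl((s_{m_0-i+1},N),Z\bigr)\in\mathcal{S}_{m_0}$, the first case of the update fires at step $m_0$, and since $m_0\in[k,k+N-1]$ this sets index $k-m_0+N\in[1,N]$ to $\pass$. If $m_0>j$, then $N(\lfloor(k-i)/N\rfloor+1)>j-i+1$ forces $\lfloor(k-i)/N\rfloor\ge\lfloor(j-i)/N\rfloor$, hence $k\ge i+N\lfloor(j-i)/N\rfloor$, hence $j-k\le(j-i)\bmod N=n_j-1$, so $n_j\ge j-k+1$ (and since then $j-k\le N-1$ we also get $j\in[k,k+N-1]$); at step $j$ we have $\bigl((s_{j-i+1},n_j),Z\bigr)\in\mathcal{S}_j$ with $s_{j-i+1}\in S_F$, so either the first case fires or the second fires with $\psi(\mathcal{S}_j)\ge n_j\ge j-k+1$, and in both cases index $k-j+N$ is set to $\pass$. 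In either case, by (i) and (ii) of the first paragraph, $b_k=\pass$.

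The conceptual content is a routine transfer of Theorem~\ref{thm:nfa_soundness} to the timed world. The parts that need care --- and where I expect the main nuisance --- are the modular-arithmetic bookkeeping in the case $m_0>j$ (relating the buffer index $k-j+N$ to the counter value $n_j$), and the verification that the lifted path in $\mathcal{A}^{N\text{-}\mathrm{ctr}}$ really is a run, i.e.\ that the clock-resetting self-loop prefix makes $\mathcal{A}$'s guards reproducible after the shift by $\tau_{i-1}$; but this is the same arithmetic and the same ``reset-everything'' trick that already underlie the untimed construction.
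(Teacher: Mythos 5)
Your proposal is correct and follows essentially the same route as the paper's proof: lift the accepting run into the counter-augmented automaton $\mathcal{A}^{N\text{-}\mathrm{ctr}}$ via the clock-resetting self-loops, transfer it to $\mathcal{A}^{N\text{-}\mathrm{ctr\text{-}d}}$ by the Simulation clause of Prop.~\ref{prop:propertiesOfOneClockDeterminization} (plus determinism), and then use the counter-$N$ events ($\pass^{N}$) together with the accepting-state event at step $j$ (with $\psi(\mathcal{S}_j)\ge n_j$) to cover every $k\in[i,j]$. Your treatment is in fact more explicit than the paper's on the buffer-index/modular-arithmetic bookkeeping and on the monotone persistence of $\pass$ labels, which the paper leaves implicit.
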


\section{Implementation and Experiments}\label{sec:experiments}
%\marginpar{Please include $y=x$ in the graphs}
%\marginpar{Nice to show the state size of the Moore machine}

We implemented our filter construction for \emph{timed} pattern matching.
Our implementation suppresses successive $\bot$'s into two $\bot$'s,
maintaining the timestamp of the first and the last $\bot$'s.
%In our current implementation, 
We generate the buffer part of the state space $Q$ on-the-fly
(namely $\{\pass,\mask\}^N$ of
Def.~\ref{def:filterConstructionTimed}); this is as we discussed before Prop.~\ref{prop:filter_state_space}.
 %so
%the worst-case state size in Prop.~\ref{prop:filter_state_space}
%does not necessarily mean actual blow-up of memory consumption.
We conducted experiments to answer the following research questions.
\begin{itemize}
 \item[RQ1:] Does our filter Moore machine mask many events?
 \item[RQ2:] %Does our filter Moore machine satisfy the online properties?
Is our filter Moore machine online capable?
       That is, does it work in linear time
       and  constant space, with respect to the length of the input timed word?
 \item[RQ3:] Does our filter Moore machine accelerate the whole task of timed pattern matching?
% \item (*** Ichiro: shall we drop this? ``Cheaper than timed pattern matchin'' does not make too much sense, because 1) the experimental results do not show that it's drastically cheaper, and 2) the main processor is likely to be shared with other tasks ***)
%RQ4. Is our filter more suitable to run in a weak hardware than
%       timed pattern matching? Typically, is our filter perform faster
%       than timed pattern matching?

 \item[RQ4:] Is our filter precise? That is, do many of the non-masked events  contribute to actual matches?
 \item[RQ5:] Is our filter responsive? That is, does it not cause large delays?
\end{itemize}
%and
%evaluated its performance using a real-time behavior of system models
%from the automotive domain. Our performance evaluation is conducted from
%the following viewpoint: 1) the ratio of the filtered events, 2) the
%execution time and memory usage of the filter Moore machine, and 3) its
%acceleration of timed pattern matching. 
We implemented our filter construction in C++ and compiled them by clang-900.0.39.2. The tool's input consists of a pattern TA $\mathcal{A}$, the buffer size $N$ and a timed word $w$; it outputs a filtered word.    The experiments
are done on Mac OS 10.13.4, MacBook Pro Early 2013 with 2.6 GHz Intel
Core i5 processor and 8 GB 1600MHz DDR3 RAM. The benchmark problems we
used are in Fig.~\ref{fig:case1_pattern}---\ref{fig:case3_pattern}.
%  (the
% pattern automata $\mathcal{A}$ and the set $W$ of
% target words). 
All of them are from automotive scenarios.

For the measurement of the execution time and the memory usage,  we used
GNU time and took the average of 20 executions.
In each experiment, we measured the whole workflow: 
%i.e., 
in RQ2, the time and memory usage include that for constructing a filter; and in 
%the experiments for
 RQ3, the time and memory usage account for filter construction, filtering, process intercommunication, and pattern matching.
In the experiments for RQ3, we used \monaa~\cite{MONAACode},  a recent tool for timed pattern matching.
See Appendix~\ref{appendix:detailed_results}
of~\cite{AnonymousExtendedVersion} for the detailed results.

\begin{figure}[tbp]
\begin{minipage}{\linewidth}
   % ----- CASE 1 ----------------------------------
  \centering
 \scalebox{0.57}{
 \begin{tikzpicture}[shorten >=1pt,node distance=2.35cm,on grid,auto] 
    \node[state,initial] (s_0)  {$s_0$}; 
    \node[state,node distance=2cm] (s_1) [right=of s_0] {$s_1$}; 
    \node[state] (s_2) [right=of s_1] {$s_2$};
    \node[state] (s_3) [right=of s_2] {$s_3$};
    \node[state] (s_4) [right=of s_3] {$s_4$}; 
    \node[state] (s_5) [right=of s_4] {$s_5$}; 
    \node[state,node distance=2cm,accepting] (s_6) [right=of s_5] {$s_6$};
    \path[->] 
    (s_0) edge  [above] node {\begin{tabular}{c}
                               $\textrm{low},\mathbf{true}$\\
                               $/x := 0$
                              \end{tabular}} (s_1)
    (s_1) edge  [above] node {\begin{tabular}{c}
                               $\textrm{high},$\\
                               $0 < x < 1$
                              \end{tabular}} (s_2)
    (s_2) edge  [above] node {\begin{tabular}{c}
                               $\textrm{high},$\\
                               $0 < x < 1$
                              \end{tabular}} (s_3)
    (s_3) edge  [above] node {\begin{tabular}{c}
                               $\textrm{high},$\\
                               $0 < x < 1$
                              \end{tabular}} (s_4)
    (s_4) edge  [above] node {\begin{tabular}{c}
                               $\textrm{high},$\\
                               $0 < x < 1$
                              \end{tabular}} (s_5)
    (s_5) edge  [above] node {\begin{tabular}{c}
                               $\textrm{high},$\\
                               $1 < x$
                              \end{tabular}} (s_6)
    (s_5) edge  [above, loop above] node {$\textrm{high},\mathbf{true}$} (s_5);
 \end{tikzpicture}}

 \caption{\textsc{Torque}. The set $W$ (length 242,808--4,873,207) of input words is
 generated by the automotive engine model \texttt{sldemo\_enginewc.slx}~\cite{SimulinkGuide} 
% in the Simulink Demo palette
with random input.
 The pattern specifies four or more consecutive occurrences of
 $\textrm{high}$ in one second.
 % ; we fed it with random input signals; and the output torque signals are 
 %  translated to  sequences of
 % $\textrm{high}$ and $\textrm{low}$ with the threshold $40 \mathrm{N
 %   \cdot m}$.
 For $N=10$, the size of our filter Moore machine $\mathcal{M}_{\mathcal{A},10}$, measured by the number 
of the reachable states in the non-buffer part $S^{N\text{-}\mathrm{ctr\text{-}d}}$ (cf.\
 Def.~\ref{def:filterConstructionTimed} and Prop.~\ref{prop:filter_state_space}), is $16$.
% the number of reachable states in the non-buffer part $S^{N\text{-}\mathrm{ctr\text{-}d}}$ of the state space of 
%  The number of the reachable subset of 
%  the non-buffer part $S^{N\text{-}\mathrm{ctr\text{-}d}}$
%  of the states of
%  filter $\mathcal{M}_{\mathcal{A},10}$ (with $N=10$) is 
%  $16$, see Prop.~\ref{prop:filter_state_space} and
%  Def.~\ref{def:filterConstructionTimed}.
 }
 \label{fig:case1_pattern}
\end{minipage}
\end{figure}
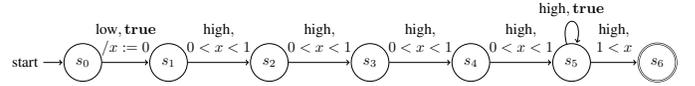

\begin{figure}[tbp]
  % ----- CASE 2 ----------------------------------
\begin{minipage}{\linewidth}
   \centering
  \scalebox{0.6}{
  \begin{tikzpicture}[shorten >=1pt,node distance=2.5cm,on grid,auto] 
    \node[state,initial] (s_0)  {}; 
    \node[state,node distance=2.5cm] (s_1) [right=of s_0] {$\text{g}_1$}; 
    \node[state,node distance=2.5cm,accepting] (s_2) [right=of s_1] {$\text{g}_2$};
    \path[->] 
    (s_0) edge  [above] node {$\text{g}_1 /x := 0$} (s_1)
    (s_1) edge  [above] node {$\text{g}_2, x < 2$} (s_2);
  \end{tikzpicture}}
  \caption{\textsc{Gear}. The set $W$ (length 306--1,011,426) of input words is generated by the automatic transmission system model in~\cite{DBLP:conf/cpsweek/HoxhaAF14}. The pattern, from $\phi^{\mathit{AT}}_5$ in~\cite{DBLP:conf/cpsweek/HoxhaAF14}, is for an event in which gear shift occurs too quickly (from the 1st to 2nd).
 For $N=10$, the size of our filter Moore machine $\mathcal{M}_{\mathcal{A},10}$, measured in the same way of Fig.~\ref{fig:case1_pattern}, is $3$.
 % The number of the reachable subset of 
 % the non-buffer part $S^{N\text{-}\mathrm{ctr\text{-}d}}$
 % of the states of
 % filter $\mathcal{M}_{\mathcal{A},10}$ (with $N=10$) is 
 % $3$, see Prop.~\ref{prop:filter_state_space} and
 % Def.~\ref{def:filterConstructionTimed}.
 }
  \label{fig:case2_pattern}
\end{minipage}
\end{figure}
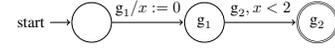

\begin{figure}[tbp]
  % ----- CASE 3 ----------------------------------
\begin{minipage}{\linewidth}
   \centering
  \scalebox{0.45}{
  \begin{tikzpicture}[on grid,auto,node distance=2.5cm]
   \node[state, initial] (s_000) {$?$};
   % \node[node distance=14cm,state] (s_010) [below=of s_000]{$?$};

   \node[state] (s_100)[above right=of s_000] {$\text{g}_1$};
   \node[state] (s_001)[below right=of s_000] {$?$};
   % \node[node distance=7cm,state] (s_110)[above right=of s_010] {$\text{g}_1$};
   % \node[state] (s_011)[below right=of s_010] {$?$};

   \node[state] (s_200)[right=of s_100] {$\text{g}_2$};
   \node[state] (s_101)[right=of s_001] {$\text{g}_1$};
   % \node[state] (s_210)[right=of s_110] {$\text{g}_2$};
   % \node[state] (s_111)[right=of s_011] {$\text{g}_1$};

   \node[state] (s_300)[right=of s_200] {$\text{g}_3$};
   \node[state] (s_201)[right=of s_101] {$\text{g}_2$};
   % \node[state] (s_310)[right=of s_210] {$\text{g}_3$};
   % \node[state] (s_211)[right=of s_111] {$\text{g}_2$};

   \node[state] (s_400)[right=of s_300] {$\text{g}_4$};
   \node[state] (s_301)[right=of s_201] {$\text{g}_3$};
   % \node[state] (s_410)[right=of s_310] {$\text{g}_4$};
   % \node[state] (s_311)[right=of s_211] {$\text{g}_3$};

   \node[state] (s_401)[right=of s_301] {$\text{g}_4$};
   \node[state,accepting] (f)[right=3cm of s_401] {\cmark};
   \node[state] (f0)[above=1.75cm of f] {};
   % \node[state] (s_411)[right=of s_311] {$\text{g}_4$};
 %   \node[state,accepting] (f)[right=of s_400] {\cmark};

   \path[->]
   (s_000) edge  [above left] node {$\text{g}_1, \mathbf{true}$} (s_100)
   (s_100) edge  [above] node {$\text{g}_2, \mathbf{true}$} (s_200)
   (s_200) edge  [above] node {$\text{g}_3, \mathbf{true}$} (s_300)
   (s_300) edge  [above] node {$\text{g}_4, \mathbf{true}$} (s_400)

   (s_100) edge  [left] node {$\omega \geq 2500, \mathbf{true}$} (s_101)
   (s_200) edge  [left] node {$\omega \geq 2500, \mathbf{true}$} (s_201)
   (s_300) edge  [left] node {$\omega \geq 2500, \mathbf{true}$} (s_301)
   (s_400) edge  [left] node {\begin{tabular}{c}
                                     $\omega \geq 2500, x < 10$\\
                                     $/x := 0$
                                    \end{tabular}} (s_401)

   (s_001) edge  [above] node {$\text{g}_1, \mathbf{true}$} (s_101)
   (s_101) edge  [above] node {$\text{g}_2, \mathbf{true}$} (s_201)
   (s_201) edge  [above] node {$\text{g}_3, \mathbf{true}$} (s_301)
   (s_301) edge  [above] node {\begin{tabular}{c}
                                $\text{g}_4, x < 10$\\
                                $/x := 0$
                               \end{tabular}} (s_401)

   % (s_010) edge  [above left] node {$g_1, \mathbf{true}$} (s_110)
   % (s_010) edge  [below left] node {$\omega \geq 2500, \mathbf{true}$} (s_011)
   % (s_110) edge  [above left] node {$\omega \geq 2500, \mathbf{true}$} (s_111)
   % (s_210) edge  [above left] node {$\omega \geq 2500, \mathbf{true}$} (s_211)
   % (s_310) edge  [above left] node {$\omega \geq 2500, \mathbf{true}$} (s_311)
   % (s_410) edge  [above left] node {$\omega \geq 2500, \mathbf{true}$} (s_411)

   % (s_110) edge  [above] node {$g_2, \mathbf{true}$} (s_210)
   % (s_210) edge  [above] node {$g_3, \mathbf{true}$} (s_310)
   % (s_310) edge  [above] node {$g_4, x \leq 10$} (s_410)

   % (s_011) edge  [below] node {$g_1, \mathbf{true}$} (s_111)
   % (s_111) edge  [below] node {$g_2, \mathbf{true}$} (s_211)
   % (s_211) edge  [below] node {$g_3, \mathbf{true}$} (s_311)
   % (s_311) edge  [below] node {$g_4, x \leq 10$} (s_411)

   (s_000) edge  [below left] node {$\omega \geq 2500, \mathbf{true}$} (s_001)
   % (s_000) edge  [below left] node {$v \leq \overline{v}, \mathbf{true}$} (s_010)
   %
   % (s_100) edge  [left] node {$v \leq \overline{v}, \mathbf{true}$} (s_110)
   % (s_200) edge  [left] node {$v \leq \overline{v}, \mathbf{true}$} (s_210)
   % (s_300) edge  [left] node {$v \leq \overline{v}, \mathbf{true}$} (s_310)
   % (s_400) edge  [left] node {$v \leq \overline{v}, \mathbf{true}$} (s_410)

   % (s_001) edge  [left] node {$v \leq \overline{v}, \mathbf{true}$} (s_011)
   % (s_101) edge  [left] node {$v \leq \overline{v}, \mathbf{true}$} (s_111)
   % (s_201) edge  [left] node {$v \leq \overline{v}, \mathbf{true}$} (s_211)
   % (s_301) edge  [left] node {$v \leq \overline{v}, \mathbf{true}$} (s_311)
   % (s_401) edge  [left] node {$v \leq \overline{v}, \mathbf{true}$} (s_411)
 %   (s_401) [bend right=0] edge [right] node {$\$, x > 1$} (f);

 %   (s_401) edge  [above] node {$\omega < 2500, x > 1$} (f)
 %   (s_401) edge  [above] node {$\omega \geq 2500, x > 1$} (f)
 %   (s_401) edge  [above] node {$\text{g}_3, x > 1$} (f)
 %   (s_401) edge  [above] node {$\text{g}_4, x > 1$} (f)

   (s_401) edge  [below] node {\begin{tabular}{c}
                                $\omega < 2500, x > 1$\\
                                $\omega \geq 2500, x > 1$\\
                                $\text{g}_3, x > 1$\\
                                $\text{g}_4, x > 1$
                               \end{tabular}} (f)

   (s_401) edge  [above left,pos=0.75] node {\begin{tabular}{c}
                                     $\omega < 2500 /x := 0$\\
                                     $\omega \geq 2500 /x := 0$\\
                                     $\text{g}_3 /x := 0$\\
                                     $\text{g}_4 /x := 0$
                                    \end{tabular}} (f0)

   (f0) edge  [loop above] node {\begin{tabular}{c}
                                  $\omega < 2500, \mathbf{true}$\\
                                  $\omega \geq 2500, \mathbf{true}$\\
                                  $\text{g}_3, \mathbf{true}$\\
                                  $\text{g}_4, \mathbf{true}$
                                 \end{tabular}} (f0)

 %   (f0) edge  [loop above] node {$\omega < 2500, x > 1$} (f0)
 %   (f0) edge  [loop above] node {$\omega \geq 2500, x > 1$} (f0)
 %   (f0) edge  [loop above] node {$\text{g}_3, x > 1$} (f0)
 %   (f0) edge  [loop above] node {$\text{g}_4, x > 1$} (f0)

   (f0) edge  [right,pos=0.75] node {\begin{tabular}{c}
                             $\omega < 2500, x > 1$\\
                             $\omega \geq 2500, x > 1$\\
                             $\text{g}_3, x > 1$\\
                             $\text{g}_4, x > 1$
                            \end{tabular}} (f);

 %   (f0) edge  [right] node {$\omega < 2500, x > 1$} (f)
 %   (f0) edge  [right] node {$\omega \geq 2500, x > 1$} (f)
 %   (f0) edge  [right] node {$\text{g}_3, x > 1$} (f)
 %   (f0) edge  [right] node {$\text{g}_4, x > 1$} (f)
  \end{tikzpicture}}
  \caption{\textsc{Accel}.
 The set $W$ (length 708--1,739,535) of input words is generated by the same
% automatic transmission system 
 model as in \textsc{Gear}. The pattern
 is from
 $\phi^{\mathit{AT}}_8$ 
 in~\cite{DBLP:conf/cpsweek/HoxhaAF14}:  the gear shifts from 1st
 to 4th and RPM gets high 
%enough
% somewhere
  in its course, but the
% vehicle
 velocity is 
 low (i.e.\ the event $v\ge 100$ is absent). 
 For $N=10$, the size of our filter Moore machine $\mathcal{M}_{\mathcal{A},10}$, measured in the same way of Fig.~\ref{fig:case1_pattern}, is $71$.
 % The number of the reachable subset of 
 % the non-buffer part $S^{N\text{-}\mathrm{ctr\text{-}d}}$
 % of the states of
 % filter $\mathcal{M}_{\mathcal{A},10}$ (with $N=10$) is 
 % $71$, see Prop.~\ref{prop:filter_state_space} and
 % Def.~\ref{def:filterConstructionTimed}.
 }
  \label{fig:case3_pattern}
\end{minipage}
\end{figure}
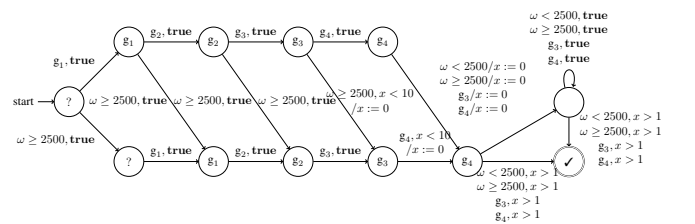

\begin{figure*}[tbp]
%%%%%%%%%%% RQ. 1 %%%%%%%%%%%%%%%%%%
\noindent
\begin{minipage}{.33\linewidth}
  \centering
 \textsc{Torque}
%\\[-1.5em]
 \scalebox{0.60}{\includegraphics{./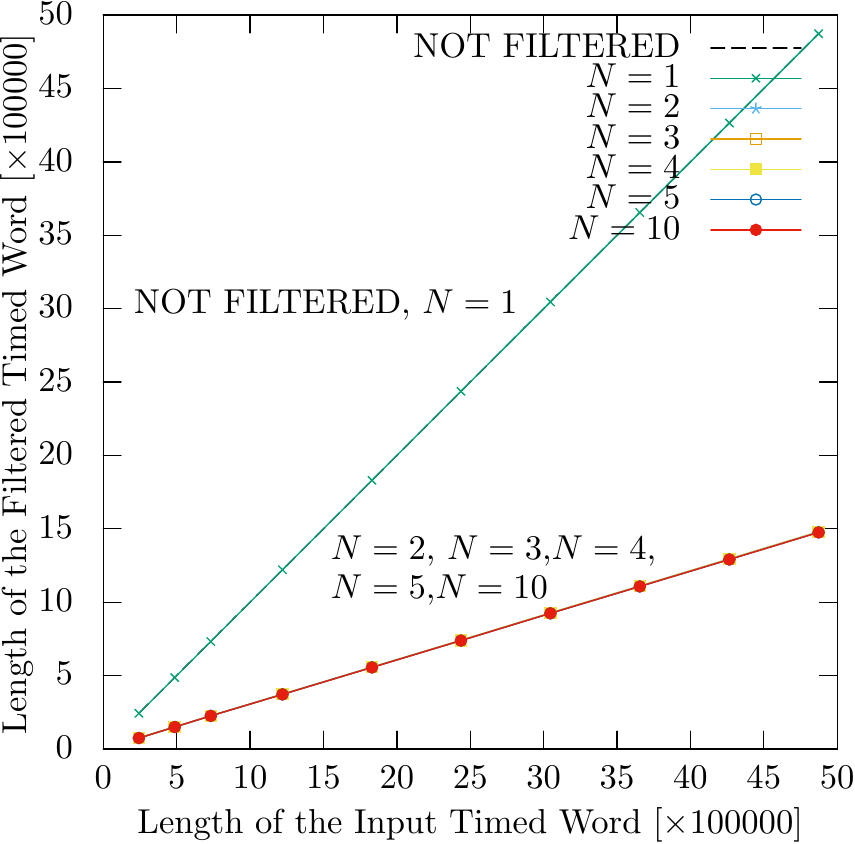}}
% \caption{Length of the input and filtered timed words (\textsc{Torque})}
 \label{fig:filtering_rate_torque}
\end{minipage} 
\hspace{-2em}
\begin{minipage}{.33\linewidth}
 \centering
 \textsc{Gear}
%\\[-1.5em]
 \scalebox{0.60}{\includegraphics{./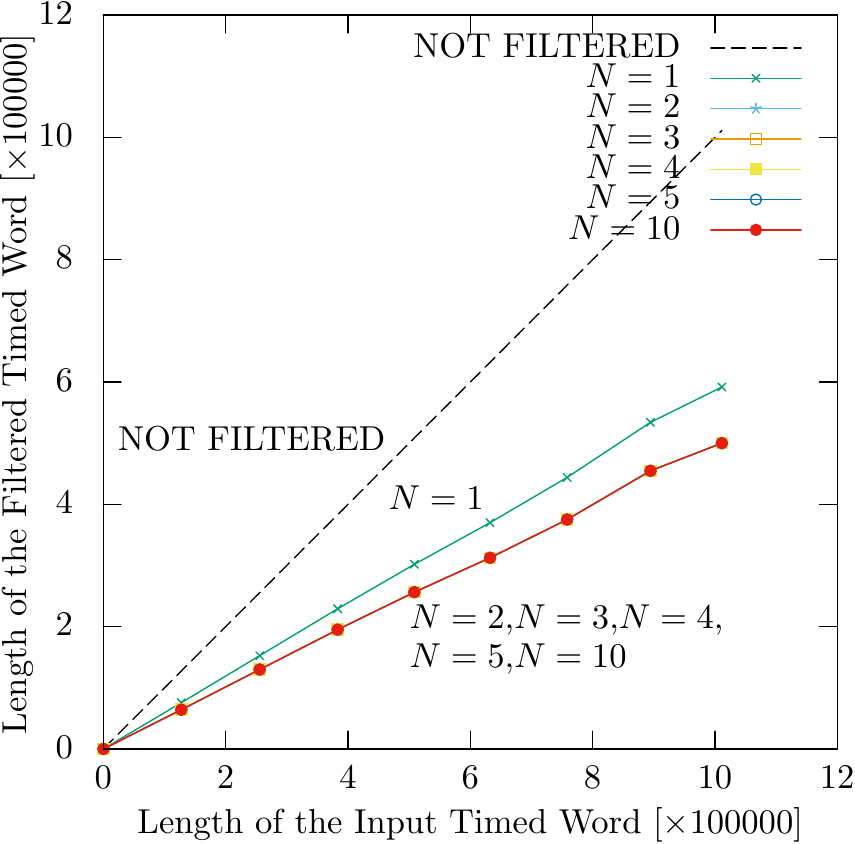}}
% \caption{Length of the input and filtered timed words (\textsc{Gear})}
 \label{fig:filtering_rate_gear}
\end{minipage} 
\hspace{-2em}
\begin{minipage}{.33\linewidth}
 \centering
 \textsc{Accel}
%\\[-1.5em]
 \scalebox{0.60}{\includegraphics{./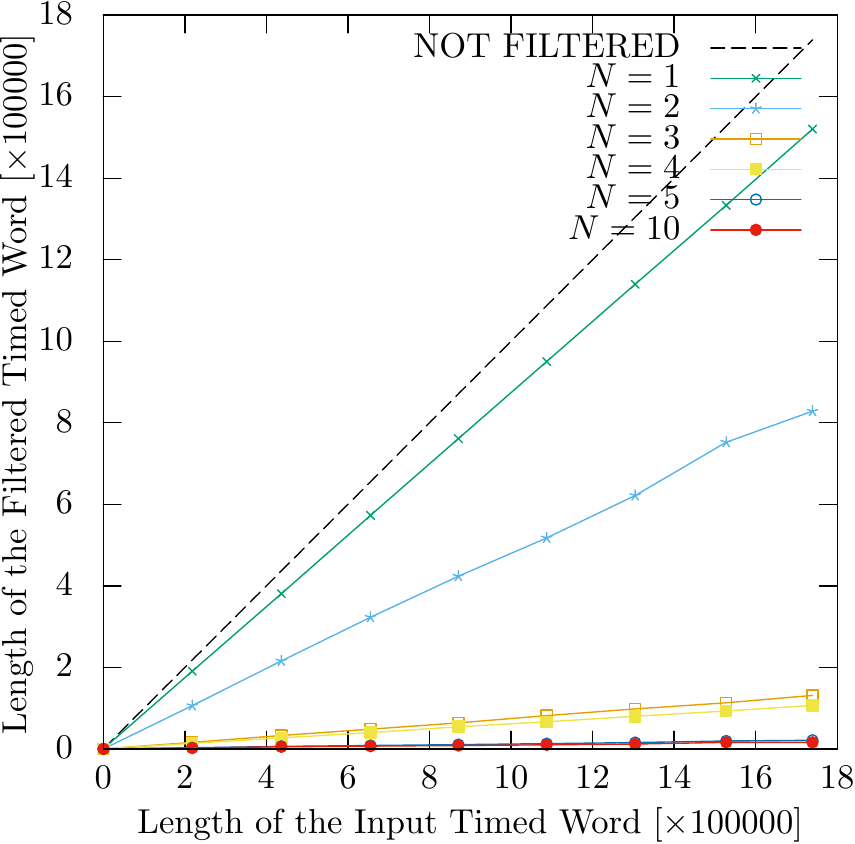}}
% \caption{Length of the input and filtered timed words (\textsc{Accel})}
 \label{fig:filtering_rate_accel}
\end{minipage}
 \caption{Length of the input and filtered timed words}
 \label{fig:filtering_rate}
\end{figure*}
\begin{figure*}[tbp]
%%%%%%%%%%% RQ. 2-1 %%%%%%%%%%%%%%%%%%
 \begin{minipage}{.33\linewidth}
  \centering
 \textsc{Torque}
\\%[-1.5em]
  \scalebox{0.45}{\includegraphics{./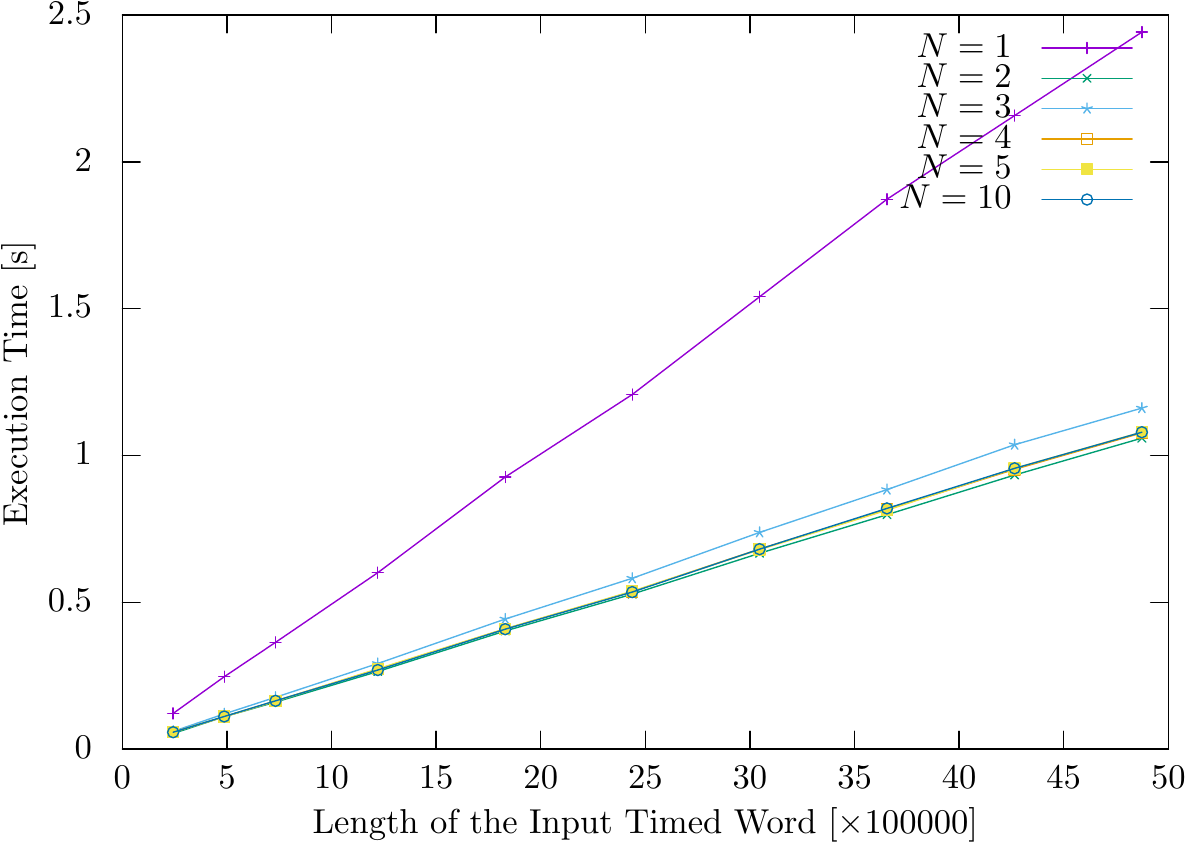}}
%  \caption{Execution time of our filter Moore machine (\textsc{Torque})}
  \label{fig:execution_time_torque}
 \end{minipage}
 \begin{minipage}{.33\linewidth}
  \centering
 \textsc{Gear}
\\%[-1.5em]
  \scalebox{0.45}{\includegraphics{./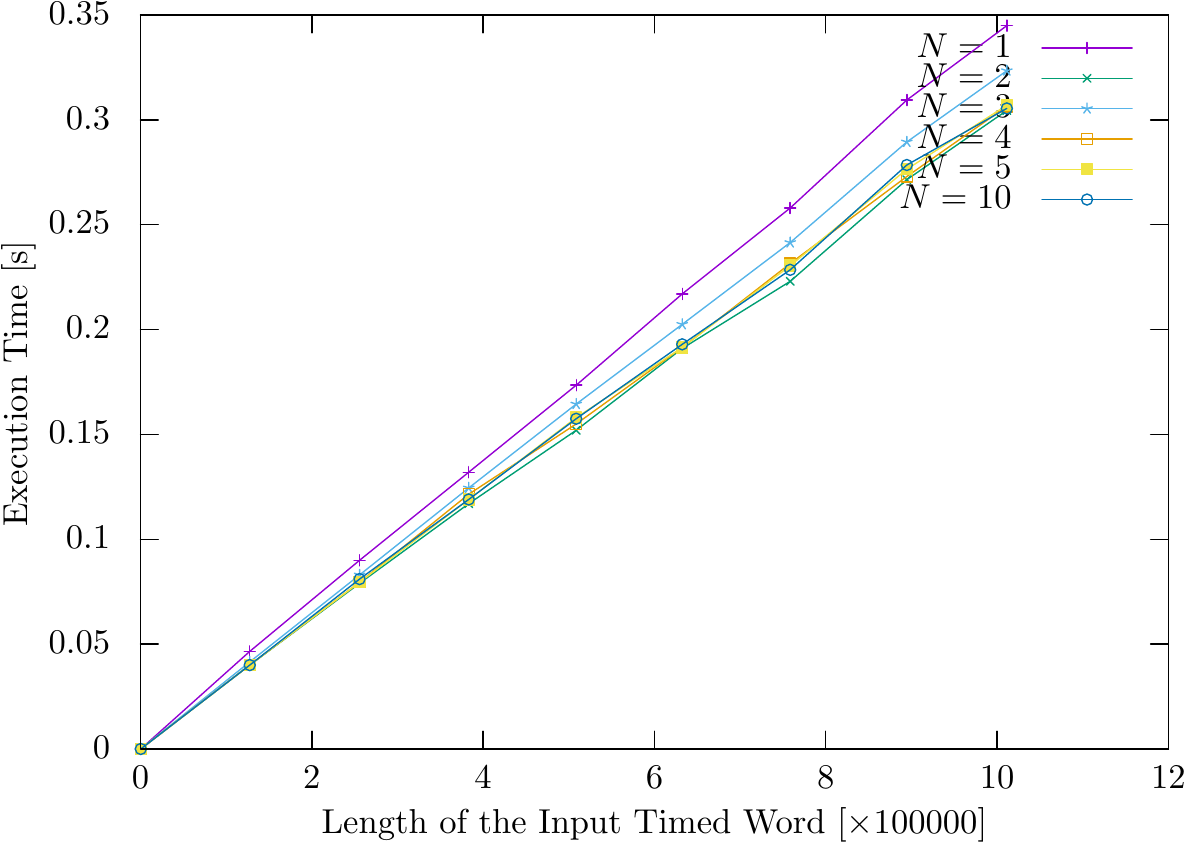}}
%  \caption{Execution time of our filter Moore machine (\textsc{Gear})}
  \label{fig:execution_time_gear}
 \end{minipage}
 \begin{minipage}{.33\linewidth}
  \centering
 \textsc{Accel}
\\%[-1.5em]
  \scalebox{0.45}{\includegraphics{./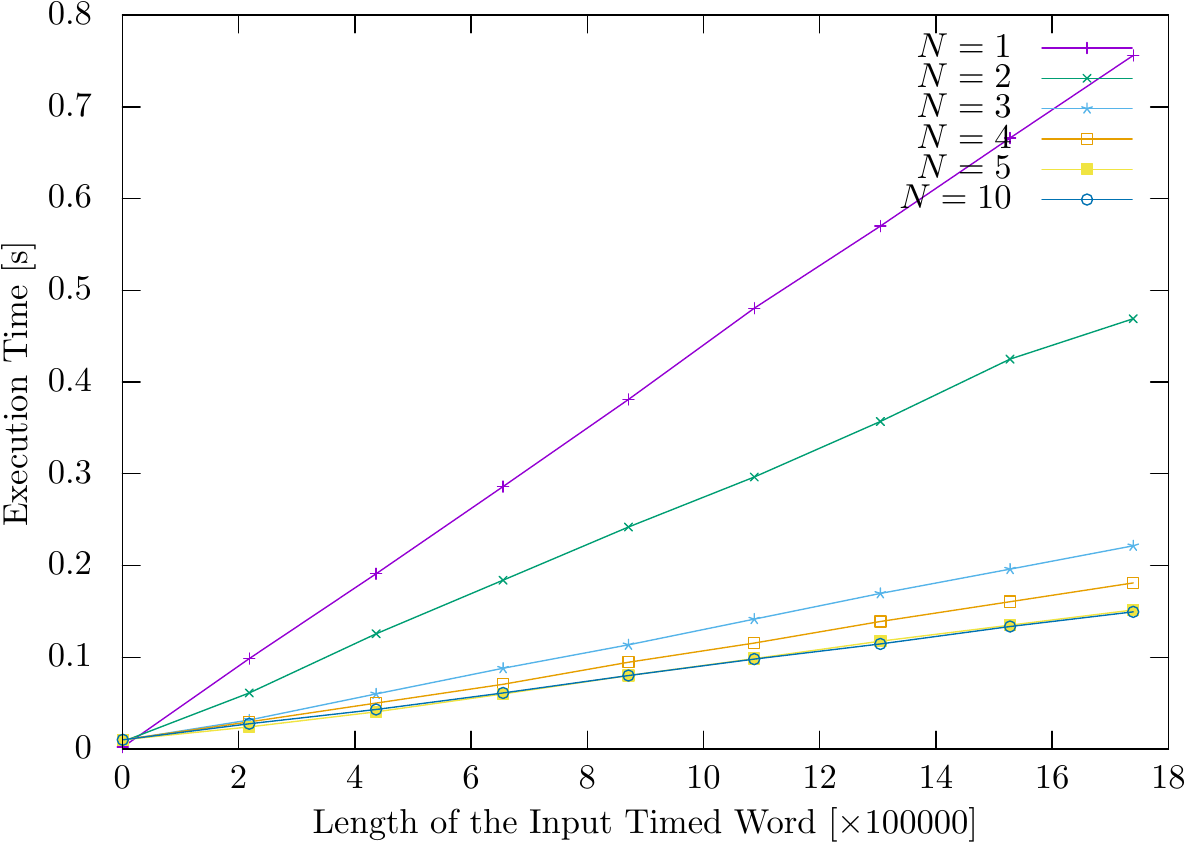}}
%  \caption{Execution time of our filter Moore machine (\textsc{Accel})}
  \label{fig:execution_time_accel}
 \end{minipage}
  \caption{Execution time of our filter Moore machine (including time for construction of a filter)}
  \label{fig:execution_time}
\end{figure*}
\begin{figure*}[tbp]
%%%%%%%%%%% RQ. 2-2 %%%%%%%%%%%%%%%%%%
 \begin{minipage}{.33\linewidth}
  \centering
 \textsc{Torque}
\\%[-1.5em]
  \scalebox{0.45}{ \includegraphics{./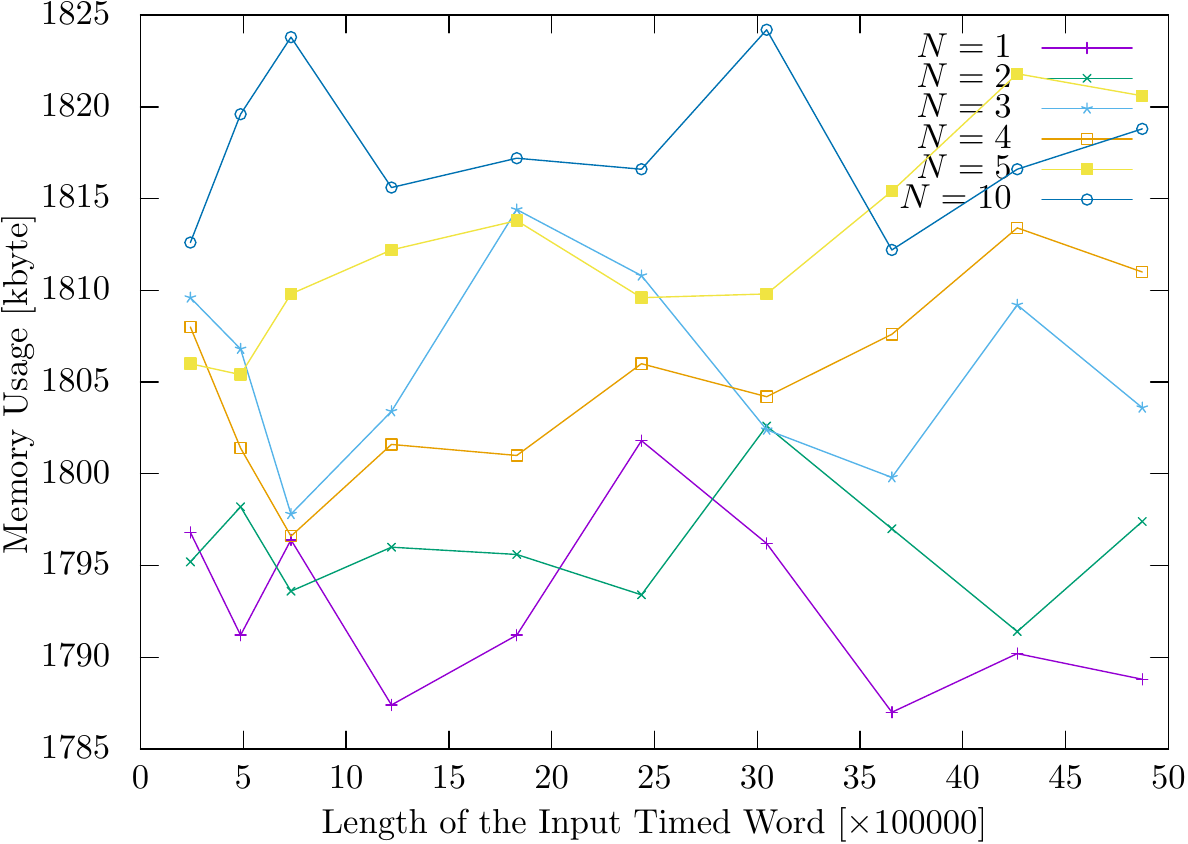}}
%  \caption{Memory usage of our filter Moore machine (\textsc{Torque})}
  \label{fig:memory_usage_torque}
 \end{minipage}
 \begin{minipage}{.33\linewidth}
  \centering
 \textsc{Gear}
\\%[-1.5em]
  \scalebox{0.45}{ \includegraphics{./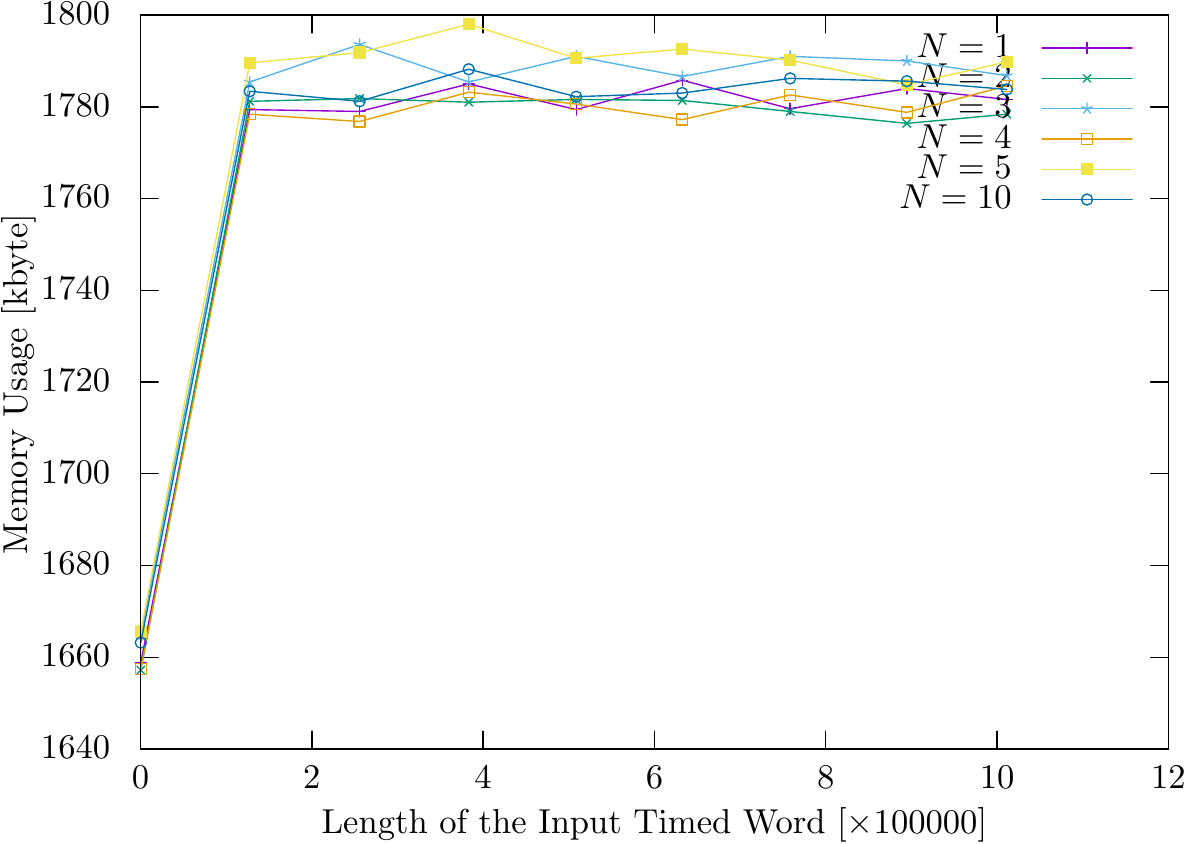}}
 % \caption{Memory usage of our filter Moore machine (\textsc{Gear})}
  \label{fig:memory_usage_gear}
 \end{minipage}
 \begin{minipage}{.33\linewidth} 
  \centering
 \textsc{Accel}
\\%[-1.5em]
  \scalebox{0.45}{ \includegraphics{./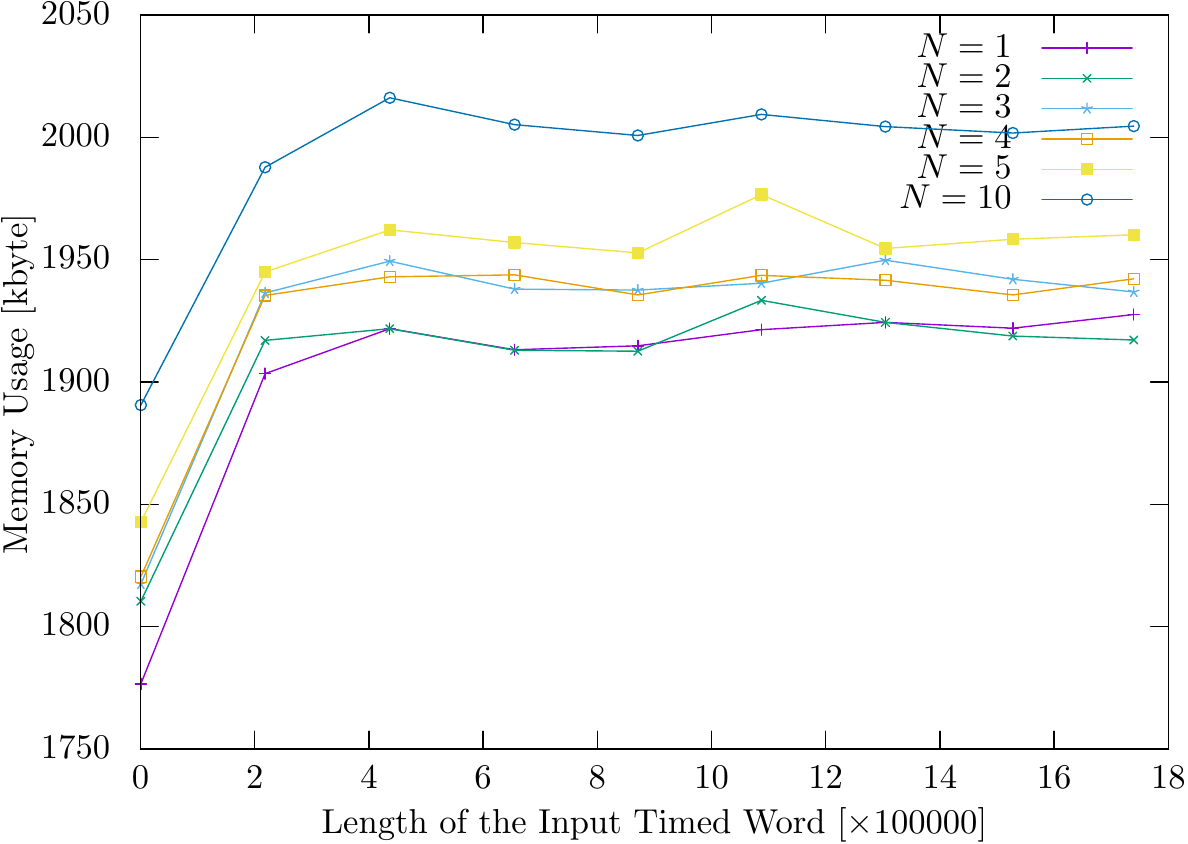}}
%  \caption{Memory usage of our filter Moore machine (\textsc{Accel})}
  \label{fig:memory_usage_accel}
 \end{minipage}
  \caption{Memory usage of our filter Moore machine}
  \label{fig:memory_usage}
\end{figure*}
\begin{figure*}[tbp]
%%%%%%%%%%% RQ. 3 %%%%%%%%%%%%%%%%%%
 \begin{minipage}{.33\linewidth} 
  \centering
 \textsc{Torque}
\\%[-1.5em]
  \scalebox{0.45}{\includegraphics{./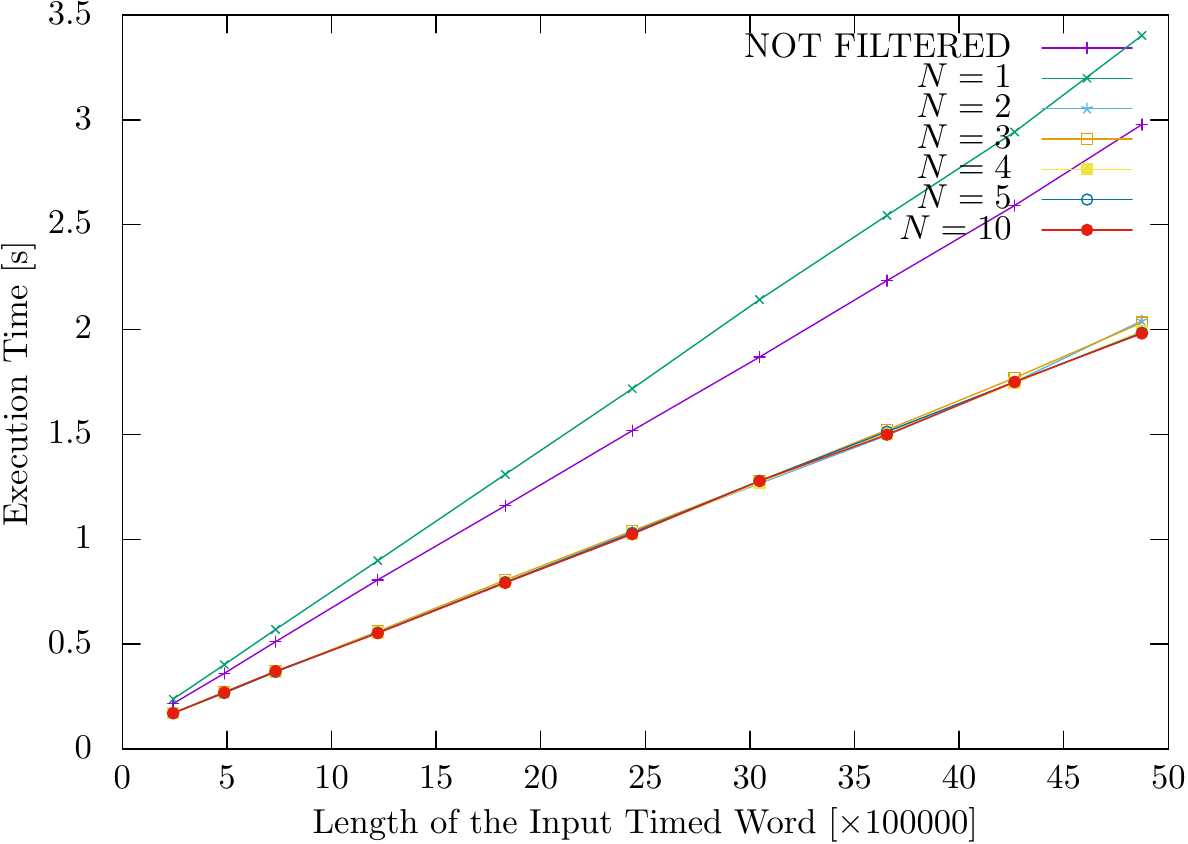}}
%  \caption{Execution time of \monaa with our filter (\textsc{Torque})}
  \label{fig:filt_monaa_time_torque}
 \end{minipage}
 \begin{minipage}{.33\linewidth} 
  \centering
 \textsc{Gear}
\\%[-1.5em]
  \scalebox{0.45}{\includegraphics{./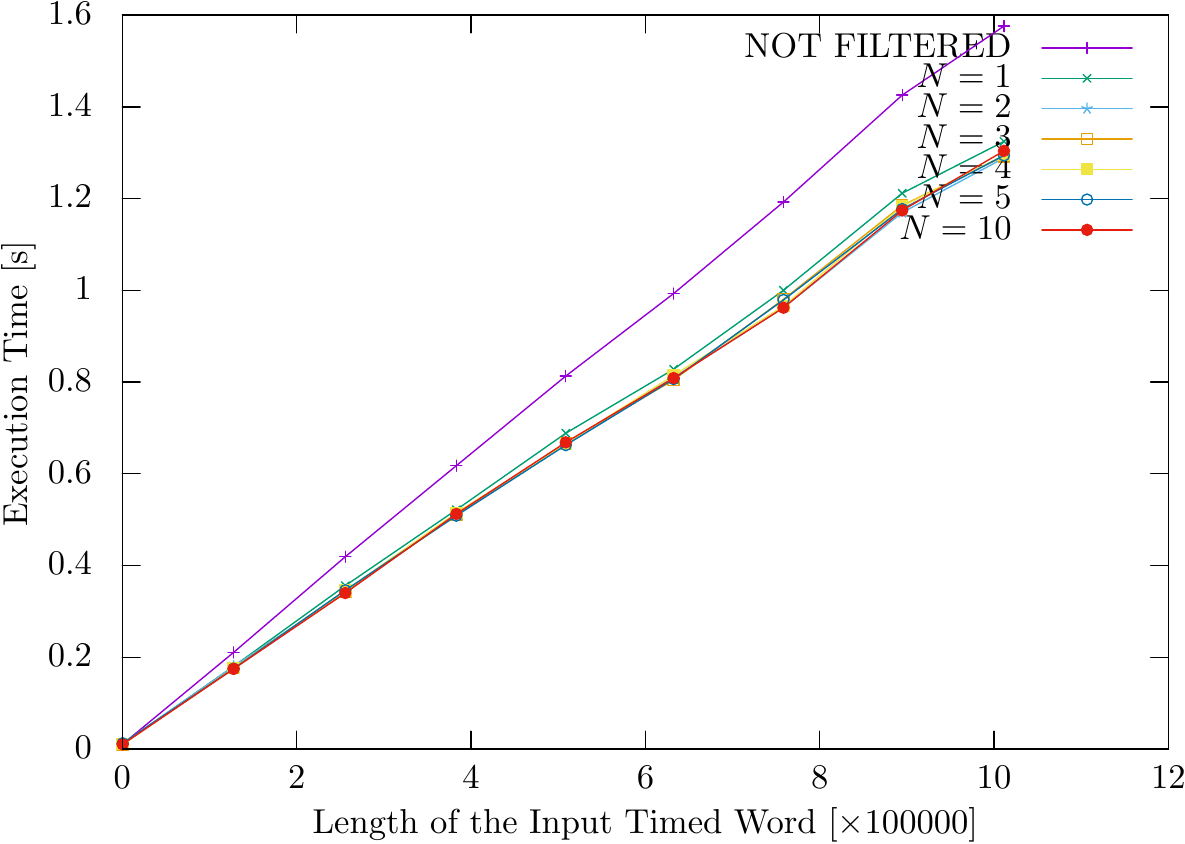}}
%  \caption{Execution time of \monaa with our filter (\textsc{Gear})}
  \label{fig:filt_monaa_time_gear}
 \end{minipage}
 \begin{minipage}{.33\linewidth} 
  \centering
 \textsc{Accel}
\\%[-1.5em]
  \scalebox{0.45}{\includegraphics{./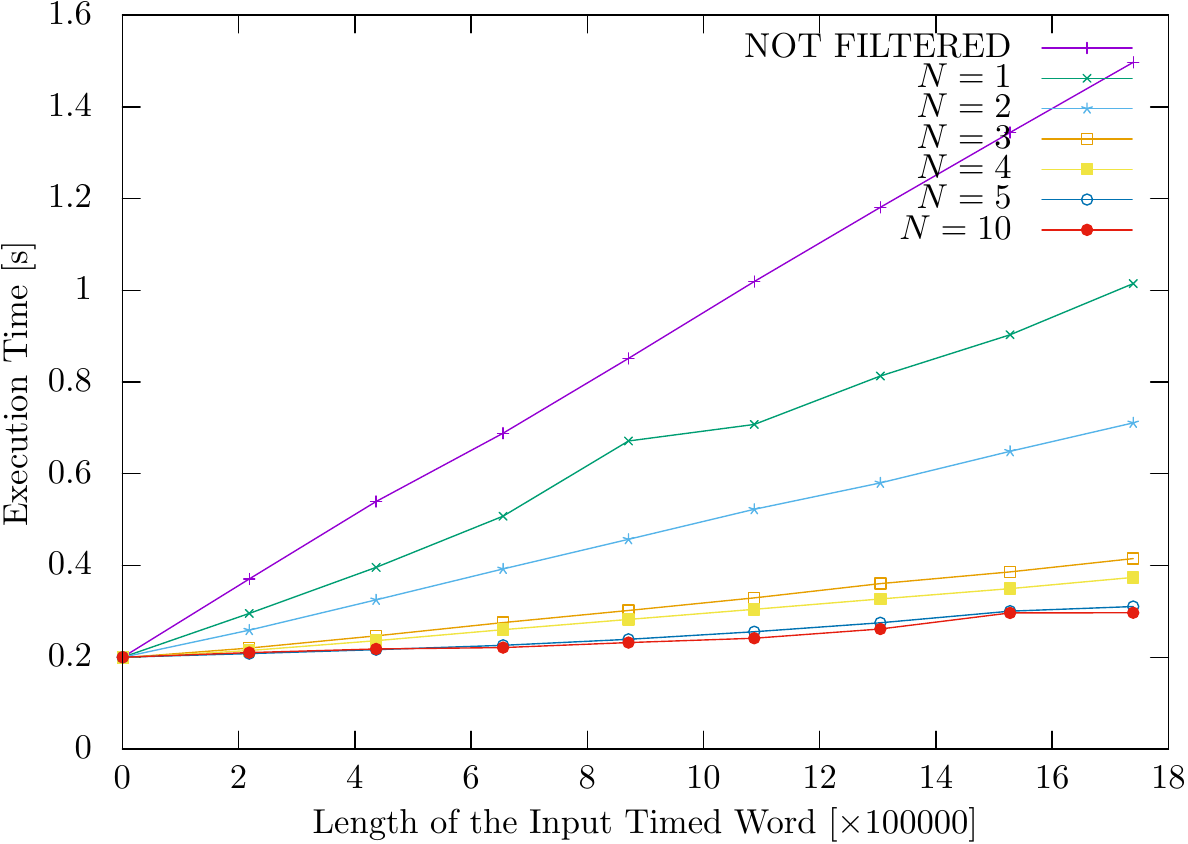}}
%  \caption{Execution time of \monaa with our filter (\textsc{Accel})}
  \label{fig:filt_monaa_time_accel}
 \end{minipage}
  \caption{Execution time of the workflow in Fig.~\ref{fig:workflow}, with  our filter and monitoring by \monaa}
  \label{fig:filt_monaa_time}
\end{figure*}

\myparagraphdense{RQ1: Filtering Rate}
Fig.~\ref{fig:filtering_rate} shows
the length of the filtered timed words by our filter Moore machine for each
pattern timed automaton $\mathcal{A}$, buffer size $N$, and timed word
$w \in W$.

%In the results in Fig XXX---ZZZ, we observe that the length of the filtered timed word decreases while about N < 10 and when N is greater, the length of the filtered timed word saturates. Although one can also theoretically estimate the desired buffer size for a given pattern timed automaton, we experimentally claim that the buffer of size around 5 and 10 is enough for a pattern timed automaton with small state space, as is often the case with pattern timed automata.

We observe that the filtered  word gets shorter as $N$ is bigger. This concurs with our theoretical consideration in Theorem~\ref{thm:nfa_monotonicity} (although the result is for the untimed setting). 
% is significantly
% small when $N$ is large enough.
It seems that peak performance is achieved with a relatively small $N$, such as  $N = 10$. With $N=10$ the length of the filtered timed word is about $1/3$,
$1/2$, and $1/100$
of that of the original timed word in \textsc{Torque}, \textsc{Gear}, and
\textsc{Accel} respectively.
For \textsc{Accel} our filter masks many characters, because the sizes of the alphabet and the pattern timed automaton are relatively large.
This significant data 
reduction demonstrates the practical use of our filtering methodology in
 embedded usage scenarios (see Fig.~\ref{fig:hardwareArch}).

\myparagraphdense{RQ2: Speed and Memory Usage}
Fig.~\ref{fig:execution_time} and~\ref{fig:memory_usage} show
the execution time and memory usage of our filter Moore machine for each
pattern timed automaton $\mathcal{A}$, buffer size $N$, and timed word
$w \in W$.

In
Fig.~\ref{fig:execution_time},
we observe that the execution time is linear to the length of the
input word.
%, a performance suited for a monitor. 
% Besides, we observe that the execution time at around $|w| = 0$ is
% almost zero. It means the overhead of the filter construction is small, though such a constant cost is ignorable when we monitor a massive log over one pattern.
% It is because the size of the output timed word is enormous when $N$ is
% small.
In Fig.~\ref{fig:memory_usage},
 the memory usage is more or less constant with respect to
the length of the
input word. These two results suggest that our filtering methodology is online capable.

The time for constructing a filter Moore machine is seen to be negligible: see the execution time for short input words in Fig.~\ref{fig:execution_time}.

On the effect of varying the buffer size $N$, for smaller $N$ the execution time was relatively large. This seems to be because fewer characters get masked, more characters are output, and this incurs  I/O cost. On memory usage, despite the worst-case result in Prop.~\ref{prop:filter_state_space} (exponential in $N$), the growth for bigger $N$ was moderate. This is because not all states from the powerset construction are reachable.

% Besides, we also observe that the increase of the buffer size $N$ affect
%  the memory usage rather moderately, see Prop.~\ref{prop:filter_state_space}.
% All those results experimentally show % our theoretical assertion
%  that Moore machines are cheap to run and suited for real-time applications.

% Although the observed results are already theoretically guaranteed by the
% definition of Moore machines, we can reconfirm that our filter Moore
% machine has the essential properties of an online algorithm by
% experiments.

\myparagraphdense{RQ3: Acceleration of Timed Pattern Matching}
Fig.~\ref{fig:filt_monaa_time} shows the execution time of
the  workflow in Fig.~\ref{fig:workflow}, where the filter is given by our algorithm and  pattern matching is done by a recent tool \monaa~\cite{MONAACode}. 
%  timed pattern
% matching accelerated by the filter Moore machine for each pattern timed
% automaton $\mathcal{A}$, buffer size $N$, and timed word
% $w \in W$. 
More specifically, we
connect the standard output of the filter Moore machine to the standard
input of  \monaa by Unix pipeline. This way the
filter and \monaa run parallelly on different cores.
% by multiprocessing. 
% The execution time includes the execution of both our Moore machine filter and \monaa.
% \marginpar{Should write what happens if we execute this pipeline in a single core setting}

We observe that, via filtering, the overall
performance of timed pattern matching is  accelerated when $N$ is large enough (say $N=10$). For
\textsc{Torque} and \textsc{Gear}, the speedup is
$1.2$ times; for \textsc{Accel}, it is roughly
twice.
This acceleration suggests that our filtering methodology is potentially beneficial independent of the architecture assumption in Fig.~\ref{fig:hardwareArch}: when a log is enormous monitoring can take hours or days; by running a filter in parallel the time can be shortened. 

\myparagraphdense{RQ4: Precision} For the three examples \textsc{Torque}, \textsc{Gear}, and
\textsc{Accel}, and
% the buffer size
 $N=10$, the ratios of non-masked
events contributing to actual matches were 0.34\%, 99\%, and 92\% respectively.
Therefore the precision differs greatly depending on patterns. It should also be noted that, even in the low-precision example (\textsc{Torque}), our filter successfully reduced the log size by approximately three times (Fig.~\ref{fig:filtering_rate}). 

 Most of the imprecision in the current timed setting is attributed to the laxness of one-clock determinization (Def.~\ref{def:oneClockDet}). 
For example,  the TA for  \textsc{Torque} (Fig.~\ref{fig:case1_pattern}) requires four consecutive occurrences of $\textrm{high}$ within one second, using the same clock $x$. 
% Recall, however, that in a one-clock determinization all clock variables get reset after each transition (Def.\ref{def:oneClockDet}(\ref{item:yIsReset})). 
The best overapproximation by one-clock determinization---in which all clock variables get reset after each transition---is to require an occurrence of $\textrm{high}$ in each of four consecutive time segments of length $\le 1$. This is a much looser requirement than the original, and explains the comparatively poorer precision in the example \textsc{Torque}. 

\myparagraphdense{RQ5: Responsiveness} For the three examples
\textsc{Torque}, \textsc{Gear}, \textsc{Accel}, and the buffer size $N =
10$, we calculated the average delay caused by our filter, which is
$\text{(the execution time)} / |w| \times N$. The results were 
$2.2\,\mathrm{\mu s}$, $3.1\,\mathrm{\mu s}$, and $0.91\,\mathrm{\mu s}$
respectively. Although these delays highly depend on the computation
power of the processor, the delays were tiny.  We conclude our filter
is responsive enough.

\auxproof{\myparagraph{RQ4. Performance Comparison with Timed Pattern Matching}
\begin{figure*}[tbp]
\begin{minipage}{.3\linewidth}
  \centering
 \scalebox{0.4}{\includegraphics{./figs/filt-vs-monaa-torque-standalone-trimmed.pdf}}
 \caption{Execution time of \monaa and our filter Moore machine (\textsc{Torque})}
 \label{fig:filt_vs_monaa_time_torque}
\end{minipage} 
\begin{minipage}{.3\linewidth}
 \centering
 \scalebox{0.4}{\includegraphics{./figs/filt-vs-monaa-gear-standalone-trimmed.pdf}}
 \caption{Execution time of \monaa and our filter Moore machine (\textsc{Gear})}
 \label{fig:filt_vs_monaa_time_gear}
\end{minipage} 
\begin{minipage}{.3\linewidth}
 \centering
 \scalebox{0.4}{\includegraphics{./figs/filt-vs-monaa-accel-standalone-trimmed.pdf}}
 \caption{Execution time of \monaa and our filter Moore machine (\textsc{Accel})}
 \label{fig:filt_vs_monaa_time_accel}
\end{minipage}
\end{figure*}

Fig.~\ref{fig:filt_vs_monaa_time_torque}--\ref{fig:filt_vs_monaa_time_accel}
are the execution time of our filter Moore machine and \monaa for each pattern timed automaton $\mathcal{A}$, buffer size $N$, and timed words $W$.

We observe that the execution time of filter Moore machine is about
0.5--0.8 of that of timed pattern matching when $N$ is large enough
(e.g., $N=10$).
It leads that our filter Moore machine requires less computation
resource and it is suitable for a weak processor, e.g., embedded system.
}

\section{Related Work}\label{sec:related}
%\marginpar{Should cite Monpoly work. Look at the CAV review again}

Efficiency of pattern matching has been actively pursued in the fields of database~\cite{KandhanTP10,DBLP:conf/sigmod/MajumderRV08} and networking~\cite{LiuSLGF12,DBLP:conf/icn/ZhouXQL08,DBLP:journals/ijsn/ErdoganC07,DBLP:conf/ancs/YuCDLK06,DBLP:conf/uss/MeinersPNTL10,DBLP:conf/sigcomm/SmithEJK08}. In these fields,  issues in hardware architecture---such as speed gap between L1/L2 caches and main memory---constitute  situations similar to that of embedded monitors that we discussed in the above. 

 Many works in these application domains deal with (potentially multiple) strings as patterns. There the main source of inspiration is classic algorithms such as Boyer--Moore~\cite{DBLP:journals/cacm/BoyerM77}, Commentz-Walter~\cite{DBLP:conf/icalp/Commentz-Walter79}, and Aho--Corasick~\cite{DBLP:journals/cacm/AhoC75}. Examples are e.g.\ in~\cite{DBLP:journals/jea/SalmelaTK06}. Many algorithms for patterns given by regular expressions or automata (instead of strings) also rely on those string-matching techniques. See e.g.~\cite{KandhanTP10}.

In database and networking, pattern matching against regular expressions is mainly approached by application-specific heuristics that often take machine architecture into account. For example, in~\cite{LiuSLGF12,DBLP:conf/ancs/YuCDLK06} pattern regular expressions undergo application of some application-specific rewrite rules. The hierarchical structure of L1/L2 caches and main memory is exploited in~\cite{KandhanTP10,DBLP:conf/sigmod/MajumderRV08}, while the features of \emph{content-addressable memory} (CAM)---an alternative to RAM often used in network devices---are exploited in~\cite{DBLP:conf/uss/MeinersPNTL10}. The work~\cite{DBLP:conf/sigcomm/SmithEJK08} extends the formalism of DFA by auxiliary variables for the purpose of moderating state space inflation when multiple patterns are combined. 

Pre-filtering before actual pattern matching has been considered in the above lines of works~\cite{KandhanTP10,DBLP:journals/jea/SalmelaTK06,LiuSLGF12}. The principal difference between those works and ours is that their filters produce \emph{match candidates}, data that include explicit indices for potential matches. For this reason, the second step of the workflow (what we call pattern matching) is called \emph{verification} in those papers. In contrast, our filter only \emph{masks} the input word. This is because our goal---motivated by embedded applications---is not only in the matching speed but also in reducing the amount of data passed from the sensor to the pattern matching unit. This choice enables us to use Moore machines, too, which can be readily implemented on FPGA and ASIC.

Monitoring over a real-time temporal logic
property~\cite{DBLP:conf/formats/MalerN04,DBLP:journals/jacm/BasinKMZ15} and
timed pattern matching~\cite{DBLP:conf/formats/UlusFAM14,DBLP:conf/formats/WagaAH16} are relatively new topics. They have been mainly pursued in the context of cyber-physical systems, although its applications in database and networking are very likely. In timed pattern matching, specifications can be given by timed automata (as we do in the current work), \emph{timed regular expressions}~\cite{DBLP:journals/jacm/AsarinCM02} and \emph{metric temporal logic} formulas~\cite{DBLP:conf/focs/AlurH92}. Algorithms for timed pattern matching against specifications in these formalisms have been actively pursued 
% Ulus, Ferr\`ere, Asarin,  Maler  and their colleagues~
in~\cite{DBLP:conf/formats/UlusFAM14,DBLP:conf/tacas/UlusFAM16,DBLP:conf/cav/Ulus17,DBLP:conf/formats/BakhirkinFMU17,DBLP:conf/formats/AsarinMNU17}. 
Besides, 
% the current authors
the line of work~\cite{DBLP:conf/formats/WagaAH16,DBLP:conf/formats/WagaHS17,MONAACode}
%Waga, Hasuo, and their colleagues 
has pursued acceleration of timed pattern matching, by combining \emph{shift table} techniques (like in Boyer--Moore~\cite{DBLP:journals/cacm/BoyerM77} and Franek--Jennings--Smyth~\cite{DBLP:journals/jda/FranekJS07}) and timed automata~\cite{DBLP:conf/formats/WagaAH16,DBLP:conf/formats/WagaHS17}. This idea of automata-theoretic shift tables is pioneered in~\cite{DBLP:journals/scp/WatsonW03}.

\section{Conclusions and Future Work}\label{sec:concl}
Motivated by the recent rise of monitoring needs in embedded applications, we presented the construction of filtering Moore machines for (untimed and timed) pattern matching. The construction is automata-theoretic, realizing filters as Moore machines. 

We will pursue embedded implementation of the proposed technique. In particular, we will investigate hardware acceleration by FPGA or ASIC, exploiting  that our filters are Moore machines~\cite{Reese06}. 

% An imminent future work is implementation  of  the proposed filtering framework. While our theoretical results suggest its performance advantages, quantitative analysis is desired using realistic hardware configurations, patterns and data, and also under various choices of the parameter $N$ (the buffer size). We speculate that the proposed pipeline workflow (Fig.~\ref{fig:workflow}) can speed up the monitoring task itself; experiments will also answer to this question. Hardware acceleration by FPGA or ASIC shall be pursued, too, exploiting the fact that our filters are Moore machines~\cite{Reese06}. 

On the theoretical side, we plan to further investigate the relationship among: automata theory, lightweight formal methods for cyber-physical and embedded systems, and other fields like database and networking. The papers in~\S{}\ref{sec:related} seem to suggest that there are many techniques waiting for being exported from one field to another. 

Besides, a generalization to a distributed setting was suggested by anonymous reviewers. It is often the case with the actual embedded systems, and we believe it will be interesting future work.

\bibliographystyle{IEEEtran}
\bibliography{dblp_refs}

\begin{IEEEbiography}[{\includegraphics[width=1in,height=1.25in,clip,keepaspectratio]{./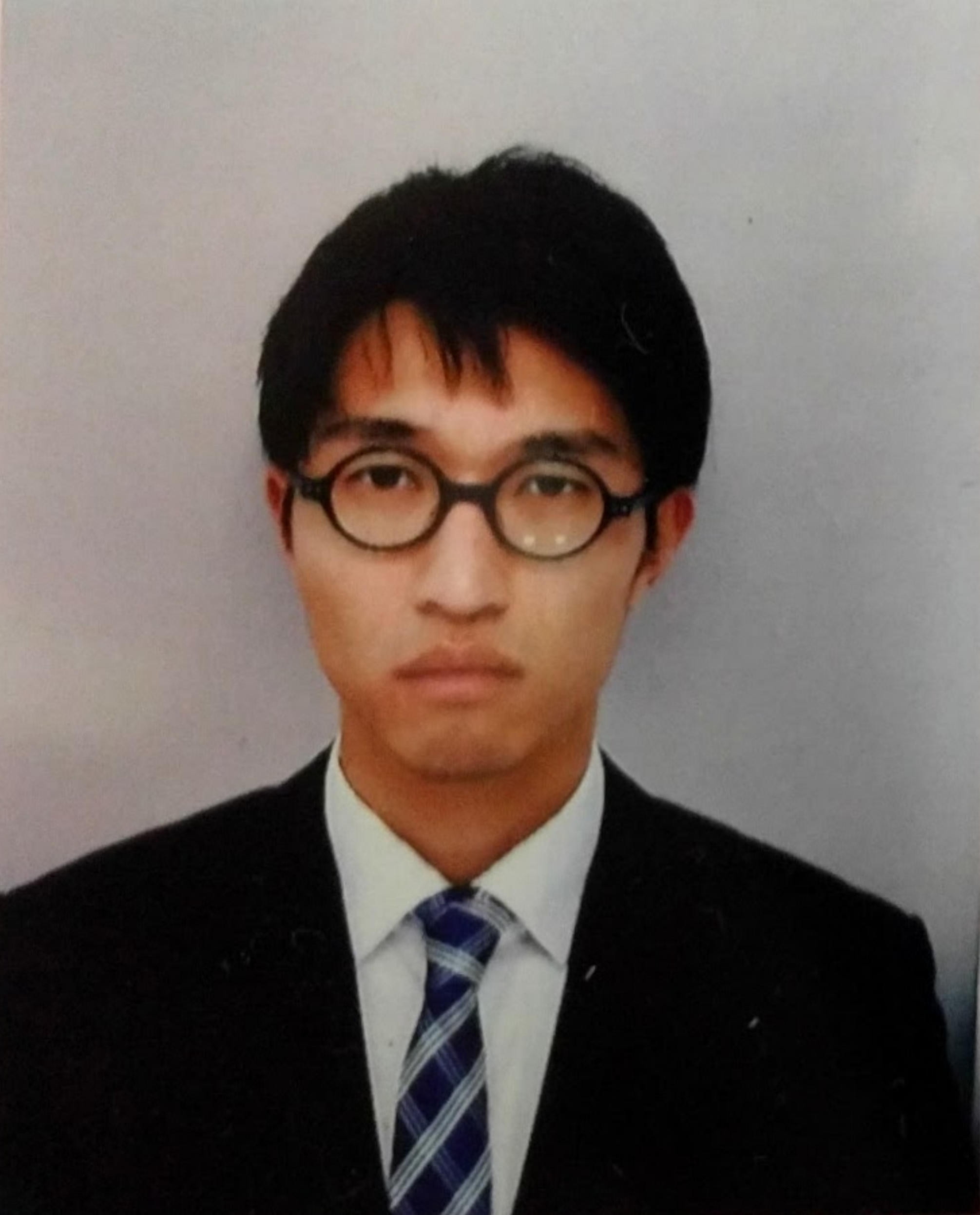}}]{Masaki Waga}
 received the B.S and M.S degrees in computer science from the
 University of Tokyo, Tokyo, Japan, in 2016 and 2018, respectively.
 He is currently pursuing the Ph.D degree in informatics with
 SOKENDAI, Kanagawa, Japan.

 His current research interests include runtime verification of
 cyber-physical systems and theory of timed automata.
\end{IEEEbiography}

\begin{IEEEbiography}[{\includegraphics[width=1in,height=1.25in,clip,keepaspectratio]{./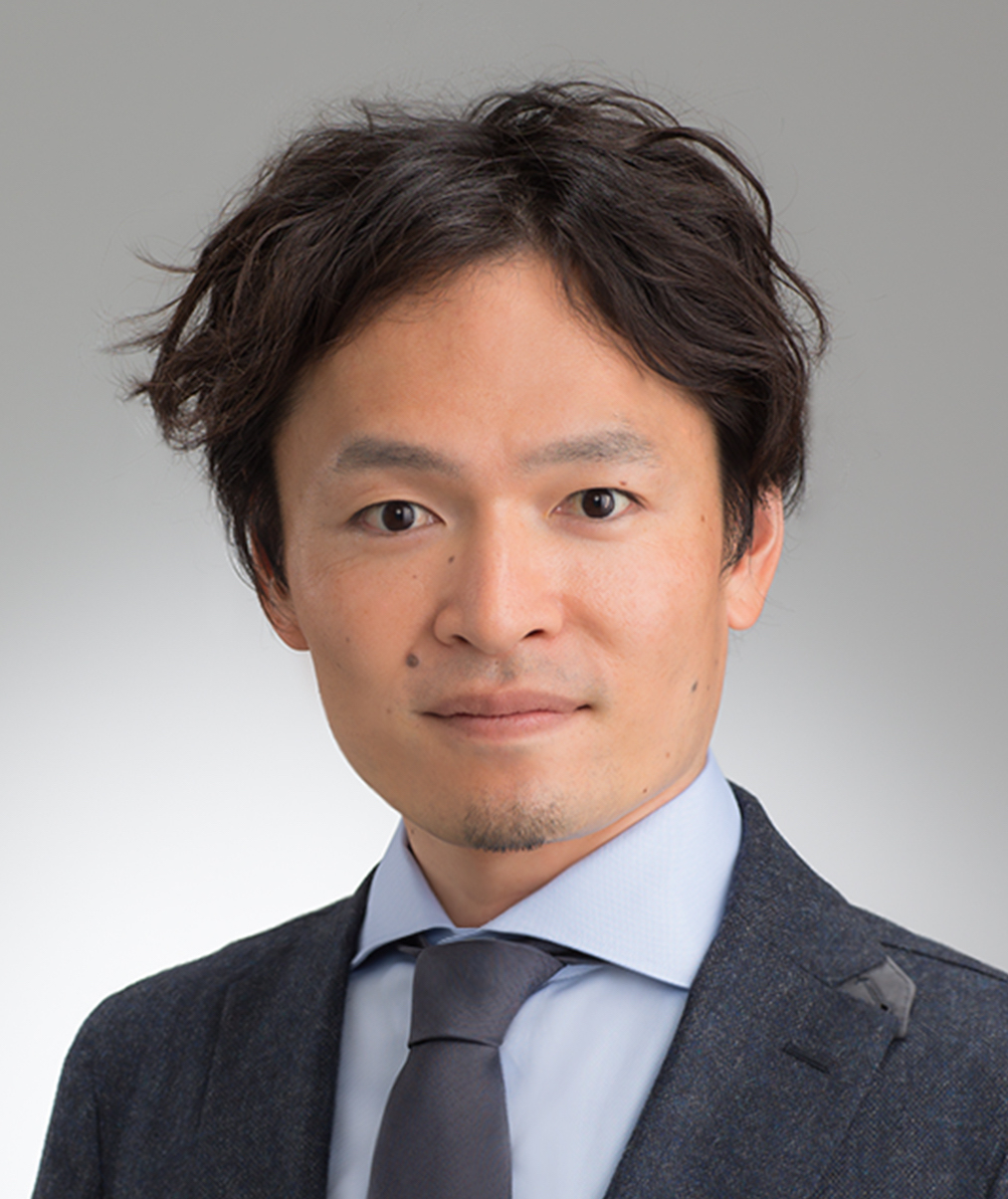}}]{Ichiro Hasuo} is an Associate Professor at National Institute of Informatics (NII), Tokyo, Japan, where he also leads JST ERATO Metamathematics for Systems Design Project. He received the BSc and MSc degrees from the University of Tokyo (2002) and Tokyo Institute of Technology (2004), respectively, and received the PhD degree (cum laude) from Radboud University Nijmegen, the Netherlands (2008). Prior to joining NII,  he was an Assistant Professor at RIMS, Kyoto University, and  a Lecturer and  an Associate Professor at the University of Tokyo. 
% At NII he serves also as Director of 
% Research Center for Systems Design and Mathematics. 
His research interests are in mathematical (logical, algebraic and categorical) structures in  formal methods; abstraction and generalization of deductive and automata-theoretic verification techniques; and their application to cyber-physical systems. 
\end{IEEEbiography}

% if you will not have a photo at all:
%\begin{IEEEbiographynophoto}{John Doe}
%Biography text here.
%\end{IEEEbiographynophoto}

% insert where needed to balance the two columns on the last page with
% biographies
%\newpage

%\begin{IEEEbiographynophoto}{Jane Doe}
%Biography text here.
%\end{IEEEbiographynophoto}

% You can push biographies down or up by placing
% a \vfill before or after them. The appropriate
% use of \vfill depends on what kind of text is
% on the last page and whether or not the columns
% are being equalized.

%\vfill

% Can be used to pull up biographies so that the bottom of the last one
% is flush with the other column.
%\enlargethispage{-5in}

 \newpage
 \appendix
%{\Large\bfseries \color{red}The following  appendices are part of the response to the reviews, and they'll be in the extended arxiv version in camera-ready. Therefore they should not be count in the 10-page limit.}

 \section{Omitted Proofs}

 \begin{mynotation}
 [$\stuck{w}$]
 \label{notation:stuck}
 Let $\mathcal{A} =(\Sigma, S, s_0, E,S_F)$ be an NFA, and 
 $w=a_{1}\dotsc a_{n}\in\Sigma^{*}$ be a word. The fact that reading $w$ in $\mathcal{A}$ leads to no active states, that is, 
 \begin{math}
 \bigl\{\,s\in S\,\mid\,\exists s_{1},\dotsc, s_{n-1}\in S.\, 
 (s_{0}
 \xrightarrow{a_{1}}s_{1}
 \xrightarrow{a_{2}}\dotsc
 \xrightarrow{a_{n-1}}s_{n-1}
 \xrightarrow{a_{n}}s \text{ in $\mathcal{A}$})
 \,\bigr\} =\emptyset%\enspace,
 \end{math},
 is denoted by $\stuck{w}$.
 % \marginpar{This notation is probably only used in the appendix...}
 \end{mynotation}

 \begin{lemma}
 \label{lemma:filtered_character}
 Let $\mathcal{A}$ be an NFA $\mathcal{A} = (\Sigma, S, s_0, E,S_F)$, and
 $N$ be a positive integer. Let 
 $\mathcal{M}_{\mathcal{A},N}=(\Sigma_{\bot},\Sigma_{\bot},Q,q_0,\Delta,\Lambda)$
 be the  Moore machine in Def.~\ref{def:filterConstructionUntimed}. 
 of $\mathcal{A}$,
 $w=a_1 a_2 \dots a_n$ be a word over $\Sigma$, and $\bot^{N} w'$ be the
 output word of $\mathcal{M}_{\mathcal{A},N}$ over the input word
 $w \bot^{N}$ where $w' = b_1 b_2 \dots b_n$.
 For any index $k$ of $w$, we have $b_k=\bot$ if and only if 
 for any $i\in[1,k]$ and $k'\in[k,k+N-1]$, 
 we have either $(k'-i)\bmod N < k' - k$ or 
 $w|_{[i,k']}\not\in L(\mathcal{A})$, and 
 for any $k'\in[k,k+N-1]$ and $m \in \Zp$, we have
 $\stuck{w|_{[k'-mN+1,k']}}$.
 % $\nexists s_0\xrightarrow{w(k'-mN+1,k')}s$
 \end{lemma}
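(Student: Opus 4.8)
The plan is to follow a single character $a_k$ as it travels through the size-$N$ FIFO buffer, reduce the question ``is $b_k=\bot$?'' to a question about one $\pass/\mask$ label, and then read that label off from the sets of active states $\mathcal{S}_{k'}$ produced along the run. Write the run of $\mathcal{M}_{\mathcal{A},N}$ over $w\bot^{N}$ as $q_0q_1\dots q_{n+N}$ with $q_t=(\mathcal{S}_t,\overline{a}^{(t)},\overline{l}^{(t)})$. A short induction on $t$ using~(\ref{eq:constFilterUntimed1}) shows that the buffer cell carrying $a_k$ is at position $N$ right after step $k$, at position $N-1$ after step $k+1$, \dots, at position $1$ after step $k+N-1$, and is dequeued at step $k+N$; hence, since $a_k\in\Sigma$, by~(\ref{eq:constFilterUntimed5}) we get $b_k=\bot$ iff the label $\lambda_{N-1}$ equals $\mask$, where $\lambda_j:=l^{(k+j)}_{N-j}$ is the label of that cell at step $k+j$.

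Next I would unwind the recurrence for $\lambda_j$ dictated by~(\ref{eq:constFilterUntimed3}). At step $k+j$ the new label vector is either (i) $\pass^{N}$ when $\exists s.\,(s,N)\in\mathcal{S}_{k+j}$; or (ii) a left-shift with its last $\psi(\mathcal{S}_{k+j})$ entries set to $\pass$, when (i) fails but $\mathcal{S}_{k+j}$ contains an accepting pair; or (iii) a plain left-shift with a fresh $\mask$ appended. Since $a_k$'s cell moves to position $N-j$ at step $k+j$, in case (ii) it turns $\pass$ precisely when $\psi(\mathcal{S}_{k+j})>j$, in case (iii) it retains $\lambda_{j-1}$, and a $\pass$ is never overwritten. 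Folding the all-$\pass$ case~(i) into the others is routine (the counter-$0$ copy of $s_0$ plays no role, so $\psi(\mathcal{S})\in\Zp$ whenever case~(ii) applies, as claimed in Def.~\ref{def:filterConstructionUntimed}), and the outcome is: $\lambda_{N-1}=\pass$ iff there is $j\in[0,N-1]$, writing $k':=k+j$, with $\exists s.\,(s,N)\in\mathcal{S}_{k'}$, or with $\exists(s,n)\in\mathcal{S}_{k'}$ such that $s\in S_F$ and $n>k'-k$.

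It remains to describe $\mathcal{S}_{k'}$ in terms of $\mathcal{A}$. An induction on $t$ along~(\ref{eq:constFilterUntimed2}), using that the counter after $j'$ reading steps is $((j'-1)\bmod N)+1$, shows that for $m\ge1$, $(s,m)\in\mathcal{S}_t$ iff there is $j'\in[1,t]$ with $j'\equiv m\pmod N$ and a path $s_0\xrightarrow{a_{t-j'+1}\cdots a_t}s$ in $\mathcal{A}$ (reading any $a_i$ with $i>n$, which is $\bot\notin\Sigma$, is impossible, so this also forces $t\le n$). Negating the previous paragraph and substituting, $b_k=\bot$ becomes: for every $k'\in[k,k+N-1]$, (a) no $s$ has $(s,N)\in\mathcal{S}_{k'}$, and (b) no accepting pair $(s,n)\in\mathcal{S}_{k'}$ has $n>k'-k$. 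Since $j'\equiv N\equiv0\pmod N$ forces $j'=mN$, clause~(a) says $\stuck{w|_{[k'-mN+1,k']}}$ for every $m\in\Zp$ (for the $m$ with $mN>k'$, or when $k'>n$, the segment contains a $\bot$ and the clause is vacuous). Putting $i:=k'-j'+1$, so the counter of the corresponding pair is $((k'-i)\bmod N)+1$, clause~(b) says that for every $i\in[1,k']$, either $(k'-i)\bmod N<k'-k$ or $w|_{[i,k']}\notin L(\mathcal{A})$; and for $i\in[k+1,k']$ one has $0\le k'-i<k'-k<N$, so $(k'-i)\bmod N=k'-i<k'-k$ already holds, so $i$ may be restricted to $[1,k]$. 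Conjoining~(a) and~(b) over all $k'$ is exactly the statement of the lemma.

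The main obstacle is the bookkeeping in the two middle steps: one must keep straight simultaneously the path length $j'$ in $\mathcal{A}$, the induced counter $((j'-1)\bmod N)+1$, the buffer slot $N-j$ occupied by $a_k$'s cell at step $k+j$, and the threshold $\psi(\mathcal{S}_{k+j})$ of~(\ref{eq:constFilterUntimed3}); and then one must verify that all boundary phenomena are consistent with the stated formula --- the steps $k+j>n$ (where reading $\bot$ collapses $\mathcal{S}_{k+j}$ to $\{(s_0,0)\}$, so nothing is masked, matching the vacuity of the clause $\stuck{w|_{[k'-mN+1,k']}}$ on out-of-range segments), the clamping of $i$ to $[1,k]$, and the harmless degenerate case $\epsilon\in L(\mathcal{A})$.
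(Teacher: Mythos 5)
Your proposal is correct and follows essentially the same route as the paper's own proof: reduce $b_k=\bot$ to the $\pass/\mask$ label of the buffer cell carrying $a_k$, unfold the label-update rule~(\ref{eq:constFilterUntimed3}) into a condition on $\mathcal{S}_{k'}$ for $k'\in[k,k+N-1]$, and then translate membership $(s,n)\in\mathcal{S}_{k'}$ into paths of $\mathcal{A}$ together with the counter value $((k'-i)\bmod N)+1$, which yields exactly the two clauses. You merely make explicit the bookkeeping the paper leaves implicit (the cell position $N-j$ at step $k+j$, the restriction of $i$ to $[1,k]$, the $k'>n$ and out-of-range boundary cases, and the correct index $l_{k+N-1,1}$ where the paper writes $l_{k+N,1}$).
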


 \begin{proof}
 Let
 $(\mathcal{S}_0,\overline{a}_0,\overline{l}_0),(\mathcal{S}_{1},\overline{a}_{1},\overline{l}_{1}),\dots,(\mathcal{S}_{|w|+N},\overline{a}_{|w|+N},\overline{l}_{|w|+N})\in Q^*$
 be the run of $\mathcal{M}$ over $w  \bot^{N}$.
 By the definition of $\Lambda$, we have $b_k = \bot$ if and only if 
 we have $l_{k+N,1}=\mask$, which holds if and only if we have
 \begin{equation}
  \label{eq:filter_character_1}
  \forall k'\in[k,k+N-1], (s,n)\in\mathcal{S}_{k'}.\, (s\not\in S_F \lor n \leq k'-k) \land n\neq N
 \end{equation}
 by the definition of $\Delta$.
 The first part 
 \begin{math}
  \forall k'\in[k,k+N-1], (s,n)\in\mathcal{S}_{k'}.\, s\not\in S_F \lor n < k'-k
 \end{math}
 of (\ref{eq:filter_character_1}) holds if and only if for any
 $i\in [1,k]$, we have either $(k'-i)\bmod N < k' - k$ or $w|_{[i,k']}\not\in L(\mathcal{A})$.
 The second part
 \begin{math}
  \forall k'\in[k,k+N-1], (s,n)\in\mathcal{S}_{k'}.\, n\neq N
 \end{math}
 holds if and only if
 for any $k'\in[k,k+N-1]$ and $m \in \Zp$, we have
 % $\nexists s_0\xrightarrow{w(k'-mN+1,k')}s$.
 $\stuck{w|_{[k'-mN+1,k']}}$.
 % \footnote{TODO: prove this by definition of $\Delta$}
 \end{proof}

 \subsection{Proof of Thm.~\ref{thm:nfa_soundness}}

 \begin{proof}
 Let $(i,j)$ be a pair of indices of $w$ satisfying 
 $w|_{[i,j]} \in L(\mathcal{A})$ and 
 $s_i,s_{i+1},\dots,s_{j}\in S^*$ be a
 sequence of states such that we have $(s_0,a_i,s_i) \in E$, 
 $s_j\in S_F$, and for each $k \in [i+1,j]$, $(s_{k-1},a_k,s_k)\in E$
 holds.
 Let
 $(\mathcal{S}_0,\overline{a}_0,\overline{l}_0),(\mathcal{S}_{1},\overline{a}_{1},\overline{l}_{1}),\dots,(\mathcal{S}_{n+N},\overline{a}_{n+N},\overline{l}_{n+N})\in
 Q^*$ be the run of
 $\mathcal{M}_{\mathcal{A},N}$ over $w \bot^{N}$.
 By the definition of the transition function $\Delta$ of
 $\mathcal{M}_{\mathcal{A},N}$, for each
 $k \in [i,j]$, we have $(s_{k},n_k)\in\mathcal{S}_{k}$ where 
 $n_k \in \mathbb{Z}/N\mathbb{Z}$. %\footnote{TODO: make this a lemma}
 Thus for each $m\in \Znn$ satisfying $i +N m \leq j$, we have
 $\overline{a}_{i+Nm+N-1} = (a_{i+Nm},a_{i+Nm+1},\cdots,a_{i+Nm+N-1})$
 and
 $\overline{l}_{i+Nm+N-1} = (\pass,\pass,\dots,\pass)$.
 Also, because of $s_j \in S_F$, we have 
 $\overline{a}_{j-N+1} = (a_{j-N+1},a_{j-N+1},\cdots,a_{j})$
 and for each $m\in [N-((j-i)\bmod N),N]$ we have 
 ${l}_{j-N+1,m} = \pass$.
 Thus, for any index $k \in [i,j]$ of $w$, we have
 $a_k = b_k$.
 \end{proof}

 \subsection{Proof of Thm.~\ref{thm:nfa_completeness}}

 \begin{proof}
 Since $\max\{|w|\mid w \in L(\mathcal{A})\} < \infty$, there is an NFA
 $\mathcal{A}'$ such that $L(\mathcal{A})=L(\mathcal{A}')$ holds and
 there is no transition from any accepting states.
 Let $\mathcal{A}'$ be such an NFA and
 $(\mathcal{S}_0,\overline{a}_0,\overline{l}_0),(\mathcal{S}_{1},\overline{a}_{1},\overline{l}_{1}),\dots,(\mathcal{S}_{|w|+N},\overline{a}_{|w|+N},\overline{l}_{|w|+N})\in Q^*$
 be the run of $\mathcal{M}_{\mathcal{A},N}$ over $w  \bot^{N}$.
 Since the length of a run of $\mathcal{A}'$ is not more than $N+1$ and
 the run is accepting if its length is $N+1$, $a_k=b_k$ holds if and only if there
 exist $i \in [1,k]$ and $j\in[k,k+N-1]$, we have 
 $w|_{[i,j]} \in L(\mathcal{A})$, by Lemma~\ref{lemma:filtered_character}.
 % Let $k$ be an index of $w$ satisfying $a_k=b_k$.
 % Since $a_k=b_k$, we have ${a}_{k+N-1,1} = a_k$ and 
 % ${l}_{k+N-1,1}=\pass$.
 % Let $j=\min\{m \in [k,k+N-1]\mid l_{m,(k+N-m)} = \pass\}$.
 % Since $N \leq \max\{|w|\mid w \in L(\mathcal{A})\}$, there exists
 % $(s_j,n_j) \in \mathcal{S}_j$ such that $s_j \in S_F$.
 % Thus, there exists an index $i$ of $w$ such that
 % $i \leq k$ and we have $w(i,j)\in L(\mathcal{A})$.
 \end{proof}

 \auxproof{
 \subsection{Proof of Thm.~\ref{thm:nfa_substring_window}}

 \begin{proof}
 Without loss of generality, we assume that for any state $s \in S$ of $\mathcal{A}$, there is at
 least one accepting state $s_f$ reachable from $s$.
 If
 \begin{math}
 \bigcup_{d\in [0,M]}(\Sigma^* w|_{[k-d,k-d+M-1]} \Sigma^*)\cap
 L(\mathcal{A})=\emptyset  
 \end{math}
 holds, we have 
 \begin{equation}
 \label{eq:substr_1}
 \bigcup_{i\in [k-M+1,k],j\in [k,k+M-1],j-i\geq M-1}
 (\Sigma^* w|_{[i,j]} \Sigma^*) \cap
 L(\mathcal{A})=\emptyset \enspace . 
 \end{equation}
 Equation (\ref{eq:substr_1}) implies that
 for any $i\in [1,k],j\in [k,k+N-1]$ satisfying $j-i\geq M-1$, we have
 \begin{math}
 w|_{[i,j]}\not\in L(\mathcal{A})
 \end{math}.
 Since $j-i < M-1$ leads $w|_{[i,j]}\not\in L(\mathcal{A})$, 
 for any $i\in [1,k],j\in [k,k+N-1]$, we have
 \begin{math}
 w|_{[i,j]}\not\in L(\mathcal{A})
 \end{math}.
 Besides, equation (\ref{eq:substr_1}) implies that
 for any $j\in [k,k+N-1]$ and $m\in \Zp$, we have
 \begin{math}
 w|_{[j-mN+1,j]} \Sigma^* \cap L(\mathcal{A}) =\emptyset
 \end{math}.
 Since each state $s\in S$ of $\mathcal{A}$ has at least one accepting
 state reachable from $s$, 
 for any $j\in[k,k+N-1]$ and $m \in \Zp$, we have 
 \begin{math}
 % \nexists s_0 \xrightarrow{w|_{[j-mN+1,j]}} s
 \stuck{w|_{[j-mN+1,j]}}
 \end{math}.
 By Lemma~\ref{lemma:filtered_character}, we have $b_k=\bot$.
 \end{proof}
 }

 \subsection{Proof of Thm.~\ref{thm:nfa_monotonicity}}

 \begin{proof}
 By Lemma~\ref{lemma:filtered_character}, if we have
 $b^{(N)}_{k}=\bot$,
 we have 
 \begin{equation}
 \label{eq:monotinic_1}
 \forall i\in[1,k],j\in[k,k+N-1].\,w|_{[i,j]} \not\in L(\mathcal{A}) \lor
 (j-i) \bmod N < j-k
 \end{equation}
 and
 \begin{equation}
 \label{eq:monotinic_2}
 \forall j\in[k,k+N-1],m \in \Zp.\, 
 % \nexists s_0
 %  \xrightarrow{w|_{[j-mN+1,j]}} s_j
 \stuck{w|_{[j-mN+1,j]}}
 \enspace .
 \end{equation}
 Because of (\ref{eq:monotinic_1}), we have either
 \begin{equation}
 \label{eq:monotinic_3}
 \forall i\in[1,k],j\in[k,k+N-1].\,w|_{[i,j]} \not\in L(\mathcal{A}) \lor
 (j-i) \bmod nN < j-k 
 \end{equation}
 or
 \begin{equation}
 \label{eq:monotinic_4}
  \forall i\in[1,k],j\in[k,k+N-1].\,w|_{[i,j]} \not\in L(\mathcal{A}) \lor
 \exists p \in [1,n-1].\, pN \leq j-i-qnN < Np + j-k
 \end{equation} where $q$ is the positive integer satisfying 
 \begin{math}
 0 \leq j-i-qnN < nN
 \end{math}.
 Because of
 \begin{math}
 pN \leq j-i-qnN < Np + j-k \iff k \leq j-i-qnN-pN+k < j
 \end{math}
 and
 \begin{math}
 pN \leq j-i-qnN < Np + j-k \leq pN + N
 \end{math}, (\ref{eq:monotinic_4}) implies
 \begin{math}
 \forall i\in[1,k],j\in[k,k+N-1].\,w|_{[i,j]} \not\in L(\mathcal{A}) \lor
 \exists q \in \Zp.\, k\leq j-i-qN +k < j
 \end{math}, which is equivalent to
 \begin{math}
 \forall i\in[1,k],j\in[k,k+N-1].\,w|_{[i,j]} \not\in L(\mathcal{A}) \lor
 \exists q \in \Zp.\, k< i+qN \leq j
 \end{math}.
 Because of (\ref{eq:monotinic_2}), if there exists $q \in \Zp$
 satisfying $k < i+qN \leq j$, we have 
 %$\nexists s_0\xrightarrow{w|_{[i,i+qN]}} s\xrightarrow{w|_{[i+qN+1,j]}}s'$.
 $\stuck{w|_{[i,i+qN]}\cdot w|_{[i+qN+1,j]}}$
 Thus, (\ref{eq:monotinic_4}) implies $w|_{[i,j]}\not\in L(\mathcal{A})$ and
 we have (\ref{eq:monotinic_3}).
 Besides, by (\ref{eq:monotinic_2}), for any $j'\in[k,k+nN-1]$ and $m' \in \Zp$,
 we have
 \begin{math}
 % \nexists s_0 \xrightarrow{w|_{[j'-mN+1,j'']}} s_{j''}
 \stuck{w|_{[j'-mN+1,j'']}}
 \end{math}
 where $j'' = k + ((j'-k) \bmod N)$, which leads
 \begin{math}
 % \nexists s_0 \xrightarrow{w|_{[j'-mN+1,j'']}} s_{j''}
 %  \xrightarrow{w|_{[j'',j']}} s_{j'}
 \stuck{w|_{[j'-mN+1,j'']}\cdot w|_{[j''+1,j']}}
 \end{math}.

 Because of (\ref{eq:monotinic_2}), for any $i\in[1,k]$ there exists
 $k'\in[k,k+N-1]$ such that we have 
 \begin{math}
 %\nexists s_0\xrightarrow{w|_{[i,k']}} s
 \stuck{w|_{[i,k']}}
 \end{math}.
 Thus, for any $i\in[1,k]$ and $j\in[k+N,k+nN-1]$, we have 
 \begin{math}
 w|_{[i,j]} \not\in L(\mathcal{A})
 \end{math}.
 Thus, we have both
 \begin{displaymath}
 \forall i\in[1,k],j\in[k,k+nN-1].\,w|_{[i,j]} \not\in L(\mathcal{A}) \lor (j-i) \bmod nN < j-k 
 \end{displaymath}
 and
 \begin{displaymath}
 \forall j\in[k,k+nN-1],m \in \Zp.\, 
 % \nexists s_0
 %  \xrightarrow{w|_{[j-mN+1,j]}} s_j
 \stuck{w|_{[j-mnN+1,j]}}
 \enspace ,
 \end{displaymath} 
 and we have $b^{(nN)}_{k}=\bot$.
 \end{proof}

 \subsection{Proof of Thm.~\ref{thm:timed_soundness}}

 \begin{proof}
 If we have $w(i,j)-\tau_{i-1} \in L(\mathcal{A})$, there is an accepting run 
 $((s_0,s_i,s_{i+1},\dots,s_{j}),(\nu_0,\nu_i,\nu_{i+1},\dots,\nu_{j}))$
 of $\mathcal{A}$ over $w(i,j)-\tau_{i-1}$.
 Let $\overline{\nu} = (\nu_0,\nu_i,\nu_{i+1},\dots,\nu_{j})$.
 By the definition of $\mathcal{A}^{N\text{-}\mathrm{ctr}}$, there is an accepting run
 $(((s_0,0),(s_i,n_{i}),(s_{i+1},n_{i+1}),\dots,(s_{j},n_{j})),\overline{\nu})$
 of $\mathcal{A}^{N\text{-}\mathrm{ctr}}$ over $w(i,j)-\tau_{i-1}$ where for
 any $k\in[i,j]$, $n_k = (k-i \bmod N)+1$.
 Let
 $((\mathcal{S}_0,\mathcal{S}_1,\dots,\mathcal{S}_{j}),(\nu'_0,\nu'_1,\dots,\nu'_{j}))$
 be the run of $\mathcal{A}^{N\text{-}\mathrm{ctr\text{-}d}}$ over $w(1,j)$.
 By Prop.~\ref{prop:propertiesOfOneClockDeterminization}, 
 for any $k\in[i,j]$ we have $((s_k,n_k),Z_k) \in \mathcal{S}_k$ where
 $s_k = (k-i \bmod N)+1$ and $s_j \in S_F$.
 Let
 $((\mathcal{S}_0,\overline{l}_0),(\mathcal{S}_1,\overline{l}_1),\dots,(\mathcal{S}_{n+N},\overline{l}_{n+N}))$
 be the run of $\mathcal{M}_{\mathcal{A},N}$ over $w (\bot,\tau_n)^{N}$.
 By the definition of the filter Moore machine $\mathcal{M}_{\mathcal{A},N}$,
 for any $m \in \Znn$ satisfying $i + Nm-1 \leq j$, we have
 $b_{i + Nm-1} = N$ and
 $\overline{l}_{i+Nm-1}=\pass^N$.
 Also, because of $\mathcal{S}_{j} \in S'_F$, 
 we have
 $\overline{l}_{j} =
 l_2,l_3,\dots,l_{N-\psi(\mathcal{S}_{j})+1},\overbrace{\pass,\dotsc,\pass}^{\psi(\mathcal{S}_{j})}$
 where for any $k\in[2,N-\psi(\mathcal{S}_{j})+1]$, $l_k$ is either
 $\pass$ or $\mask$.
 Thus, for any $k\in[i,j]$, we have $b_k = \pass$.
 \end{proof}

 \section{Filtering with On-The-Fly Determinization}
 \label{appendix:nfa_filtering}

 See Alg.~\ref{alg:nfa_filtering}.

 \begin{algorithm*}[tb]
 \caption{A Filtering Algorithm for Pattern Matching with On-The-Fly Determinization}
 \label{alg:nfa_filtering}
 \scalebox{0.9}{
 \parbox{\linewidth}{
 \begin{algorithmic}[1]
  \Require Word $\str \in \Sigma^*$, NFA 
  $\mathcal{A} = (\Sigma, S, s_0, E,S_F)$, $N \in \Zp$
  \Ensure $\bot^{N-1} \cdot w' \in \Sigma_{\bot}$ is the output word of
  $\mathcal{M}_{\mathcal{A},N}$ given the input word $w\cdot \bot^{N-1}$
  \State $\currSN \gets \{(s_0,0)\};\overline{a} \gets \bot^N;\overline{l} \gets \mask^N$
  \For{$i = 1; i \leq |\str|; i \gets i + 1$}
  \State \scalebox{0.95}{$\nextSN \gets \{(s_0,0)\}
  \cup \{(s',n+1\bmod N) \mid \exists (s,n) \in \currSN, s' \in S.\, (s,w(i),s') \in E \}$}\label{line:nfa_filtering_begin_waiting}
  \State $\overline{a} \gets {a}_2,a_3,\dots,a_N,w_i$
  \If{$\exists (s,N) \in \nextSN$}
  \State $\overline{l} \gets \pass^N$
  \ElsIf{$\exists (s,n) \in \nextSN.\, s \in S_F$}\label{line:nfa_filtering_begin_sending}
  \State $\mathit{maxN} \gets \max\{n\mid (s,n) \in \nextSN, s \in S_F\}$
  \State $\overline{l} \gets {l}_2,l_3,\dots,l_{N-\mathit{maxN}+1},\pass^{\mathit{maxN}}$
  \Else
  \State $\overline{l} \gets {l}_2,l_3,\dots,l_N,\mask$
  \EndIf\label{line:nfa_filtering_end_sending}
  \If{$l_1 = \pass$}
  \State $w'\gets w'\cdot a_1$
  \Else
  \State $w'\gets w'\cdot \bot$
  \EndIf
  \State $\currSN \gets \nextSN$
  \EndFor
  \For{$i = 1;i \leq N-1;i\gets i+1$}
  \If {$l_i = \pass$}
  \State $w'\gets w'\cdot a_i$
  \Else
  \State $w'\gets w'\cdot \bot$
  \EndIf
  \EndFor
 \end{algorithmic}}}
 \end{algorithm*}

\section{A Sketch of One-Clock Determinization}
\label{appendix:one_clock_determinization}

We shall now sketch the one-clock determinization.
We can split the one-clock determinization into two steps: 1)
overapproximation of a timed automaton by a \emph{real-time automaton (RTA)}; and 2)
determinization of the resulting real-time automaton. 

\begin{algorithm*}[tb]
 \caption{An over-approximation of a timed automaton by a real-time automaton.}
 \label{alg:approx_ta_by_rta}
 \scalebox{0.9}{
 \parbox{\linewidth}{
 \begin{algorithmic}[1]
  \Require a timed automaton $\mathcal{A}=(\Sigma,S,s_0,S_F,C,E)$
  \Ensure a real-time automaton $\mathcal{A}^{\mathrm{rt}}=(\Sigma,S^{\mathrm{rt}},s^{\mathrm{rt}}_0,S^{\mathrm{rt}}_F,\{y\},E^{\mathrm{rt}})$
  \State $s^{\mathrm{rt}}_0 \gets(s_0,\{\mathbf{0}+t\mid t \in \Rnn\}); \quad S^{\mathrm{rt}}\gets\{s^{\mathrm{rt}}_0\};\quad \currS \gets \{s^{\mathrm{rt}}_0\}$
  \While{$\currS \ne \emptyset$}
  \State $\nextS \gets \emptyset$
  \For{$(s,Z) \in \currS$}   \Comment{$Z$ is a zone in $(\R_{\geq 0})^{C\amalg\{y\}}$}
  \For{$(s,a,\delta,\lambda,s') \in E$}
  \State $Z'\gets \{\nu \mid \nu \in Z, \nu\models\delta\}$\label{alg_line:approx_start_dbm_guard}
  \If{$Z' \ne \emptyset$} \label{alg_line:emptycheck}
  \State $I\gets \{\nu|_{\{y\}} \mid \nu \in Z'\}$ \label{alg_line:approx_end_dbm_interval}
  \State $Z''\gets\{\nu[x \mapsto 0]_{x\in (\lambda\amalg\{y\})} + t \mid
  \nu \in Z', t \in \Rnn\}$\label{alg_line:approx_end_dbm_guard} 
  \label{alg_line:reset_and_time_elapse}
%  \Comment{reset}
  % \Statex  \Comment{$I\subseteq \R$ is thought of as a subset
  % $I\subseteq (\R_{\geq 0})^{\{y\}}$. It is a zone (i.e.\ an interval)} 
  % \State ?? $Z_t \gets \{\nu \sqcup [t\mapsto 0] \mid \nu \in Z\}$\label{alg_line:approx_start_dbm_interval}
  % \State ?? $I\gets\{\nu + t \mid \nu \in Z_t, t\in \Rnn,\nu+t\models\delta\}|_{\{t\}}$\label{alg_line:approx_end_dbm_interval}
  \State $E^{\mathrm{rt}}\gets E^{\mathrm{rt}}\cup \{((s,Z),a,y \in I,\{y\},(s',Z''))\}$
   \label{alg_line:adding_transition}
  \If{$(s',Z'') \not\in S^{\mathrm{rt}}$}
  \State $\nextS \gets \nextS \cup \{(s',Z'')\}$
  \EndIf
  \EndIf
  \EndFor
  \EndFor
  \State $\currS \gets \nextS;\quad S^{\mathrm{rt}}\gets S^{\mathrm{rt}} \cup\nextS$
  \EndWhile
  \State $S^{\mathrm{rt}}_F \gets \{(s,Z) \in S^{\mathrm{rt}}\mid s \in S_F\}$
 \end{algorithmic}}}
\end{algorithm*}

\myparagraph{Overapproximation of TA by RTA}
\emph{Real-time automata (RTA)}~\cite{DBLP:conf/stacs/Dima00} form a subclass of TA. Intuitively, in
an RTA the enabledness of a transition is determined solely by the dwell
time at the source state of the transition. This amounts, in technical terms, to the condition that an RTA has only one clock
variable, and the clock is reset after each transition.

Algorithm~\ref{alg:approx_ta_by_rta}, that gives an RTA $\mathcal{A}^{\mathrm{rt}}$ such that $L(\mathcal{A})\subseteq L(\mathcal{A}^{\mathrm{rt}})$, is essentially
                     the construction described
                     in~\cite[\S{}5.3.4]{DBLP:journals/fmsd/KrichenT09}. The
                     difference is as follows. In
%                     Lines~\ref{alg_line:approx_start_dbm_interval} and
                     Line~\ref{alg_line:approx_end_dbm_interval} we use zones
                     and their representation by \emph{difference bound
                     matrices (DBM)} for constructing new guards, while
                     in~\cite[\S{}5.3.4]{DBLP:journals/fmsd/KrichenT09}
                     a region-like construction is suggested  (the latter should be more
                     expensive) . We  use DBM also in
                     Lines~\ref{alg_line:approx_start_dbm_guard}, \ref{alg_line:emptycheck} and~\ref{alg_line:approx_end_dbm_guard};
                     this is the same as
                     in~\cite[\S{}5.3.4]{DBLP:journals/fmsd/KrichenT09}.

In Line~\ref{alg_line:reset_and_time_elapse}, we take a 
 clock valuation $\nu\in Z'$ and first reset all the clocks in $\lambda$ and additionally $y$, and then take $t$-shift for  arbitrary $t$.
We note that each endpoint of the interval $I$ is an integer or $+\infty$, and
hence the condition
``$y \in I$'' (e.g.\ in Line~\ref{alg_line:adding_transition}) can be represented by a constraint in $\Phi(\{y\})$. 

Termination of Algorithm~\ref{alg:approx_ta_by_rta} can be established much like the termination of usual zone automata constructions. 

\myparagraph{Determinization of RTA}
Determinization of RTA is possible unlike general TA. This is because each transition resets the unique clock $y$, and therefore the enabledness of a transition is determined locally.

For an RTA $\mathcal{A}=(\Sigma, S, s_0, S_F, \{y\},E)$, its
determinization
$\mathcal{A}^{\mathrm{d}}=(\Sigma, \mathcal{P}(S), \{s_0\}, S_F^{\mathrm{d}}, \{y\},E^{\mathrm{d}})$ can be defined as follows. 
 The accepting states are defined by $S_F^{\mathrm{d}} =\{\mathcal{S}\in \mathcal{P}(S)\mid \mathcal{S}\cap S_{F}
\neq\emptyset
\}$, as usual. On  transitions,
for any $\mathcal{S}\in \mathcal{P}(S)$ and $a \in \Sigma$,
let $\mathbb{I}_{\mathcal{S},a} = \{I_1,I_2,\dots,I_n\}$ be the coarsest
partition of $\Rnn$ such that for any $i\in \{1,2,\dotsc,n\}$ and
$(s,a,y\in I,\{y\},s') \in E$,
we have either $I_i \subseteq I$ or $I_i \cap I = \emptyset$.
%$\{(s,a,y\in I,\{y\},s') \in E\mid s \in\mathcal{S},\tau \in I\} =
% \{(s,a,y\in I',\{y\},s'') \in E\mid s \in\mathcal{S},\tau'\in I'\}$.
 This is used in the definition of the transition set $  E^{\mathrm{d}}$:
\begin{multline*}
  E^{\mathrm{d}} = \bigl\{\;(\mathcal{S},a,y \in I',\mathcal{S}') \;\big|\; I' \in
 \mathbb{I}_{\mathcal{S},a},\\ \mathcal{S}'=\{s'\in S\mid
 \exists s \in \mathcal{S}.\,\exists(s,a,y\in I'',\{y\},s')\in E.\enspace I'\subseteq I''\}\;\bigr\}
 \enspace .
\end{multline*}
% We note that one can efficiently construct
 % $\mathbb{I}_{\mathcal{S},a}$ by a greedy algorithm.

\section{Detailed Results of the Experiments}
 \label{appendix:detailed_results}

 See Table~\ref{table:filter_torque}--\ref{table:filtered_monaa_accel}.

 \begin{table}[tb]
 \centering
 \caption{Filter (\textsc{Torque})}
 \label{table:filter_torque}
  \scalebox{0.75}{
   \pgfplotstabletypeset[
   sci,
   sci zerofill,
   multicolumn names, % allows to have multicolumn names
   columns={[index]0,[index]1,[index]2,[index]3,[index]6},
   display columns/0/.style={
   column name=\begin{tabular}{c}$|w|$\end{tabular}, % name of first column
   fixed,fixed zerofill,precision=0,
   },
   display columns/1/.style={
   column name=\begin{tabular}{c}$N$\end{tabular},
   fixed,fixed zerofill,precision=0},
   display columns/2/.style={
   column name=\begin{tabular}{c}Execution Time [s]\end{tabular},
   precision=2},
   display columns/3/.style={
   fixed,precision=2,
   column name=\begin{tabular}{c}Memory Usage [kbyte]\end{tabular}},
   display columns/4/.style={
   fixed,fixed zerofill,precision=0,
   column name=\begin{tabular}{c}$|w'|$\end{tabular}},
   fixed,fixed zerofill,precision=1,
   every head row/.style={
   before row={\toprule}, % have a rule at top
   after row={\midrule} % rule under units
   },
   string replace={0}{},
   empty cells with={$< 0.01$},
   every last row/.style={after row=\bottomrule}, % rule at bottom
   ]{./table/filt-all-torque.tsv}}
 \end{table}

 \begin{table}[tb]
 \centering
 \caption{Filter (\textsc{Gear})}
 \label{table:filter_gear}
  \scalebox{0.75}{
   \pgfplotstabletypeset[
   sci,
   sci zerofill,
   multicolumn names, % allows to have multicolumn names
   columns={[index]0,[index]1,[index]2,[index]3,[index]6},
   display columns/0/.style={
   column name=\begin{tabular}{c}$|w|$\end{tabular}, % name of first column
   fixed,fixed zerofill,precision=0,
   },
   display columns/1/.style={
   fixed,fixed zerofill,precision=0,
   column name=\begin{tabular}{c}$N$\end{tabular}},
   display columns/2/.style={
   column name=\begin{tabular}{c}Execution Time [s]\end{tabular},
   precision=2},
   display columns/3/.style={
   fixed,precision=2,
   column name=\begin{tabular}{c}Memory Usage [kbyte]\end{tabular}},
   display columns/4/.style={
   fixed,fixed zerofill,precision=0,
   column name=\begin{tabular}{c}$|w'|$\end{tabular}},
   fixed,fixed zerofill,precision=2,
   every head row/.style={
   before row={\toprule}, % have a rule at top
   after row={\midrule} % rule under units
   },
   string replace={0}{},
   empty cells with={$< 0.01$},
   every last row/.style={after row=\bottomrule}, % rule at bottom
   ]{./table/filt-all-gear.tsv}}
 \end{table}

 \begin{table}[tb]
 \centering
 \caption{Filter (\textsc{Accel})}
 \label{table:filter_accel}
  \scalebox{0.75}{
   \pgfplotstabletypeset[
   sci,
   sci zerofill,
   multicolumn names, % allows to have multicolumn names
   columns={[index]0,[index]1,[index]2,[index]3,[index]6},
   display columns/0/.style={
   column name=\begin{tabular}{c}$|w|$\end{tabular}, % name of first column
   fixed,fixed zerofill,precision=0,
   },
   display columns/1/.style={
   fixed,fixed zerofill,precision=0,
   column name=\begin{tabular}{c}$N$\end{tabular}},
   display columns/2/.style={
   column name=\begin{tabular}{c}Execution Time [s]\end{tabular},
   precision=2},
   display columns/3/.style={
   fixed,precision=2,
   column name=\begin{tabular}{c}Memory Usage [kbyte]\end{tabular}},
   display columns/4/.style={
   fixed,fixed zerofill,precision=0,
   column name=\begin{tabular}{c}$|w'|$\end{tabular}},
   fixed,fixed zerofill,precision=2,
   every head row/.style={
   before row={\toprule}, % have a rule at top
   after row={\midrule} % rule under units
   },
   string replace={0}{},
   empty cells with={$< 0.01$},
   every last row/.style={after row=\bottomrule}, % rule at bottom
   ]{./table/filt-all-accel.tsv}}
 \end{table}

 \begin{table}[tb]
 \centering
 \caption{Filtered \monaa (\textsc{Torque})}
 \label{table:filtered_monaa_torque}
  \scalebox{0.75}{
   \pgfplotstabletypeset[
   sci,
   sci zerofill,
   multicolumn names, % allows to have multicolumn names
   columns={[index]0,[index]1,[index]4,[index]5},
   display columns/0/.style={
   column name=\begin{tabular}{c}$|w|$\end{tabular}, % name of first column
   fixed,fixed zerofill,precision=0,
   },
   display columns/1/.style={
   column name=\begin{tabular}{c}$N$\end{tabular},
   fixed,fixed zerofill,precision=0},
   display columns/2/.style={
   column name=\begin{tabular}{c}Execution Time [s]\end{tabular},
   precision=2},
   display columns/3/.style={
   fixed,precision=2,
   column name=\begin{tabular}{c}Memory Usage [kbyte]\end{tabular}},
   fixed,fixed zerofill,precision=2,
   every head row/.style={
   before row={\toprule}, % have a rule at top
   after row={\midrule} % rule under units
   },
   string replace={0}{},
   empty cells with={$< 0.01$},
   every last row/.style={after row=\bottomrule}, % rule at bottom
   ]{./table/filt-all-torque.tsv}}
 \end{table}

 \begin{table}[tb]
 \centering
 \caption{Filtered \monaa (\textsc{Gear})}
 \label{table:filtered_monaa_gear}
  \scalebox{0.75}{
   \pgfplotstabletypeset[
   sci,
   sci zerofill,
   multicolumn names, % allows to have multicolumn names
   columns={[index]0,[index]1,[index]4,[index]5},
   display columns/0/.style={
   column name=\begin{tabular}{c}$|w|$\end{tabular}, % name of first column
   fixed,fixed zerofill,precision=0,
   },
   display columns/1/.style={
   column name=\begin{tabular}{c}$N$\end{tabular},
   fixed,fixed zerofill,precision=0},
   display columns/2/.style={
   column name=\begin{tabular}{c}Execution Time [s]\end{tabular},
   precision=2},
   display columns/3/.style={
   fixed,precision=2,
   column name=\begin{tabular}{c}Memory Usage [kbyte]\end{tabular}},
   fixed,fixed zerofill,precision=2,
   every head row/.style={
   before row={\toprule}, % have a rule at top
   after row={\midrule} % rule under units
   },
   string replace={0}{},
   empty cells with={$< 0.01$},
   every last row/.style={after row=\bottomrule}, % rule at bottom
   ]{./table/filt-all-gear.tsv}}
 \end{table}

 \begin{table}[tb]
 \centering
 \caption{Filtered \monaa (\textsc{Accel})}
 \label{table:filtered_monaa_accel}
  \scalebox{0.75}{
   \pgfplotstabletypeset[
   sci,
   sci zerofill,
   multicolumn names, % allows to have multicolumn names
   columns={[index]0,[index]1,[index]4,[index]5},
   display columns/0/.style={
   column name=\begin{tabular}{c}$|w|$\end{tabular}, % name of first column
   fixed,fixed zerofill,precision=0,
   },
   display columns/1/.style={
   fixed,fixed zerofill,precision=0,
   column name=\begin{tabular}{c}$N$\end{tabular}},
   display columns/2/.style={
   column name=\begin{tabular}{c}Execution Time [s]\end{tabular},
   precision=2},
   display columns/3/.style={
   fixed,precision=2,
   column name=\begin{tabular}{c}Memory Usage [kbyte]\end{tabular}},
   fixed,fixed zerofill,precision=2,
   every head row/.style={
   before row={\toprule}, % have a rule at top
   after row={\midrule} % rule under units
   },
   string replace={0}{},
   empty cells with={$< 0.01$},
   every last row/.style={after row=\bottomrule}, % rule at bottom
   ]{./table/filt-all-accel.tsv}}
 \end{table}

\end{document}

